%%%% kr-instructions.tex -- version 1.3 (11-Jan-2021)

% \typeout{KR2024 Instructions for Authors}

% These are the instructions for authors for KR-24.
\documentclass{article}
\pdfpagewidth=8.5in
\pdfpageheight=11in
\usepackage{kr}
% Use the postscript times font!
\usepackage{times}
\usepackage{enumitem}
\usepackage{soul}
\usepackage{url}
\usepackage[hidelinks]{hyperref}
\usepackage[utf8]{inputenc}
\usepackage[small]{caption}
\usepackage[all]{xy}
\usepackage{graphicx}
\usepackage{mathtools}
\usepackage{amsmath}
\usepackage{amssymb}
\usepackage{amsthm}
\usepackage{booktabs}
\usepackage{algorithm}
\usepackage{algorithmic}
\usepackage{amsfonts}
\urlstyle{same}
\usepackage{xcolor}
\usepackage{multicol}
\usepackage{multirow}
\usepackage{stmaryrd}
\usepackage{thm-restate} 
\usepackage{tikz}
%Miscellaneous

\newcommand{\Var}{\mathsf{Var}}

\newcommand{\Term}{\mathsf{Term}}

\newcommand{\At}{\mathsf{At}}

%Classical DLs

\newcommand{\CN}{\mathsf{CN}}
\newcommand{\IN}{\mathsf{IN}}
\newcommand{\RN}{\mathsf{RN}}
\newcommand{\RNinv}{\mathsf{RN}^{\pm}}

\newcommand{\ALCHI}{\mathcal{ALCHI}}
\newcommand{\Rep}{\mathtt{Rep}}
\newcommand{\close}{\mathbb{C}}
\newcommand{\closeRep}{\mathtt{CRep}}

%BD and paraconsistent DLs

\newcommand{\four}{\mathbf{4}}
\newcommand{\true}{\mathbf{T}}
\newcommand{\both}{\mathbf{B}}
\newcommand{\neither}{\mathbf{N}}
\newcommand{\false}{\mathbf{F}}

\newcommand{\cl}{\mathsf{cl}}
\newcommand{\ctr}{\mathsf{ctr}}

\newcommand{\nnf}{\mathsf{NNF}}
\newcommand{\ELHInegtrianglefour}{{\mathcal{ELHI}_\neg}^\four_\triangle}
\newcommand{\ALCHItrianglefour}{\mathcal{ALCHI}^\four_\triangle}

%Queries
\newcommand{\ans}{\mathsf{ans}}

\newcommand{\brave}{\mathtt{brave}}
\newcommand{\IAR}{\mathtt{IAR}}
\newcommand{\AR}{\mathtt{AR}}

\newcommand{\CAR}{\mathtt{CAR}}

%Alternative quantifiers

%small sf letters
\newcommand{\mmsf}{\mathsf{m}}
\newcommand{\nmsf}{\mathsf{n}}
\newcommand{\pmsf}{\mathsf{p}}
\newcommand{\xmsf}{\mathsf{x}}
%small boldface letters
\newcommand{\qmbf}{\mathbf{q}}

% Calligraphic letters
\newcommand{\Amc}{\mathcal{A}}

\newcommand{\Cmc}{\mathcal{C}}
\newcommand{\Dmc}{\mathcal{D}}

\newcommand{\Imc}{\mathcal{I}}
\newcommand{\Jmc}{\mathcal{J}}
\newcommand{\Kmc}{\mathcal{K}}
\newcommand{\Lmc}{{\mathcal{L}}}

\newcommand{\Qmc}{{\mathcal{Q}}}
\newcommand{\Rmc}{{\mathcal{R}}}

\newcommand{\Tmc}{{\mathcal{T}}}

%Boldface letters

\newcommand{\Xmbf}{{\mathbf{X}}}

%Letters for concepts
\newcommand{\Amsf}{{\mathsf{A}}}
\newcommand{\Bmsf}{{\mathsf{B}}}
\newcommand{\Cmsf}{{\mathsf{C}}}
\newcommand{\Dmsf}{{\mathsf{D}}}

\newcommand{\Rmsf}{{\mathsf{R}}}
\newcommand{\Smsf}{{\mathsf{S}}}

\newcommand{\Umsf}{{\mathsf{U}}}

\newcommand{\Xmsf}{{\mathsf{X}}}
\newcommand{\Ymsf}{{\mathsf{Y}}}

%Comments

%Complexity classes
\newcommand{\aczero}{\mathbf{AC}^0}
\newcommand{\ptime}{\mathbf{P}}
\newcommand{\np}{\mathbf{NP}}
\newcommand{\conp}{\mathbf{coNP}}
\newcommand{\exptime}{\mathbf{ExpTime}}
\newcommand{\twoexptime}{\mathbf{2ExpTime}}

\newcommand{\bhtwo}{\mathbf{BH}_2}
% the following package is optional:
%\usepackage{latexsym}

% See https://www.overleaf.com/learn/latex/theorems_and_proofs
% for a~nice explanation of how to define new theorems, but keep
% in mind that the amsthm package is already included in this
% template and that you must *not* alter the styling.
\newtheorem{theorem}{Theorem}
\newtheorem{proposition}{Proposition}
\newtheorem{lemma}{Lemma}
\newtheorem{example}{Example}
\newtheorem{definition}{Definition}
\theoremstyle{remark}

\newtheorem{remark}{Remark}
% Following comment is from ijcai97-submit.tex:
% The preparation of these files was supported by Schlumberger Palo Alto
% Research, AT\&T Bell Laboratories, and Morgan Kaufmann Publishers.
% Shirley Jowell, of Morgan Kaufmann Publishers, and Peter F.
% Patel-Schneider, of AT\&T Bell Laboratories collaborated on their
% preparation.

% These instructions can be modified and used in other conferences as long
% as credit to the authors and supporting agencies is retained, this notice
% is not changed, and further modification or reuse is not restricted.
% Neither Shirley Jowell nor Peter F. Patel-Schneider can be listed as
% contacts for providing assistance without their prior permission.

% To use for other conferences, change references to files and the
% conference appropriate and use other authors, contacts, publishers, and
% organizations.
% Also change the deadline and address for returning papers and the length and
% page charge instructions.
% Put where the files are available in the appropriate places.
%PDF Info Is REQUIRED.
\pdfinfo{
/TemplateVersion (KR.2022.0, KR.2023.0, KR.2024.0)
}
\title{Queries With Exact Truth Values in Paraconsistent Description Logics}
%=====
% Single author syntax
\iffalse % (remove the multiple-author syntax below and \iffalse ... \fi here)
\author{%
 Author name
 \affiliations
 Affiliation
 \emails
 email@example.com % email
}
\fi
% Multiple author syntax
\author{%
Meghyn Bienvenu$^{1,2}$\and
Camille Bourgaux$^3$\and
Daniil Kozhemiachenko$^1$ \\
\affiliations
$^1$Universit\'{e} de Bordeaux, CNRS, Bordeaux INP, LaBRI, UMR 5800\\
$^2$Japanese-French Laboratory for Informatics, CNRS, NII, IRL 2537, Tokyo, Japan\\ 
$^3$DI ENS, ENS, CNRS, PSL University \& Inria, Paris, France\\
\emails
\{meghyn.bienvenu,daniil.kozhemiachenko\}@u-bordeaux.fr,
camille.bourgaux@ens.fr
}
\begin{document}
\allowdisplaybreaks
\maketitle
\begin{abstract}
We present a novel approach to querying classical inconsistent description logic (DL) knowledge bases by adopting a~paraconsistent semantics with the four ‘Belnapian’ values: \emph{exactly true} ($\true$), \emph{exactly false} ($\false$), \emph{both} ($\both$), and \emph{neither} ($\neither$). In contrast to prior studies on paraconsistent DLs, we allow truth value operators in the query language, which can be used to differentiate between answers having contradictory evidence and those having only positive evidence. We present a reduction to classical DL query answering that allows us to pinpoint the precise combined and data complexity of answering queries with values in paraconsistent $\ALCHI$ and its sublogics. Notably, we show that tractable data complexity is retained for Horn DLs. We present a~comparison with repair-based inconsistency-tolerant semantics, showing that the two approaches are incomparable.
\end{abstract}
\section{Introduction\label{sec:introduction}}
Ontology-mediated query answering (OMQA) has been extensively studied over the past fifteen years 
as a promising paradigm for querying incomplete and 
heterogeneous data \cite{DBLP:journals/jods/PoggiLCGLR08,BienvenuOrtiz2O15,DBLP:conf/ijcai/XiaoCKLPRZ18}.
In a nutshell, OMQA enriches the data with an ontology which provides both a convenient vocabulary for query formulation as well as domain knowledge that is exploited when answering queries. Ontologies are typically formulated in %some
logic-based languages (description logics, DLs, being a popular choice) and equipped with a first-order logic semantics, whereby a Boolean (`yes or no') query is deemed to hold whenever it is entailed from the logical theory consisting of the data and % the 
ontology. An important practical concern with (traditional) OMQA is its lack of robustness 
in the presence of contradictory information, %since 
as every Boolean 
query is entailed from an inconsistent knowledge base.

A prominent approach to tackling data inconsistencies in OMQA 
is to adopt inconsistency-tolerant semantics based upon repairs, 
defined as inclusion-maximal subsets of the data that are consistent with the ontology. 
Arguably the most natural 
repair-based semantics is the $\AR$ semantics \cite{LemboLenzeriniRosatiRuzziSavo2010} that considers those answers that hold w.r.t.\ each repair, 
inspired by analogous semantics for inconsistent databases \cite{ArenasBertossiChomicki1999}.
Other commonly considered repair semantics include the more permissive $\brave$ semantics \cite{BienvenuRosati2013}, which 
only requires an answer to hold in at least one repair, and the more cautious $\IAR$ semantics~\cite{LemboLenzeriniRosatiRuzziSavo2010}, 
which queries 
 the intersection of all repairs. 
Several other repair-based semantics, 
 incorporating closure operations or various kinds of preferences, 
have been explored, see 
\cite{BienvenuBourgaux2016,Bienvenu2020} for an overview of repair-based semantics for DL knowledge bases. 

Paraconsistent logics represent another natural approach to obtaining 
meaningful answers from contradictory knowledge bases. Whereas 
repair-based semantics are defined in terms of the 
consistent subsets of the inconsistent theory, 
paraconsistent logic semantics, often based upon extended sets of truth values, 
makes it possible for classically inconsistent theories to possess models. 
A common approach is to augment the classical set of truth values $\{\true,\false\}$ with two additional elements --- $\both$ (both true and false) and $\neither$ (neither true nor false).\footnote{Some work considers only $\{\true,\both,\false\}$~\cite{ZhangLinWang2010} or adds other truth values~\cite{KaminskiKnorrLeite2015}.} The four values can be interpreted as four kinds of information one can have on a~given assertion $\Amsf(a)$: only be told that $\Amsf(a)$ is true, only be told that $\Amsf(a)$ is false, be told that $\Amsf(a)$ is both true and false, and be told nothing about $\Amsf(a)$.\footnote{The interpretation is due to Dunn and Belnap~\cite{Dunn1976,Belnap1977computer,Belnap1977fourvalued}, whence the values $\true$, $\both$, $\neither$, and $\false$ are sometimes called ‘Belnapian’.} 
The truth and falsity conditions of Boolean connectives $\neg$, $\sqcap$, and $\sqcup$ are then defined as follows: 
\begin{itemize}[noitemsep,topsep=1pt]
\item $\neg\Amsf(a)$ is \emph{true} if $\Amsf(a)$ is false and vice versa;
\item $[\Amsf\sqcap\Bmsf](a)$ is \emph{true} if $\Amsf(a)$ and $\Bmsf(a)$ are true, and \emph{false} if $\Amsf(a)$ or $\Bmsf(a)$ is false;
\item $[\Amsf\sqcup\Bmsf](a)$ is \emph{true} if $\Amsf(a)$ or $\Bmsf(a)$ is true, and \emph{false} if $\Amsf(a)$ and $\Bmsf(a)$ are false.
\end{itemize}

Paraconsistent DLs were first introduced by~\citeauthor{OdintsovWansing2003} \shortcite{OdintsovWansing2003} and have since then been extensively studied. In particular, four-valued counterparts of expressive description logics such as $\mathcal{SHOIN}(\Dmc)$ and $\mathcal{SROIQ}$ were considered~\cite{MaLinLin2006,MaHitzler2009,Maier2010,MaierMaHitzler2013}. 
Moreover, DLs with non-standard propositional connectives (i.e., whose semantics differ from~\cite{Dunn1976,Belnap1977computer,Belnap1977fourvalued}) were studied by~\citeauthor{ZhangXiaoLinvandenBussche2014} \shortcite{ZhangXiaoLinvandenBussche2014}. 
Most work on paraconsistent DLs has focused on standard reasoning tasks, namely, axiom entailment and consistency checking. Paraconsistent OMQA 
has 
received comparatively less attention and to the best of our knowledge
has only been considered 
by~\citeauthor{NguyenSzalas2012} \shortcite{NguyenSzalas2012} and~\citeauthor{ZhouHuangQiMaHuangQu2012} \shortcite{ZhouHuangQiMaHuangQu2012}. 
Moreover, the query language presented in~\cite{NguyenSzalas2012,ZhouHuangQiMaHuangQu2012} has an unfortunate drawback: given a~knowledge base $\Kmc$ and a~concept $\Amsf$, it is impossible to write a~query $\qmbf$ whose set of answers only contains individuals $a$ for which $\Amsf(a)$ is \emph{exactly true} (i.e., has value $\true$). 
Indeed, we observe (Proposition \ref{prop:hornsameasdropneg}) that for Horn DLs, 
existing approaches to paraconsistent query answering 
correspond to simply ignoring negative axioms, 
and thus fail to benefit from the four-valued semantics. 

Our first main contribution is thus to introduce a new query language for paraconsistent DLs that extends the query language of~\cite{ZhouHuangQiMaHuangQu2012} with value operators, enabling us to differentiate between \emph{at least true} and \emph{exactly true} answers to queries. We explore the computational properties of answering such queries and show, using a translation to classical OMQA, that both the data and combined complexity of paraconsistent query answering in Horn description logic ontologies is the same as that of certain answers under the classical OMQA semantics. For expressive DLs, paraconsistent query answering has the same combined complexity as classical OMQA but in some cases has a slightly higher data complexity. Overall our results show that our paraconsistent query language is more computationally well-behaved than repair-based semantics. 

This brings us to our second contribution: a comparison of paraconsistent and repair-based OMQA semantics. Indeed, 
while the two approaches share similar motivations, to the best of our knowledge, the relationship between them has not been explored. 
We present results showing 
that the two approaches are incomparable. 
More precisely, we show that if we consider queries with the $\true$ (exactly true) operator (which being more restrictive are better suited to approximating repair-based semantics), 
then we neither over-approximate 
$\IAR$, nor under-approximate 
$\brave$ and $\CAR$ (a variant of $\AR$ based on closed repairs). 
This incomparability 
result is generally phrased so as to apply to other paraconsistent DL semantics verifying some basic properties. 

Our paper is structured as 
follows. 
In Section~\ref{sec:ALCfourtriangle}, we define the syntax and semantics of a~four-valued version of $\ALCHI$. Sections~\ref{sec:4queries} and~\ref{sec:complexity} are dedicated to syntax and semantics of the queries incorporating Belnapian values and an analysis of their computational properties. In Section~\ref{sec:4valuedvsrepairs}, we formally compare paraconsistent and repair-based semantics and present a general incomparability result. Finally, we conclude in Section~\ref{sec:conclusion} with a short discussion of future work. Omitted proofs are given in the appendix.
%the extended version~\cite{longversion}. 

\section{Four-Valued $\ALCHI$ and Its Fragments\label{sec:ALCfourtriangle}}
In this section, we provide the syntax and semantics of four-valued DLs, equipped with a new constructor $\triangle$ that was previously added to Belnap-Dunn propositional logic and its first-order expansion by~\citeauthor{SanoOmori2014} \shortcite{SanoOmori2014}, and which can be intuitively interpreted as follows: $\triangle\Amsf(a)$ means that $\Amsf(a)$ is true and $\neg\triangle\Amsf(a)$ that $\Amsf(a)$ is \emph{not true} (as opposed to $\neg\Amsf(a)$ which means that $\Amsf(a)$ is \emph{false}).
\subsubsection*{Syntax} Let $\CN$, $\RN$ and $\IN$ be three mutually disjoint countable sets of concept, role, and individual names, respectively, and let $\RNinv=\RN\cup\{\Rmsf^-\mid\Rmsf\in\RN\}$ be the set of roles and inverse roles. Given a DL language $\Lmc$, an \emph{$\Lmc^\four_\triangle$ knowledge base} (KB) $\Kmc=\langle\Tmc,\Amc\rangle$ consists of a finite set $\Amc$ of concept and role \emph{assertions} of the form $\Amsf(a)$ and $\Rmsf(a,b)$ respectively, with $a,b\in\IN$, $\Amsf\in\CN$, and $\Rmsf\in\RN$, called the \emph{ABox}, and a finite set $\Tmc$ of axioms whose form depends on the DL $\Lmc$, called the \emph{TBox}. An $\ALCHItrianglefour$ TBox contains \emph{role inclusions} of the form $\Smsf\sqsubseteq\Smsf'$ where $\Smsf,\Smsf'\in\RNinv$ and \emph{concept inclusions} of the form $\Cmsf\sqsubseteq\Dmsf$ where $\Cmsf$ and $\Dmsf$ are $\ALCHItrianglefour$ concepts built using the following grammar: 
$$\Cmsf\coloneqq\top\mid\bot\mid\Amsf\mid\neg\Cmsf\mid\triangle\Cmsf\mid\Cmsf\sqcap\Cmsf\mid\Cmsf\sqcup\Cmsf\mid\exists\Smsf.\Cmsf\mid\forall\Smsf.\Cmsf$$
with $\Amsf\in\CN$ and $\Smsf\in\RNinv$. We also write $\Cmsf\equiv\Dmsf$ as a~shorthand for $\{\Cmsf\sqsubseteq\Dmsf,\Dmsf\sqsubseteq\Cmsf\}$. We sometimes use $\bullet$ and $\circ$ to denote binary connectives from $\{\sqcap,\sqcup\}$ and $\Qmc$ and $\overline{\Qmc}$ for quantifiers from $\{\exists,\forall\}$, assuming that $\bullet\neq\circ$ and $\Qmc\neq\overline{\Qmc}$.

An $\Lmc$ KB is defined as an $\Lmc^\four_\triangle$ KB except that it cannot contain $\triangle$. Besides $\ALCHI$, we will consider the following DL languages which are sub-languages of $\ALCHI$: $\mathcal{ALCH}$ has no inverse roles, $\mathcal{ALCI}$ has no role inclusions, $\mathcal{ALC}$ has neither, $\mathcal{ELHI}_\bot$ does not allow $\sqcup$, $\forall$ and $\neg$, and $\mathcal{ELH}_\bot$, $\mathcal{ELI}_\bot$ and $\mathcal{EL}_\bot$ are obtained from $\mathcal{ELHI}_\bot$ by disallowing inverse roles, role inclusions, and both respectively. Finally DL-Lite$_\Rmc$ TBoxes contain role inclusions of the form $\Smsf\sqsubseteq \Smsf'$ or $\Smsf\sqsubseteq\neg\Smsf'$ and concept inclusions of the form $\Bmsf_1\sqsubseteq \Bmsf_2$ or $\Bmsf_1\sqsubseteq \neg \Bmsf_2$ with $\Bmsf\coloneqq \Amsf\mid\exists \Smsf.\top$. $\mathcal{ELHI}_\bot$ and its sub-logics are \emph{Horn DLs} and we call \emph{propositional TBoxes} the TBoxes that do not use the $\exists$ and $\forall$ constructors.
\subsubsection*{Semantics} The semantics of $\Lmc^\four_\triangle$ is defined through interpretations, which differ from classical DL interpretations in that they define both \emph{positive} and \emph{negative extensions} of concepts. A~\emph{$\four$-interpretation} is a~tuple $\Imc=\langle\Delta^\Imc,\cdot^{\Imc_\pmsf},\cdot^{\Imc_\nmsf}\rangle$ with a domain $\Delta^\Imc\neq\emptyset$, and two interpretation functions $\cdot^{\Imc_\pmsf}$ and $\cdot^{\Imc_\nmsf}$ that map each concept name $\Amsf\in\CN$ to $\Amsf^{\Imc_\pmsf}\subseteq\Delta^\Imc$ and $\Amsf^{\Imc_\nmsf}\subseteq\Delta^\Imc$ respectively, each role name $\Rmsf\in\RN$ to $\Rmsf^{\Imc_\pmsf}=\Rmsf^{\Imc_\nmsf}\subseteq \Delta^\Imc\times\Delta^\Imc$ and each individual name $a\in\IN$ to $a^{\Imc_\pmsf}=a^{\Imc_\nmsf}\in\Delta^\Imc$. 
 For role and individual names interpretations, we can thus omit $\pmsf$ and $\nmsf$ and simply write $\Rmsf^\Imc$ and $a^\Imc$.
 The interpretation functions $\cdot^{\Imc_\pmsf}$ and $\cdot^{\Imc_\nmsf}$ are extended to complex $\ALCHItrianglefour$ concepts and roles as follows.
 \begin{align}\label{equ:ALCHI4propositional}
(\Rmsf^-)^{\Imc}&=\{(y,x)\!\mid\!(x,y)\!\in\!\Rmsf^{\Imc}\}\\
{\top^{\Imc_\pmsf}}&=\Delta^\Imc&\top^{\Imc_\nmsf}&=\emptyset\nonumber\\
(\neg\Cmsf)^{\Imc_\pmsf}&=\Cmsf^{\Imc_\nmsf}&(\neg\Cmsf)^{\Imc_\nmsf}&=\Cmsf^{\Imc_\pmsf}\nonumber\\
(\triangle\Cmsf)^{\Imc_\pmsf}&=\Cmsf^{\Imc_\pmsf}&(\triangle\Cmsf)^{\Imc_\nmsf}&=\Delta^{\Imc}\!\setminus\!\Cmsf^{\Imc_\pmsf}\nonumber\\
(\Cmsf\sqcap\Dmsf)^{\Imc_\pmsf}&=\Cmsf^{\Imc_\pmsf}\!\cap\!\Dmsf^{\Imc_\pmsf}&(\Cmsf\sqcap\Dmsf)^{\Imc_\nmsf}&=\Cmsf^{\Imc_\nmsf}\!\cup\!\Dmsf^{\Imc_\nmsf}\nonumber
\end{align}
\begin{align*}
(\forall\Smsf.\Cmsf)^{\Imc_\pmsf}&=\{x\mid\forall y:(x,y)\in\Smsf^\mathcal{I}\Rightarrow y\in\Cmsf^{\Imc_\pmsf}\}\\
(\forall\Smsf.\Cmsf)^{\Imc_\nmsf}&=\{x\mid\exists y:(x,y)\in\Smsf^\mathcal{I}~\&~y\in\Cmsf^{\Imc_\nmsf}\}
\end{align*}
The semantics of the remaining connectives is given by:
\begin{align*}
\Cmsf\sqcup\Dmsf&\coloneqq\neg(\neg\Cmsf\sqcap\neg\Dmsf)&\exists\Smsf.\Cmsf&\coloneqq\neg\forall\Smsf.\neg\Cmsf
&\bot&\coloneqq\neg\top.
\end{align*}
Given a~$\four$-interpretation $\Imc=\langle\Delta^\Imc,
\cdot^{\Imc_\pmsf},\cdot^{\Imc_\nmsf}\rangle$, $a\in\IN$ and a~concept $\Cmsf$, we will say that
\begin{itemize}[noitemsep,topsep=1pt]
\item $\Cmsf(a)$ is \emph{exactly true in $\Imc$} if $a^\Imc\in\Cmsf^{\Imc_\pmsf}\setminus\Cmsf^{\Imc_\nmsf}$;
\item $\Cmsf(a)$ is \emph{both true and false in $\Imc$} if $a^\Imc\in\Cmsf^{\Imc_\pmsf}\cap\Cmsf^{\Imc_\nmsf}$;
\item $\Cmsf(a)$ is \emph{neither true nor false in $\Imc$} if $a^\Imc\notin\Cmsf^{\Imc_\pmsf}\cup\Cmsf^{\Imc_\nmsf}$;
\item $\Cmsf(a)$ is \emph{exactly false in $\Imc$} if $a^\Imc\in\Cmsf^{\Imc_\nmsf}\setminus\Cmsf^{\Imc_\pmsf}$.
\end{itemize}

A $\four$-interpretation $\Imc=\langle\Delta^\Imc,
\cdot^{\Imc_\pmsf},\cdot^{\Imc_\nmsf}\rangle$ \emph{satisfies} an assertion $\Amsf(a)$ (resp.\ $\Rmsf(a,b)$), 
if $a^\Imc\in \Amsf^{\Imc_\pmsf}$ (resp.\ $(a^\Imc, b^\Imc)\in \Rmsf^\Imc$). 
It satisfies a role inclusion $\Smsf\sqsubseteq\Smsf'$ if $\Smsf^\Imc\subseteq{\Smsf'}^\Imc$, 
and it satisfies a concept inclusion $\Cmsf\sqsubseteq\Dmsf$ if $\Cmsf^{\Imc_\pmsf}\subseteq\Dmsf^{\Imc_\pmsf}$. 
We write $\Imc\models_\four\phi$ when $\Imc$ satisfies an assertion or an axiom $\phi$. 
$\Imc$ is a \emph{$\four$-model} of a KB $\Kmc=\langle\Tmc,\Amc\rangle$, denoted $\Imc\models_\four\Kmc$, if $\Imc\models_\four\phi$ for every $\phi\in\Tmc\cup\Amc$. Finally, $\Kmc$ $\four$-entails an assertion or inclusion $\phi$, denoted $\Kmc\models_\four\phi$, if $\Imc\models_\four\phi$ for every $\four$-model $\Imc$ of $\Kmc$.

The semantics of the classical DL $\ALCHI$ is defined using interpretations with a single interpretation function $\Imc=\langle\Delta^\Imc,\cdot^\Imc\rangle$ where $\cdot^\Imc$ behaves as the positive interpretation function $\cdot^{\Imc_\pmsf}$ except that $(\neg\Cmsf)^\Imc=\Delta^\Imc\setminus \Cmsf^\Imc$ (i.e., the negation is defined classically instead of being paraconsistent). 
We use $\Imc\models\Kmc$ to denote that $\Imc$ is a (classical) model of $\Kmc$, and $\Kmc\models\phi$ to denote that $\Kmc$ (classically) entails $\phi$.

Note that four-valued paraconsistent DLs are sometimes defined 
with \emph{four-valued roles}, i.e., possibly $\Rmsf^{\Imc_\pmsf}\neq\Rmsf^{\Imc_\nmsf}$~\cite{MaierMaHitzler2013,ZhangXiaoLinvandenBussche2014}. We do not use four-valued roles in our presentation for two reasons. First, there are several ways to define them (cf.~\cite[\S5]{Drobyshevich2020}), and it would be cumbersome to consider multiple definitions throughout the paper. 
%and we do not want to commit to any of them. 
Second, if we were to adopt the approach in~\cite{MaierMaHitzler2013,ZhangXiaoLinvandenBussche2014}, $\Rmsf^{\Imc_\pmsf}$ is used to define both $(\forall\Rmsf.\Cmsf)^{\Imc_\pmsf}$ and $(\forall\Rmsf.\Cmsf)^{\Imc_\nmsf}$, 
%which actually makes 
making $\Rmsf^{\Imc_\nmsf}$ redundant in $\ALCHItrianglefour$.

\begin{example}\label{example:teachers1}
Assume that a~university created the following knowledge base $\Kmc_\Umsf=\langle\Tmc_\Umsf,\Amsf_\Umsf\rangle$.
\begin{align*}%\label{equ:teachers1ontology}
\Tmc_\Umsf&=\left\{\begin{matrix}
\mathsf{Prf}\equiv\mathsf{Full}\sqcup\mathsf{Asc},&\exists\mathsf{headof}.\mathsf{Chair}\!\sqsubseteq\!\mathsf{Full},\\
\mathsf{Prf}\sqsubseteq\neg\mathsf{Course},&\mathsf{Full}\sqsubseteq\neg\mathsf{Asc}
\end{matrix}\right\}\nonumber\\
\Amc_\Umsf&=\{\mathsf{headof}(\mathbf{ann},\mathbf{AI}),\mathsf{Chair}(\mathbf{AI}),\mathsf{Asc}(\mathbf{ann})\}
\end{align*}
The TBox expresses that there are two kinds of professors ($\mathsf{Prf}$), full and associate professors ($\mathsf{Full}$, $\mathsf{Asc}$), that heads of chairs are full professors, that professors are not courses and full professors not associate professors. The ABox states that Ann is an associate professor and head of the AI chair. 

If $\Kmc_\Umsf$ is interpreted as a~\emph{classical} ($\ALCHI$) KB, $\Kmc_\Umsf$ is \emph{inconsistent}: there is no classical model of $\Kmc_\Umsf$ since Ann cannot be a full professor and an associate professor at the same time. Hence, everything is entailed from $\Kmc_\Umsf$, for example $\Kmc_\Umsf\models \mathsf{Course}(\mathbf{ann})$. 
If $\Kmc_\Umsf$ is interpreted as a~\emph{paraconsistent} ($\ALCHItrianglefour$) KB, however, there are $\four$-models of $\Kmc_\Umsf$ since $\mathbf{ann}^\Imc$ can belong to $\mathsf{Asc}^{\Imc_\pmsf}$ and $\mathsf{Asc}^{\Imc_\nmsf}$. Actually, this is the case in every $\four$-model, i.e., $\Kmc_\Umsf\models_\four\mathsf{Asc}(\mathbf{ann})$ and $\Kmc_\Umsf\models_\four\neg\mathsf{Asc}(\mathbf{ann})$. Using $\four$-interpretations allows us to obtain more meaningful answers from a classically inconsistent KB (for example, $\Kmc_\Umsf\not\models_\four \mathsf{Course}(\mathbf{ann})$). 

In the classical setting, $\mathsf{Full}\sqsubseteq\neg\mathsf{Asc}$ and $\mathsf{Full}\sqcap \mathsf{Asc}\sqsubseteq\bot$ are equivalent. However, this is not the case in the paraconsistent setting: if we replace the former by the latter in $\Tmc_\Umsf$, then $\Kmc_\Umsf$ has no $\four$-models. It is thus important to carefully write the TBox axioms to reflect the intended meaning. In particular, we can define axioms of different strengths. For example, it may be reasonable to assume that courses and professors should be truly disjoint while one can permit contradictions in concepts governing different kinds of professors (e.g., in the situation above, Ann has been recently appointed the head of the AI chair but her promotion to full professor has not been finalised, so the fact that she is both an associate and full professor only indicates a~minor anomaly in $\Kmc_\Umsf$). In this case, however, it is reasonable to add \emph{contrapositives} of axioms with negations (i.e., $\mathsf{Asc}\sqsubseteq\neg\mathsf{Full}$). This will exclude $\four$-interpretations in which $\mathsf{Full}(a)$ is \emph{exactly true} and $\mathsf{Asc}(a)$ is \emph{both true and false}.

Besides replacing $\mathsf{Prf}\!\sqsubseteq\!\neg\mathsf{Course}$ by $\mathsf{Course}\sqcap\mathsf{Prf}\sqsubseteq\bot$, one can enforce disjointness between $\mathsf{Prf}$ and $\mathsf{Course}$ in several ways using our new $\triangle$ operator. First, with $\mathsf{Prf}\sqsubseteq\neg\triangle\mathsf{Course}$. Second, one can stipulate that $\mathsf{Prf}$ and $\mathsf{Course}$ \emph{behave classically}. $\triangle$ allows for the following compact representation of this requirement: $\neg\mathsf{Prf}\equiv\neg\triangle\mathsf{Prf}$ (an alternative representation is $\top\sqsubseteq \mathsf{Prf}\sqcup\neg\mathsf{Prf}$ and $\mathsf{Prf}\sqcap\neg\mathsf{Prf}\sqsubseteq\bot$). It is important to note that classicality is \emph{stronger} than disjointness because the latter permits the existence of some $a$ s.t.\ $\mathsf{Prf}(a)$ is both true and false or neither true nor false in a~$\four$-in\-ter\-pre\-ta\-tion while the former does not.
\end{example}
\subsubsection*{Capturing Different Inclusion Semantics With $\triangle$} 
We recall different interpretations of $\sqsubseteq$ from the literature:

\begin{definition}[Alternative inclusions]\label{def:othersubsumptions}
Let $\Imc=\langle\Delta^\Imc,\cdot^{\Imc_\pmsf},\cdot^{\Imc_\nmsf}\rangle$ be a~$\four$-interpretation and $\Cmsf,\Dmsf$ be two concepts. 
\begin{itemize}[noitemsep,topsep=1pt]
\item $\Cmsf$ is \emph{internally included} in $\Dmsf$ ($\Imc\!\models_\four\!\Cmsf\!\sqsubseteq\!\Dmsf$) iff $\Cmsf^{\Imc_\pmsf}\subseteq\Dmsf^{\Imc_\pmsf}$.
\item $\Cmsf$ is \emph{materially included} in $\Dmsf$ ($\Imc\!\models_\four\!\Cmsf\!\sqsubseteq^\mmsf\!\Dmsf$) iff \mbox{$\Delta^\Imc\setminus\Cmsf^{\Imc_\nmsf}\subseteq\Dmsf^{\Imc_\pmsf}$}.
\item $\Cmsf$ is \emph{strongly included} in $\Dmsf$ ($\Imc\!\models_\four\!\Cmsf\!\sqsubseteq^\pm\!\Dmsf$) iff $\Cmsf^{\Imc_\pmsf}\subseteq\Dmsf^{\Imc_\pmsf}$ and $\Dmsf^{\Imc_\nmsf}\subseteq\Cmsf^{\Imc_\nmsf}$.
\item $\Cmsf$ is \emph{quasi-classically included} in $\Dmsf$ ($\Imc\!\models_\four\!\Cmsf\!\sqsubseteq^\mathsf{qc}\!\Dmsf$) iff $\Cmsf$ is internally, strongly, and materially included in $\Dmsf$.
\end{itemize}
\end{definition}
Internal, material, and strong inclusions were presented in~\cite{MaHitzlerLin2007} and correspond to three eponymous four-valued implications by~\citeauthor{ArieliAvron1996}~\shortcite{ArieliAvron1996,ArieliAvron1998}. The quasi-classical inclusion proposed by~\citeauthor{ZhangXiaoLinvandenBussche2014} \shortcite{ZhangXiaoLinvandenBussche2014} combines all three notions. 
We have chosen to work with internal inclusion but will show how $\triangle$ allows us to reduce the other interpretations of $\sqsubseteq$ to this one. 

\begin{proposition}\label{prop:subsumptionreduction}
For every pair of $\ALCHItrianglefour$ concepts $(\Cmsf, \Dmsf)$ and $\xmsf\in\{\mmsf,\pm,\mathsf{qc}\}$, there is an $\ALCHItrianglefour$ concept inclusion $\phi_\xmsf$ such that for every $\four$-interpretation $\Imc$, it holds that
\begin{align*}
\Imc\models_\four\Cmsf\sqsubseteq^\xmsf\Dmsf&\text{ iff }\Imc\models_\four\phi_\xmsf.
\end{align*}
\end{proposition}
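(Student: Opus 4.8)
The plan is to reduce each alternative inclusion to a finite conjunction of \emph{internal} inclusions between concepts obtained by prefixing $\Cmsf$ and $\Dmsf$ with short strings over $\{\neg,\triangle\}$, and then to fold any such conjunction into a single internal concept inclusion of the shape $\top\sqsubseteq(\cdots)$, which is the form required by the statement (since internal inclusion is exactly what $\models_\four$ means for concept inclusions).

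First I would record four elementary identities that follow directly from the semantic clauses for $\neg$ and $\triangle$: for every $\four$-interpretation $\Imc$ and every concept $\Emsf$, we have $(\triangle\Emsf)^{\Imc_\pmsf}=\Emsf^{\Imc_\pmsf}$, $(\neg\triangle\Emsf)^{\Imc_\pmsf}=\Delta^\Imc\setminus\Emsf^{\Imc_\pmsf}$, $(\triangle\neg\Emsf)^{\Imc_\pmsf}=\Emsf^{\Imc_\nmsf}$, and $(\neg\triangle\neg\Emsf)^{\Imc_\pmsf}=\Delta^\Imc\setminus\Emsf^{\Imc_\nmsf}$ (each is a one-line computation; note that, unlike classical negation, $\triangle$ is not undone by surrounding it with $\neg$). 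Using Definition~\ref{def:othersubsumptions} together with these identities I would then observe: (i) $\Imc\models_\four\Cmsf\sqsubseteq^\mmsf\Dmsf$, i.e.\ $\Delta^\Imc\setminus\Cmsf^{\Imc_\nmsf}\subseteq\Dmsf^{\Imc_\pmsf}$, holds iff $(\neg\triangle\neg\Cmsf)^{\Imc_\pmsf}\subseteq\Dmsf^{\Imc_\pmsf}$, i.e.\ iff $\Imc\models_\four\neg\triangle\neg\Cmsf\sqsubseteq\Dmsf$; (ii) $\Imc\models_\four\Cmsf\sqsubseteq^\pm\Dmsf$ holds iff both $\Imc\models_\four\Cmsf\sqsubseteq\Dmsf$ and $\Imc\models_\four\neg\Dmsf\sqsubseteq\neg\Cmsf$, because $\Dmsf^{\Imc_\nmsf}=(\neg\Dmsf)^{\Imc_\pmsf}$ and $\Cmsf^{\Imc_\nmsf}=(\neg\Cmsf)^{\Imc_\pmsf}$; (iii) $\Imc\models_\four\Cmsf\sqsubseteq^\mathsf{qc}\Dmsf$ holds iff the three internal inclusions $\neg\triangle\neg\Cmsf\sqsubseteq\Dmsf$, $\Cmsf\sqsubseteq\Dmsf$, and $\neg\Dmsf\sqsubseteq\neg\Cmsf$ all hold in $\Imc$, combining (i), (ii), and the fact that strong inclusion subsumes internal inclusion.

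Next I would prove the folding step: for any internal inclusions $\Cmsf_1\sqsubseteq\Dmsf_1,\dots,\Cmsf_n\sqsubseteq\Dmsf_n$ and any $\four$-interpretation $\Imc$, all of $\Imc\models_\four\Cmsf_i\sqsubseteq\Dmsf_i$ hold iff $\Imc\models_\four\top\sqsubseteq\bigsqcap_{i=1}^n(\neg\triangle\Cmsf_i\sqcup\triangle\Dmsf_i)$. This is immediate from the identities above: $\Imc\models_\four\top\sqsubseteq\Emsf$ iff $\Emsf^{\Imc_\pmsf}=\Delta^\Imc$, and $\bigl(\bigsqcap_i(\neg\triangle\Cmsf_i\sqcup\triangle\Dmsf_i)\bigr)^{\Imc_\pmsf}=\bigcap_i\bigl((\Delta^\Imc\setminus\Cmsf_i^{\Imc_\pmsf})\cup\Dmsf_i^{\Imc_\pmsf}\bigr)$, which equals $\Delta^\Imc$ exactly when $\Cmsf_i^{\Imc_\pmsf}\subseteq\Dmsf_i^{\Imc_\pmsf}$ for every $i$. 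Feeding the decompositions (i)--(iii) into this equivalence produces the required single inclusions, e.g.\ $\phi_\pm=(\top\sqsubseteq(\neg\triangle\Cmsf\sqcup\triangle\Dmsf)\sqcap(\neg\triangle\neg\Dmsf\sqcup\triangle\neg\Cmsf))$, while for $\sqsubseteq^\mmsf$ one may simply take $\phi_\mmsf=(\neg\triangle\neg\Cmsf\sqsubseteq\Dmsf)$ directly; all of these are syntactically $\ALCHItrianglefour$ concept inclusions.

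Since everything reduces to unwinding definitions, there is no genuinely hard step; the only real care needed is in computing the positive extensions of the nested $\triangle/\neg$ concepts correctly (the asymmetry of $\triangle$ under negation is the easy thing to get wrong) and in respecting the letter of the statement, which asks for a \emph{single} concept inclusion — this is precisely why the $\top\sqsubseteq(\cdots)$ encoding is used to merge the several conjuncts arising for $\sqsubseteq^\pm$ and $\sqsubseteq^\mathsf{qc}$.
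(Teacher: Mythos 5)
Your proposal is correct and takes essentially the same route as the paper: decompose each alternative inclusion into a conjunction of internal inclusions and fold them into a single $\top\sqsubseteq(\cdots)$ axiom using $\neg$/$\triangle$ prefixes whose positive extensions realise the required complements. The only differences are cosmetic --- the paper writes $\phi_\mmsf$ as $\top\sqsubseteq\neg\Cmsf\sqcup\Dmsf$ rather than your $\neg\triangle\neg\Cmsf\sqsubseteq\Dmsf$, and uses $\Dmsf$ (resp.\ $\neg\Cmsf$) where you use the positively-equivalent $\triangle\Dmsf$ (resp.\ $\triangle\neg\Cmsf$).
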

\begin{proof}
For material inclusion, %we have 
$\Imc\models_\four\Cmsf\!\sqsubseteq^\mmsf\!\Dmsf$ iff $\Imc\models_\four\top\sqsubseteq\neg\Cmsf\!\sqcup\!\Dmsf$. For strong inclusion, it is immediate that $\Imc\models_\four\Cmsf\sqsubseteq^\pm\Dmsf$ iff $\Imc\models_\four\Cmsf\sqsubseteq\Dmsf$ and $\Imc\models_\four\neg\Dmsf\sqsubseteq\neg\Cmsf$, whence,
\begin{align*}
\Imc\models_\four\Cmsf\!\sqsubseteq^\pm\!\Dmsf&\text{ iff }\Imc\models_\four\top\!\sqsubseteq\!(\neg\triangle\Cmsf\sqcup\Dmsf)\sqcap(\neg\Cmsf\sqcup\neg\triangle\neg\Dmsf).
\end{align*}
Finally, $\Imc\models_\four\!\Cmsf\!\sqsubseteq^\mathsf{qc}\!\Dmsf$ iff $\Imc\models_\four\!\Cmsf\!\sqsubseteq^\mmsf\!\Dmsf$ and $\Imc\models_\four\!\Cmsf\!\sqsubseteq^\pm\!\Dmsf$, so $\Imc\models_\four\!\Cmsf\!\sqsubseteq^\mathsf{qc}\!\Dmsf$ is equivalent to
\begin{align*}
\Imc\models_\four\top\sqsubseteq(\neg\Cmsf\sqcup \Dmsf)\sqcap(\neg\triangle\Cmsf\sqcup\Dmsf)\sqcap(\neg\Cmsf\sqcup\neg\triangle\neg\Dmsf).&\qedhere
\end{align*}
\end{proof}

The preceding proof shows how $\triangle$ allows us to succinctly capture different forms of inclusions \emph{without the need to introduce additional concept names} (which would complicate later technical constructions, hence the interest in employing $\triangle$). Indeed, while $\triangle$ can be simulated, this requires us to add new concept names: $\neg\triangle \Cmsf$ can be expressed with a fresh $\Cmsf'$ such that $\Cmsf\sqcap \Cmsf'\sqsubseteq \bot$ and $\top\sqsubseteq \Cmsf\sqcup \Cmsf'$ and $\neg\triangle\neg \Dmsf$ can be replaced by $\Dmsf''$ such that $\neg\Dmsf\sqcap \Dmsf''\sqsubseteq \bot$ and $\top\sqsubseteq \neg\Dmsf\sqcup \Dmsf''$.
\subsubsection*{Negation Normal Form (NNF)} $\ALCHItrianglefour$ knowledge bases can be put into negation normal form (NNF) in polynomial time. We will focus on KBs in NNF to establish the complexity of reasoning when translating four-valued KBs to classical KBs (note that %the translation presented in previous work such as 
\cite{MaierMaHitzler2013} performs this transformation of the KB into NNF while translating it). The difference between our work and previous one is the use of the $\triangle$ constructor. %, so we will focus on its treatment. 
\begin{definition}\label{def:NNF}
We say that an $\ALCHItrianglefour$~concept $\Cmsf$ is in \emph{negation normal form (NNF)} if $\Cmsf$ is built from concepts $\Amsf$, $\neg\Amsf$, $\triangle\Amsf$, $\triangle\neg\Amsf$, $\neg\triangle\Amsf$, and $\neg\triangle\neg\Amsf$ 
($\Amsf\in\CN$) using binary connectives and quantifiers.
\end{definition}

\begin{proposition}\label{prop:ontologynormalform}
Let $\Tmc$ be an $\ALCHItrianglefour$ TBox. There exists a~TBox $\nnf(\Tmc)$ s.t.\ all concepts occurring in it are in NNF and $\Imc\models_\four\Tmc$ iff $\Imc\models_\four\nnf(\Tmc)$ for any $\four$-interpretation~$\Imc$.
\end{proposition}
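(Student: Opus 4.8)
The plan is to obtain $\nnf(\Tmc)$ by exhaustively rewriting every concept occurring in $\Tmc$ with a small set of equivalence‑preserving rules that push $\neg$ and $\triangle$ towards the concept names, leaving role inclusions untouched. Correctness will rest on the fact that every $\ALCHItrianglefour$ constructor is \emph{extension‑functional}: replacing a subconcept by one with the same positive \emph{and} the same negative extension does not change the positive/negative extension of the enclosing concept (a routine induction on concept structure using the clauses of \eqref{equ:ALCHI4propositional} and the derived clauses for $\sqcup$ and $\exists$).

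First I would record the identities to be used as rewrite rules; each is read as ``equal $\pmsf$‑extension and equal $\nmsf$‑extension in every $\four$‑interpretation $\Imc$'':
\begin{align*}
\neg\neg\Cmsf &\equiv \Cmsf, & \neg(\Cmsf\sqcap\Dmsf) &\equiv \neg\Cmsf\sqcup\neg\Dmsf,\\
\neg(\Cmsf\sqcup\Dmsf) &\equiv \neg\Cmsf\sqcap\neg\Dmsf, & \neg\forall\Smsf.\Cmsf &\equiv \exists\Smsf.\neg\Cmsf,\\
\neg\exists\Smsf.\Cmsf &\equiv \forall\Smsf.\neg\Cmsf, & \triangle\triangle\Cmsf &\equiv \triangle\Cmsf,\\
\triangle(\Cmsf\sqcap\Dmsf) &\equiv \triangle\Cmsf\sqcap\triangle\Dmsf, & \triangle(\Cmsf\sqcup\Dmsf) &\equiv \triangle\Cmsf\sqcup\triangle\Dmsf,\\
\triangle\forall\Smsf.\Cmsf &\equiv \forall\Smsf.\triangle\Cmsf, & \triangle\exists\Smsf.\Cmsf &\equiv \exists\Smsf.\triangle\Cmsf,\\
\triangle\neg\triangle\Cmsf &\equiv \neg\triangle\Cmsf.
\end{align*}
The first five are the Belnap--Dunn double‑negation/De Morgan identities together with the quantifier duals, proved by unfolding the semantic clauses. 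For the $\triangle$‑rules I would use $(\triangle\Cmsf)^{\Imc_\pmsf}=\Cmsf^{\Imc_\pmsf}$ and $(\triangle\Cmsf)^{\Imc_\nmsf}=\Delta^\Imc\setminus\Cmsf^{\Imc_\pmsf}$, so only the positive extension of the argument is relevant; the place needing care is $\triangle$ through a quantifier, where one checks directly that $\Delta^\Imc\setminus(\forall\Smsf.\Cmsf)^{\Imc_\pmsf}=(\forall\Smsf.\triangle\Cmsf)^{\Imc_\nmsf}$ (and dually for $\exists$), since $(\forall\Smsf.\Cmsf)^{\Imc_\nmsf}$ is itself defined existentially. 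The identity $\triangle\neg\triangle\Cmsf\equiv\neg\triangle\Cmsf$ holds because both sides have positive extension $\Delta^\Imc\setminus\Cmsf^{\Imc_\pmsf}$ and negative extension $\Cmsf^{\Imc_\pmsf}$.

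I would then orient these identities left‑to‑right and apply them at arbitrary subconcept positions until no rule applies. By extension‑functionality each step preserves the $\pmsf$‑ and $\nmsf$‑extension of every concept of $\Tmc$, hence preserves satisfaction of every concept inclusion $\Cmsf\sqsubseteq\Dmsf$ (which depends only on $\pmsf$‑extensions) and of the untouched role inclusions, so $\Imc\models_\four\Tmc$ iff $\Imc\models_\four\nnf(\Tmc)$. Termination follows from a well‑founded measure that strictly decreases under every rule, e.g.\ the sum over all occurrences of $\neg$ and $\triangle$ of the size of the subconcept in that occurrence's immediate scope: the ``push past a connective/quantifier'' rules strictly lower this sum, while $\neg\neg\Cmsf\to\Cmsf$, $\triangle\triangle\Cmsf\to\triangle\Cmsf$ and $\triangle\neg\triangle\Cmsf\to\neg\triangle\Cmsf$ delete an operator occurrence; since no rule duplicates a subconcept, the whole transformation is in fact polynomial. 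Finally, inspecting the irreducible concepts shows that in a normal form every $\neg$ stands in front of some $\Amsf\in\CN$ or some $\triangle\Dmsf$, and every $\triangle$ stands in front of some $\Amsf$ or $\neg\Amsf$, so the leaves are exactly $\Amsf,\neg\Amsf,\triangle\Amsf,\triangle\neg\Amsf,\neg\triangle\Amsf,\neg\triangle\neg\Amsf$, i.e.\ the concept is in NNF in the sense of Definition~\ref{def:NNF}.

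I expect the main obstacle to be exactly the $\triangle$‑versus‑quantifier identities (and, relatedly, $\triangle\neg\triangle\Cmsf\equiv\neg\triangle\Cmsf$): these are the points where the functional, positive‑only behaviour of $\triangle$ interacts with the asymmetric definition of $(\forall\Smsf.\Cmsf)^{\Imc_\nmsf}$, so they must be verified from the semantics directly rather than by reusing propositional Belnap--Dunn reasoning. A minor side issue is the treatment of $\top$ and $\bot$: either one admits them as extra NNF building blocks, or one first replaces $\top$ by $\triangle\Amsf\sqcup\neg\triangle\Amsf$ and $\bot$ by $\triangle\Amsf\sqcap\neg\triangle\Amsf$ for a fixed $\Amsf\in\CN$, which have the intended extensions.
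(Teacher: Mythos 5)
Your proof is correct and takes essentially the same route as the paper's: the identical set of extension-preserving rewrite rules pushing $\neg$ and $\triangle$ towards concept names (including the key identities $\triangle\Qmc\Smsf.\Cmsf\equiv\Qmc\Smsf.\triangle\Cmsf$ and $\triangle\neg\triangle\Cmsf\equiv\neg\triangle\Cmsf$), justified by checking that each rule preserves both the positive and negative extensions via the semantic clauses. You additionally make explicit the termination measure and the handling of $\top$ and $\bot$, which the paper leaves implicit.
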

\begin{proof}
We define $\nnf(\Tmc)$ as follows: all role inclusions remain as in $\Tmc$; for each concept inclusion $\Cmsf\sqsubseteq\Dmsf$, we apply the following transformations to $\Cmsf$ and $\Dmsf$.
\begin{align*}
\neg\top&\rightsquigarrow\bot&\neg\bot&\rightsquigarrow\top\nonumber\\
\neg\neg\Cmsf&\rightsquigarrow\Cmsf&\neg(\Cmsf\!\circ\!\Dmsf)&\rightsquigarrow\neg\Cmsf\!\bullet\!\neg\Dmsf\nonumber\\
\neg\Qmc\Smsf.\Cmsf&\rightsquigarrow\overline{\Qmc}\Smsf.\neg\Cmsf&\triangle\triangle\Cmsf&\rightsquigarrow\triangle\Cmsf\nonumber\\
\triangle(\Cmsf\circ\Dmsf)&\rightsquigarrow\triangle\Cmsf\circ\triangle\Dmsf&\triangle\Qmc\Smsf.\Cmsf&\rightsquigarrow\Qmc\Smsf.\triangle\Cmsf\nonumber\\
\triangle\neg\triangle\Cmsf&\rightsquigarrow\neg\triangle\Cmsf
\end{align*}
It can be verified using~\eqref{equ:ALCHI4propositional} that the transformations preserve the concept interpretations, which yields the result. 
\end{proof}
\subsubsection*{Reductions Between $\ALCHItrianglefour$ and $\ALCHI$} 
We show that $\ALCHItrianglefour$ and $\ALCHI$ are equally expressive. Using Proposition~\ref{prop:ontologynormalform}, we can construct an embedding of $\four$-valued knowledge bases into the classical ones. The embedding follows the idea from~\cite{MaHitzlerLin2007}: we encode positive and negative interpretations separately. The only difference in our case is that we need to account for~$\triangle$.
\begin{definition}[Classical counterparts]\label{def:triangleembedding}
Let $\Kmc=\langle\Tmc,\Amsf\rangle$ be an $\ALCHItrianglefour$ KB with $\Tmc$ in NNF. We define its \emph{classical counterpart} $\Kmc^\cl=\{\phi^\cl\mid\phi\in\Kmc\}$ as follows.
\begin{align*}
(\Cmsf\sqsubseteq\Dmsf)^\cl&=\Cmsf^\cl\sqsubseteq\Dmsf^\cl&(\Smsf\sqsubseteq\Smsf')^\cl&=\Smsf\sqsubseteq\Smsf'\\
(\Amsf(a))^\cl&=\Amsf^\cl(a)&(\Rmsf(a,b))^\cl&=\Rmsf(a,b)
\end{align*}
where $\Cmsf,\Dmsf$ are $\ALCHItrianglefour$ concepts, $\Smsf,\Smsf'\in\RNinv$, $\Amsf\in\CN$, $\Rmsf\in\RN$. 
For $\Cmsf$ in NNF, $\Cmsf^\cl$ is defined as follows.
\begin{align}
\Amsf^\cl&=\Amsf^+&(\neg\Amsf)^\cl&=\Amsf^-\nonumber\\
(\triangle\Amsf)^\cl&=\Amsf^+&(\neg\triangle\Amsf)^\cl&=\neg\Amsf^+\nonumber\\
(\triangle\neg\Amsf)^\cl&=\Amsf^-&(\neg\triangle\neg\Amsf)^\cl&=\neg\Amsf^-\nonumber\\
(\Cmsf\circ\Dmsf)^\cl&=\Cmsf^\cl\circ\Dmsf^\cl&(\Qmc\Smsf.\Cmsf)^\cl&=\Qmc\Smsf.\Cmsf^\cl\label{equ:triangletranslation}
\end{align}

Let $\Imc=\langle\Delta^\Imc,\cdot^{\Imc_\pmsf},\cdot^{\Imc_\nmsf}\rangle$ be a~$\four$-valued interpretation. The \emph{classical counterpart} $\Imc^\cl=\langle\Delta^{\Imc^\cl},\cdot^{\Imc^\cl}\rangle$ of $\Imc$ is as follows.
\begin{align}
\Delta^{\Imc^\cl}&=\Delta^\Imc\label{equ:interpretationcounterpart}\\
a^{\Imc^\cl}&=a^\Imc&\Rmsf^{\Imc^\cl}&=\Rmsf^\Imc\tag{$a\in\IN$, $\Rmsf\in\RN$}\\
(\Amsf^+)^{\Imc^\cl}&=\Amsf^{\Imc_\pmsf}&(\Amsf^-)^{\Imc^\cl}&=\Amsf^{\Imc_\nmsf}\tag{$\Amsf\in\CN$}
\end{align}
\end{definition}
\begin{restatable}{proposition}{proptriangleembeddingfirst}\label{prop:triangleembedding1}
Let $\Kmc$ be an $\ALCHItrianglefour$ knowledge base in NNF, $\Imc$ a~$\four$-valued interpretation, and $\phi$ a~concept inclusion, role inclusion, or assertion. Then $\Kmc\models_\four\phi$ iff $\Kmc^\cl\models\phi^\cl$ and, moreover, $\Imc\models_\four\Kmc$ iff $\Imc^\cl\models\Kmc^\cl$.
\end{restatable}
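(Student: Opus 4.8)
The plan is to prove the two biconditionals together by first establishing the interpretation-level equivalence $\Imc \models_\four \Kmc$ iff $\Imc^\cl \models \Kmc^\cl$, and then deriving the entailment-level equivalence from it by observing that $\Imc \mapsto \Imc^\cl$ is essentially a bijection between the relevant classes of interpretations. First I would prove a preliminary lemma, by structural induction on an NNF concept $\Cmsf$, that for every $\four$-interpretation $\Imc$ and every $x \in \Delta^\Imc = \Delta^{\Imc^\cl}$ we have $x \in \Cmsf^{\Imc_\pmsf}$ iff $x \in (\Cmsf^\cl)^{\Imc^\cl}$. The base cases $\Amsf$, $\neg\Amsf$, $\triangle\Amsf$, $\triangle\neg\Amsf$, $\neg\triangle\Amsf$, $\neg\triangle\neg\Amsf$ follow directly by unfolding the semantic clauses in~\eqref{equ:ALCHI4propositional} against the definitions of $(\cdot)^\cl$ and $\Imc^\cl$ in Definition~\ref{def:triangleembedding}; for instance $(\triangle\Cmsf)^{\Imc_\pmsf} = \Cmsf^{\Imc_\pmsf}$ and $(\triangle\Amsf)^\cl = \Amsf^+$ with $(\Amsf^+)^{\Imc^\cl} = \Amsf^{\Imc_\pmsf}$ line up, and $(\neg\triangle\Cmsf)^{\Imc_\pmsf} = \Delta^\Imc \setminus \Cmsf^{\Imc_\pmsf}$ matches $(\neg\Amsf^+)^{\Imc^\cl}$ computed with classical negation. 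Note that here the only subtlety is that the $\four$-semantics is only given directly for $\sqcap$, $\forall$, $\triangle$, $\neg$, with $\sqcup$ and $\exists$ defined by de Morgan duality; since $\Cmsf$ is in NNF this is harmless, but one should be slightly careful to phrase the induction over the connectives actually permitted in NNF concepts (or, equivalently, push the derived clauses for $\sqcup^{\Imc_\pmsf} = \Cmsf^{\Imc_\pmsf} \cup \Dmsf^{\Imc_\pmsf}$ and $(\exists\Smsf.\Cmsf)^{\Imc_\pmsf} = \{x \mid \exists y\, (x,y)\in\Smsf^\Imc \ \&\ y\in\Cmsf^{\Imc_\pmsf}\}$ through first). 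The inductive steps for $\circ \in \{\sqcap,\sqcup\}$ and $\Qmc \in \{\exists,\forall\}$ are then immediate since $(\Cmsf\circ\Dmsf)^\cl = \Cmsf^\cl \circ \Dmsf^\cl$ and $(\Qmc\Smsf.\Cmsf)^\cl = \Qmc\Smsf.\Cmsf^\cl$ and $\Smsf^{\Imc^\cl} = \Smsf^\Imc$.

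Given this lemma, the equivalence $\Imc \models_\four \phi$ iff $\Imc^\cl \models \phi^\cl$ for each $\phi$ is routine: a concept inclusion $\Cmsf\sqsubseteq\Dmsf$ holds in $\Imc$ iff $\Cmsf^{\Imc_\pmsf}\subseteq\Dmsf^{\Imc_\pmsf}$, which by the lemma is iff $(\Cmsf^\cl)^{\Imc^\cl}\subseteq(\Dmsf^\cl)^{\Imc^\cl}$, i.e.\ $\Imc^\cl\models\Cmsf^\cl\sqsubseteq\Dmsf^\cl$; role inclusions and assertions are handled by the equalities $\Smsf^{\Imc^\cl}=\Smsf^\Imc$, $a^{\Imc^\cl}=a^\Imc$, $\Rmsf^{\Imc^\cl}=\Rmsf^\Imc$, $(\Amsf^+)^{\Imc^\cl}=\Amsf^{\Imc_\pmsf}$. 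Quantifying over all $\phi \in \Tmc\cup\Amc$ yields $\Imc\models_\four\Kmc$ iff $\Imc^\cl\models\Kmc^\cl$. (Strictly, one should also invoke Proposition~\ref{prop:ontologynormalform} to know it suffices to treat $\Tmc$ in NNF, but the statement already assumes $\Kmc$ is in NNF.)

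For the entailment direction I would argue both implications. If $\Kmc^\cl \models \phi^\cl$, then for any $\four$-model $\Imc$ of $\Kmc$, its counterpart $\Imc^\cl$ is a classical model of $\Kmc^\cl$, hence $\Imc^\cl \models \phi^\cl$, hence by the interpretation-level equivalence $\Imc \models_\four \phi$; so $\Kmc \models_\four \phi$. Conversely, suppose $\Kmc \models_\four \phi$ and let $\Jmc$ be any classical model of $\Kmc^\cl$. The one point that needs care — and is the only real obstacle — is to produce a $\four$-interpretation whose classical counterpart is $\Jmc$: define $\Imc$ by $\Delta^\Imc = \Delta^\Jmc$, $a^\Imc = a^\Jmc$, $\Rmsf^{\Imc_\pmsf} = \Rmsf^{\Imc_\nmsf} = \Rmsf^\Jmc$, $\Amsf^{\Imc_\pmsf} = (\Amsf^+)^\Jmc$, $\Amsf^{\Imc_\nmsf} = (\Amsf^-)^\Jmc$; this is a well-defined $\four$-interpretation and clearly $\Imc^\cl = \Jmc$ (on all symbols $\Amsf^+,\Amsf^-$ occurring in $\Kmc^\cl$; for any others one may set $\Jmc$'s values arbitrarily). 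Then $\Imc \models_\four \Kmc$ by the equivalence, so $\Imc \models_\four \phi$, so $\Jmc = \Imc^\cl \models \phi^\cl$; hence $\Kmc^\cl \models \phi^\cl$. I expect the bulk of the writing effort to be the careful bookkeeping in the base case of the structural induction — matching each of the six NNF literal forms to its classical translation under the two interpretation functions — rather than any conceptual difficulty.
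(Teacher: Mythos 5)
Your proposal is correct and follows essentially the same route as the paper's proof: both hinge on establishing $(\Cmsf^\cl)^{\Imc^\cl}=\Cmsf^{\Imc_\pmsf}$ by structural induction on NNF concepts, and both transfer models in the two entailment directions via the classical counterpart $\Imc\mapsto\Imc^\cl$ and the inverse $\four$-counterpart construction $\Jmc\mapsto\Jmc^\four$ (the paper merely phrases the entailment equivalence contrapositively). Your remark about deriving the positive-extension clauses for $\sqcup$ and $\exists$ from the de Morgan definitions before running the induction is a valid point of care that the paper handles implicitly by citing the analogous induction of Maier et al.
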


For the other direction, we shall exploit the essentially two-valued behaviour of $\triangle$. We use $\chi^\triangle$ to denote the result of putting $\triangle$ in front of every concept name occurring in $\chi$. 

\begin{restatable}{lemma}{lemreductionclassicalparaconsistent}\label{lem:reduction-classical-paraconsistent}
Let $\Kmc=\langle\Tmc,\Amc\rangle$ be an $\mathcal{ALCHI}$ knowledge base in NNF. Then it holds that $\{\cdot^\Imc\mid \Imc\models\Kmc, \Imc=\langle \Delta^\Imc, \cdot^\Imc\rangle\}=\{\cdot^{\Imc^\four_\pmsf}\mid \Imc_\four\models_\four\Kmc^\triangle, \Imc^\four=\langle \Delta^{\Imc^\four}, \cdot^{\Imc^\four_\pmsf}, \cdot^{\Imc^\four_\nmsf}\rangle\}$.
\end{restatable}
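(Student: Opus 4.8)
The plan is to establish the two set inclusions between the family of classical models of $\Kmc$ and the family of positive-parts of $\four$-models of $\Kmc^\triangle$, where $\Kmc^\triangle$ prefixes $\triangle$ to every concept name. The key observation driving everything is that $\triangle\Amsf$ behaves two-valuedly: for any $\four$-interpretation $\Imc^\four$, the pair $((\triangle\Amsf)^{\Imc^\four_\pmsf},(\triangle\Amsf)^{\Imc^\four_\nmsf})$ equals $(\Amsf^{\Imc^\four_\pmsf},\Delta\setminus\Amsf^{\Imc^\four_\pmsf})$, i.e.\ a ``crisp'' partition of the domain. Since in $\Kmc^\triangle$ every concept name is guarded by $\triangle$, an easy induction on concept structure (using~\eqref{equ:ALCHI4propositional} and the NNF definitions, together with the derived clauses for $\sqcup$ and $\exists$) shows that for every $\ALCHItrianglefour$ concept $\Cmsf$ occurring in $\Kmc$, if $\Cmsf^\triangle$ denotes the $\triangle$-guarded version, then $(\Cmsf^\triangle)^{\Imc^\four_\pmsf} = \Delta^{\Imc^\four}\setminus(\Cmsf^\triangle)^{\Imc^\four_\nmsf}$ — the negative extension is always the complement of the positive one. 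Hence the ``$\nmsf$'' component carries no information and the positive component behaves exactly like a classical interpretation function: concretely, $(\neg\Cmsf^\triangle)^{\Imc^\four_\pmsf} = (\Cmsf^\triangle)^{\Imc^\four_\nmsf} = \Delta\setminus(\Cmsf^\triangle)^{\Imc^\four_\pmsf}$, which is precisely the classical clause for negation, and the remaining constructors ($\sqcap$, $\forall$, and the derived $\sqcup$, $\exists$) are interpreted positively by the same set operations in both semantics.

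For the inclusion ``$\subseteq$'': given a classical model $\Imc=\langle\Delta^\Imc,\cdot^\Imc\rangle$ of $\Kmc$, define $\Imc^\four$ by $\Delta^{\Imc^\four}=\Delta^\Imc$, $a^{\Imc^\four}=a^\Imc$, $\Rmsf^{\Imc^\four_\pmsf}=\Rmsf^{\Imc^\four_\nmsf}=\Rmsf^\Imc$, $\Amsf^{\Imc^\four_\pmsf}=\Amsf^\Imc$, and $\Amsf^{\Imc^\four_\nmsf}=\Delta^\Imc\setminus\Amsf^\Imc$. By the induction above, $(\Cmsf^\triangle)^{\Imc^\four_\pmsf}=\Cmsf^\Imc$ for every concept $\Cmsf$ appearing in $\Kmc$, so each concept inclusion $\Cmsf\sqsubseteq\Dmsf$ of $\Kmc$ — which in $\Kmc^\triangle$ becomes $\Cmsf^\triangle\sqsubseteq\Dmsf^\triangle$ — is satisfied by $\Imc^\four$ iff it is satisfied by $\Imc$; role inclusions and assertions transfer trivially since roles and individuals are interpreted identically. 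Thus $\Imc^\four\models_\four\Kmc^\triangle$ and $\cdot^{\Imc^\four_\pmsf}$ restricted to concept/role names agrees with $\cdot^\Imc$, giving the desired membership. For the inclusion ``$\supseteq$'': given any $\Imc^\four\models_\four\Kmc^\triangle$, read off the classical interpretation $\Imc$ with $\Delta^\Imc=\Delta^{\Imc^\four}$, $\Rmsf^\Imc=\Rmsf^{\Imc^\four_\pmsf}$, $\Amsf^\Imc=\Amsf^{\Imc^\four_\pmsf}$; the same induction shows $\Cmsf^\Imc=(\Cmsf^\triangle)^{\Imc^\four_\pmsf}$, so $\Imc$ satisfies each axiom/assertion of $\Kmc$ exactly because $\Imc^\four$ satisfies its $\triangle$-guarded counterpart in $\Kmc^\triangle$. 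This shows every element of the right-hand set arises as $\cdot^\Imc$ for some classical model $\Imc$ of $\Kmc$, completing the equality.

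The one point requiring care — and the main obstacle — is the base case and the negation step of the induction, where we must confirm that $\triangle$-guarding genuinely collapses the four-valued negation to the classical one at every nesting depth. The subtlety is that after NNF, negated concept names $\neg\Amsf$ also occur, and in $\Kmc^\triangle$ these become $\triangle(\neg\Amsf)$ or, depending on the convention for applying $\triangle$, $\neg\triangle\Amsf$; one must check (using the NNF rewrite $\triangle\neg\triangle\Cmsf\rightsquigarrow\neg\triangle\Cmsf$ and the clauses for $(\triangle\Cmsf)^{\Imc_\nmsf}$ and $(\neg\Cmsf)^{\Imc_\pmsf}$ from~\eqref{equ:ALCHI4propositional}) that all six NNF literal shapes $\Amsf^\triangle$, $(\neg\Amsf)^\triangle$, etc.\ still yield positive extensions that are Boolean combinations behaving classically, with complementary negative extensions. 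Once this literal-level bookkeeping is pinned down, the inductive step for $\sqcap$, $\forall$ (and hence $\sqcup$, $\exists$ via the abbreviations $\Cmsf\sqcup\Dmsf\coloneqq\neg(\neg\Cmsf\sqcap\neg\Dmsf)$ and $\exists\Smsf.\Cmsf\coloneqq\neg\forall\Smsf.\neg\Cmsf$) is a routine application of the set-theoretic identities in~\eqref{equ:ALCHI4propositional}, since $\Rmsf^{\Imc^\four_\pmsf}=\Rmsf^{\Imc^\four_\nmsf}$ makes the quantifier clauses coincide with the classical ones on the guarded concepts.
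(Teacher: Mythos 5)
Your proposal is correct and follows essentially the same route as the paper: a structural induction establishing $(\Cmsf^\triangle)^{\Imc^\four_\pmsf}=\Cmsf^\Imc$, with the key literal case $(\neg\triangle\Amsf)^{\Imc^\four_\pmsf}=\Delta^\Imc\setminus\Amsf^{\Imc^\four_\pmsf}$ resolving the convention exactly as you suspected (the paper's $\chi^\triangle$ puts $\triangle$ in front of concept names, so $\neg\Amsf$ becomes $\neg\triangle\Amsf$, not $\triangle\neg\Amsf$). The only cosmetic difference is that the paper lets $\cdot^{\Imc^\four_\nmsf}$ be arbitrary in the forward direction rather than fixing it to the complement, since the positive extensions of $\triangle$-guarded concepts never consult the negative name extensions; your extra complementarity invariant is valid but not needed.
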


The following proposition straightforwardly follows.

\begin{proposition}\label{prop:triangleembedding2}
Let $\Kmc$ be an $\mathcal{ALCHI}$ knowledge base in NNF 
and $\phi$ a~concept inclusion, role inclusion, or assertion. Then $\Kmc\models\phi$ iff $\Kmc^\triangle\models_\four\phi^\triangle$.
\end{proposition}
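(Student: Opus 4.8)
The plan is to derive the proposition from Lemma~\ref{lem:reduction-classical-paraconsistent} together with a single auxiliary observation: putting $\triangle$ in front of every concept name makes the \emph{positive} extension of a concept coincide with its classical extension. Concretely, I would first prove by induction on the structure of an $\ALCHI$ concept $\Cmsf$ that for every $\four$-interpretation $\Imc^\four=\langle\Delta^{\Imc^\four},\cdot^{\Imc^\four_\pmsf},\cdot^{\Imc^\four_\nmsf}\rangle$, writing $\Imc=\langle\Delta^{\Imc^\four},\cdot^\Imc\rangle$ for the classical interpretation that interprets each concept name by its positive extension in $\Imc^\four$ and keeps the role and individual interpretations of $\Imc^\four$, one has $(\Cmsf^\triangle)^{\Imc^\four_\pmsf}=\Cmsf^\Imc$. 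The base cases are read off~\eqref{equ:ALCHI4propositional}: $(\triangle\Amsf)^{\Imc^\four_\pmsf}=\Amsf^{\Imc^\four_\pmsf}=\Amsf^\Imc$ and, crucially, $(\neg\Amsf)^\triangle=\neg\triangle\Amsf$ with $(\neg\triangle\Amsf)^{\Imc^\four_\pmsf}=\Delta^{\Imc^\four}\setminus\Amsf^{\Imc^\four_\pmsf}=(\neg\Amsf)^\Imc$, i.e.\ $(\neg\Amsf)^\triangle$ behaves like \emph{classical} negation --- this is precisely why $\chi^\triangle$ inserts $\triangle$ between $\neg$ and the atom rather than outside it; $\top$ and $\bot$ are immediate. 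The inductive cases $\sqcap$, $\sqcup$, $\exists\Smsf.\,$, $\forall\Smsf.$ are routine: $\triangle$-insertion commutes with all four constructors, the positive-extension clauses for $\sqcap$ and $\forall\Smsf.$ already mirror the classical ones, and for the abbreviations $\sqcup$ and $\exists\Smsf.$ one unfolds the definition and uses the negative-extension clauses for $\sqcap$ and $\forall\Smsf.$, which collapse to the classical operations once the negative extensions have been eliminated by $\triangle$. From this one obtains immediately that for any $\ALCHI$ concept inclusion, role inclusion, or assertion $\phi$, $\Imc^\four\models_\four\phi^\triangle$ iff $\Imc\models\phi$ (for role inclusions and role assertions $\phi^\triangle=\phi$ and the equivalence is trivial since $\Imc$ and $\Imc^\four$ share the same role interpretation).

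Given this observation, both directions are short. For the ``only if'' direction, assume $\Kmc\models\phi$ and let $\Imc^\four$ be any $\four$-model of $\Kmc^\triangle$. By Lemma~\ref{lem:reduction-classical-paraconsistent}, $\cdot^{\Imc^\four_\pmsf}$ is the interpretation function of a classical model of $\Kmc$; since a classical interpretation is determined by its domain and interpretation function, this means $\Imc=\langle\Delta^{\Imc^\four},\cdot^{\Imc^\four_\pmsf}\rangle$ satisfies $\Kmc$, hence $\Imc\models\phi$, and so by the auxiliary observation $\Imc^\four\models_\four\phi^\triangle$. As $\Imc^\four$ was arbitrary, $\Kmc^\triangle\models_\four\phi^\triangle$. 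For the ``if'' direction, assume $\Kmc^\triangle\models_\four\phi^\triangle$ and let $\Imc$ be any classical model of $\Kmc$. By the other inclusion of the set equality in Lemma~\ref{lem:reduction-classical-paraconsistent}, there is a $\four$-model $\Imc^\four$ of $\Kmc^\triangle$ over the domain $\Delta^\Imc$ with $\cdot^{\Imc^\four_\pmsf}=\cdot^\Imc$; then $\Imc^\four\models_\four\phi^\triangle$, so by the auxiliary observation $\Imc\models\phi$. Hence $\Kmc\models\phi$.

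As the text already signals, there is no genuine obstacle; the only point requiring a little care is the induction above, specifically the cases of $\sqcup$ and $\exists$, which are abbreviations built from $\neg$: one must unfold them and verify that, after $\triangle$-insertion, the positive extension of the abbreviation is computed from the negative-extension clause of its dual, and that this still yields the intended classical set operation. A completely bookkeeping-only alternative is available as well: since $\Kmc$ is in NNF so is $\Kmc^\triangle$ (its atoms are the legal NNF atoms $\triangle\Amsf$ and $\neg\triangle\Amsf$), and by~\eqref{equ:triangletranslation} we have $(\triangle\Amsf)^\cl=\Amsf^+$ and $(\neg\triangle\Amsf)^\cl=\neg\Amsf^+$, so $(\Kmc^\triangle)^\cl$ and $(\phi^\triangle)^\cl$ are exactly $\Kmc$ and $\phi$ with the concept-name renaming $\Amsf\mapsto\Amsf^+$; as classical entailment is invariant under such a bijective renaming, Proposition~\ref{prop:triangleembedding1} yields $\Kmc\models\phi$ iff $(\Kmc^\triangle)^\cl\models(\phi^\triangle)^\cl$ iff $\Kmc^\triangle\models_\four\phi^\triangle$.
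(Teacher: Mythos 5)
Your main argument is correct and is exactly the derivation the paper intends: Proposition~\ref{prop:triangleembedding2} is obtained from Lemma~\ref{lem:reduction-classical-paraconsistent} together with the identity $(\Cmsf^\triangle)^{\Imc^\four_\pmsf}=\Cmsf^\Imc$, which is the key inductive fact established inside the lemma's own proof. Your closing alternative via Proposition~\ref{prop:triangleembedding1} and the renaming $\Amsf\mapsto\Amsf^+$ is also sound and arguably even more economical, but the primary route matches the paper's.
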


The next statement follows from Propositions~\ref{prop:triangleembedding1} and~\ref{prop:triangleembedding2} and the complexity of $\ALCHI$ \cite{Tobies01}.
\begin{theorem}\label{theorem:ExpTimecompleteness}
Axiom or assertion entailment in $\ALCHItrianglefour$ is $\exptime$-complete.
\end{theorem}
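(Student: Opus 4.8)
The plan is to sandwich $\ALCHItrianglefour$ entailment between classical $\ALCHI$ entailment on both sides, using the two translations already set up in the excerpt, and then invoke Tobies' $\exptime$-completeness of $\ALCHI$~\cite{Tobies01}.

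For the upper bound, suppose we are given an $\ALCHItrianglefour$ KB $\Kmc$ and a concept inclusion, role inclusion, or assertion $\phi$, and we want to decide $\Kmc\models_\four\phi$. First I would normalize: by Proposition~\ref{prop:ontologynormalform} replace $\Tmc$ by $\nnf(\Tmc)$, and similarly push $\triangle$ and $\neg$ inward in $\phi$ (when $\phi$ is a concept inclusion; role inclusions and atomic assertions need nothing); inspection of the rewrite rules shows this costs only linear time with no blow-up. Next I would compute the classical counterparts $\Kmc^\cl$ and $\phi^\cl$ of Definition~\ref{def:triangleembedding}: these are a genuine classical $\ALCHI$ KB and an $\ALCHI$ statement over the doubled signature $\{\Amsf^+,\Amsf^-\mid\Amsf\in\CN\}$, again obtained in linear time. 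Proposition~\ref{prop:triangleembedding1} then gives $\Kmc\models_\four\phi$ iff $\Kmc^\cl\models\phi^\cl$, and the right-hand side is decidable in $\exptime$ by~\cite{Tobies01}. Composing the polynomial-time preprocessing with an $\exptime$ decision procedure yields the $\exptime$ upper bound.

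For the lower bound, I would reduce classical $\ALCHI$ entailment, which is $\exptime$-hard by~\cite{Tobies01}, to $\ALCHItrianglefour$ entailment. Given a classical $\ALCHI$ KB $\Kmc$ and statement $\phi$, put both into NNF by the standard polynomial-time transformation that drives negations to the leaves; since $\Kmc$ contains no $\triangle$, the output is an $\ALCHI$ KB in NNF in the sense of Definition~\ref{def:NNF}, and $\exptime$-hardness still holds when inputs are restricted to this form (classical NNF normalization is semantics-preserving and polynomial). Then form $\Kmc^\triangle$ and $\phi^\triangle$ by prefixing $\triangle$ to every concept-name occurrence, a linear-time operation producing an $\ALCHItrianglefour$ KB and statement. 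By Proposition~\ref{prop:triangleembedding2} we get $\Kmc\models\phi$ iff $\Kmc^\triangle\models_\four\phi^\triangle$, so this is a polynomial-time many-one reduction, whence $\ALCHItrianglefour$ entailment is $\exptime$-hard; together with the upper bound this gives $\exptime$-completeness.

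There is no deep obstacle remaining: the substance is in Propositions~\ref{prop:triangleembedding1} and~\ref{prop:triangleembedding2} (and in Lemma~\ref{lem:reduction-classical-paraconsistent}, which underpins the second), not in the present argument. The only points needing care are bookkeeping: checking that the NNF transformation and the $\cdot^\cl$ and $\cdot^\triangle$ translations are all polynomial (in fact linear), so the two reductions are bona fide polynomial-time many-one reductions; and verifying that the NNF preconditions of the cited propositions can be met without loss of generality, in particular that $\exptime$-hardness of classical $\ALCHI$ survives the restriction to KBs in NNF. I would also remark that this two-sided template — reduce to $\ALCHI$ via $\cdot^\cl$ for the upper bound, reduce from the base logic via $\cdot^\triangle$ for the lower bound — is exactly what is reused for the sublogics of $\ALCHI$ in the complexity section.
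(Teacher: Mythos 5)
Your proposal is correct and follows essentially the same route as the paper, which derives the theorem directly from Propositions~\ref{prop:triangleembedding1} and~\ref{prop:triangleembedding2} together with the $\exptime$-completeness of $\ALCHI$ \cite{Tobies01}. The additional bookkeeping you supply (polynomiality of the NNF and $\cdot^\cl$, $\cdot^\triangle$ translations, and that hardness of $\ALCHI$ survives restriction to NNF inputs) is exactly what the paper leaves implicit.
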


\subsubsection*{Horn DLs} 
As we saw in Example~\ref{example:teachers1}, $\Amsf\sqcap\Bmsf\sqsubseteq \bot$, $\Amsf\sqsubseteq \neg \Bmsf$ and $\Bmsf\sqsubseteq \neg \Amsf$ have different semantics in the four-valued setting. Hence, to be able to define $\Lmc^\four_\triangle$ KBs that are really paraconsistent for DLs $\Lmc$ that normally have $\bot$ but no negation, such as $\mathcal{ELHI}_\bot$ and its sub-logics, we need to use syntactic variants that may also express `weak disjointness' with $\neg$. 
An ${\mathcal{ELHI}_\neg}^\four_\triangle$ TBox contains inclusions of one of the following forms (extending $\mathcal{ELHI}_\bot$ in normal form):
\begin{align*}
\Smsf\sqsubseteq\Smsf'&&
\Amsf\sqsubseteq\exists\Smsf.\Bmsf&&\exists\Smsf.\Amsf\sqsubseteq\Cmsf&&\Amsf\sqcap \Bmsf\sqsubseteq\Cmsf&&
\Amsf\sqsubseteq\neg\Bmsf
\end{align*}
with $\Smsf,\Smsf'\in\RNinv$, $\Amsf,\Bmsf\in\CN\cup\{\top\}$ and $\Cmsf\in\CN\cup\{\top,\bot\}$. 
We do not include the $\triangle$ operator in this syntax because we can equivalently add $\triangle$ anywhere in the above inclusions without changing the inclusion satisfaction condition, except in 
the case of $\Amsf\sqsubseteq\neg\triangle\Bmsf$, but as mentioned in Example~\ref{example:teachers1}, $\Amsf\sqsubseteq\neg\triangle\Bmsf$ is equivalent to $\Amsf\sqcap\Bmsf\sqsubseteq\bot$. 
We keep the language name in the form of $\Lmc^\four_\triangle$ only for homogeneity. 
We denote by ${\mathcal{EL}_\neg}^\four_\triangle$, ${\mathcal{ELI}_\neg}^\four_\triangle$ and ${\mathcal{ELH}_\neg}^\four_\triangle$ the fragments of ${\mathcal{ELHI}_\neg}^\four_\triangle$ that correspond to $\mathcal{EL}_\bot$, $\mathcal{ELI}_\bot$ and $\mathcal{ELH}_\bot$ respectively. 

It is easily checked 
that for every ${\Lmc_\neg}^\four_\triangle$ KB $\Kmc$ with $\Lmc\in\{\mathcal{ELHI},\mathcal{ELI},\mathcal{ELH},\mathcal{EL}\}$, its classical counterpart $\Kmc^\cl$ is an $\Lmc_\bot$ KB. Indeed, from the definition of ${\mathcal{ELHI}_\neg}^\four_\triangle$, $\Kmc$ is in NNF and does not contain $\triangle$, so $\cdot^\cl$ only adds superscript $+$ on all concept names but those that occur under $\neg$ in inclusions of the form $\Amsf\sqsubseteq\neg \Bmsf$, which become $\Amsf^+\sqsubseteq\Bmsf^-$. It follows that ${\Lmc_\neg}^\four_\triangle$ has the same complexity as $\Lmc_\bot$. 

Note however that Proposition~\ref{prop:subsumptionreduction} does not hold for ${\mathcal{ELHI}_\neg}^\four_\triangle$. Indeed, as already noted by \citeauthor{MaierMaHitzler2013} \shortcite{MaierMaHitzler2013},
material and strong inclusion require non-Horn concept inclusions, with negation on the left-hand side. 

\section{Queries With Exact Truth Values\label{sec:4queries}}
Before introducing our novel approach to querying four-valued DL KBs, let us recall the query language and semantics considered by \citeauthor{ZhouHuangQiMaHuangQu2012} \shortcite{ZhouHuangQiMaHuangQu2012}.

\begin{definition}\label{def:CQ}
Let $\Var$ be a set of variables disjoint from $\IN$ and $\Term=\Var\cup\IN$. A~\emph{conjunctive query} (CQ) has the form $\qmbf\coloneqq\exists y_1\ldots y_m : \varphi$ where $y_1,\ldots, y_m\in\Var$ and $\varphi$ is a conjunction of atoms of the form $\mathsf{R}(t,t')$ or $\Amsf(t)$ with $t,t'\in\Term$, $\Rmsf\in\RN$ and $\Amsf\in\CN$. 
A CQ $\qmbf$ is \emph{Boolean} (BCQ) if no variable occurs in it freely. 

A~KB $\Kmc$~\emph{$\four$-entails} a~BCQ~$\qmbf$ ($\Kmc\models_\four\qmbf$) if for every $\four$-model $\Imc=\langle\Delta^\Imc,\cdot^{\Imc_\pmsf},\cdot^{\Imc_\nmsf}\rangle$ of $\Kmc$, there is a~match $\pi:\Term\mapsto\Delta^\Imc$ such that for every $c\in\IN$, $\pi(c)=c^\Imc$, and for every $\Rmsf(t_1,t_2)$ (resp.\ $\Amsf(t)$) that occurs in $\qmbf$, $(\pi(t_1),\pi(t_2))\in\Rmsf^\Imc$ (resp.\ $\pi(t)\in\Amsf^{\Imc_\pmsf}$). 
\end{definition}
We make an important observation (not explicit in \cite{ZhouHuangQiMaHuangQu2012}), namely, that in the case of Horn DLs, answering CQs under paraconsistent semantics amounts to answering them classically over the `positive' part of the KB obtained by dropping the %(weak) 
weak 
disjointness axioms. 
Recall that a classical, two-valued, KB $\Kmc$ entails a BCQ $\qmbf$, denoted $\Kmc\models\qmbf$, iff there is a match for $\qmbf$ in every model 
of $\Kmc$. 

\begin{proposition}\label{prop:hornsameasdropneg}
If $\Kmc$ is an~$\ELHInegtrianglefour$ KB and $\Kmc^+$ denotes the $\mathcal{ELHI}$ KB obtained from $\Kmc$ by dropping all inclusions of the form $\Amsf\sqsubseteq \neg \Bmsf$, then for every BCQ $\qmbf$, $\Kmc\models_\four\qmbf$ iff $\Kmc^+\models\qmbf$.
\end{proposition}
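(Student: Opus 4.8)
The plan is to show that, modulo taking positive extensions, the $\four$-models of $\Kmc$ and the classical models of $\Kmc^+$ coincide. Two facts make this work: (i) internal inclusion $\Cmsf\sqsubseteq\Dmsf$ only constrains positive extensions ($\Cmsf^{\Imc_\pmsf}\subseteq\Dmsf^{\Imc_\pmsf}$), and (ii) whether a BCQ has a match in an interpretation depends only on the role extensions and the positive concept extensions. So a model correspondence preserving these data immediately yields both implications. I will prove the two directions via explicit projection/lifting constructions.

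($\Leftarrow$) Assume $\Kmc^+\models\qmbf$ and let $\Imc$ be an arbitrary $\four$-model of $\Kmc$. Define its classical \emph{positive projection} $\Jmc$ by $\Delta^\Jmc=\Delta^\Imc$, $a^\Jmc=a^\Imc$, $\Rmsf^\Jmc=\Rmsf^\Imc$, and $\Amsf^\Jmc=\Amsf^{\Imc_\pmsf}$. The first step is to check $\Jmc\models\Kmc^+$. Since every axiom of $\Kmc^+$ is an axiom of $\Kmc$, it suffices to observe, by a short case analysis on the normal-form shapes of $\ELHInegtrianglefour$ axioms, that $\Cmsf^{\Imc_\pmsf}=\Cmsf^\Jmc$ for every concept $\Cmsf$ occurring in $\Kmc^+$: this is immediate for $\Amsf\in\CN$, for $\top$, for $\Cmsf\sqcap\Dmsf$, and for $\bot$ (both sides empty, using $\bot^{\Imc_\pmsf}=\top^{\Imc_\nmsf}=\emptyset$), and for $\exists\Smsf.\Bmsf$ one unfolds $\exists\Smsf.\Bmsf\coloneqq\neg\forall\Smsf.\neg\Bmsf$ to get $(\exists\Smsf.\Bmsf)^{\Imc_\pmsf}=\{x\mid\exists y\,((x,y)\in\Smsf^\Imc\ \&\ y\in\Bmsf^{\Imc_\pmsf})\}=(\exists\Smsf.\Bmsf)^\Jmc$. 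Hence each $\Imc\models_\four\Cmsf\sqsubseteq\Dmsf$ becomes $\Jmc\models\Cmsf\sqsubseteq\Dmsf$, while role inclusions and assertions transfer trivially, so $\Jmc\models\Kmc^+$. The second step is immediate: a match of $\qmbf$ in $\Jmc$ (which exists by assumption) is, verbatim, a $\four$-match of $\qmbf$ in $\Imc$, since it uses only $a^\Jmc=a^\Imc$, $\Rmsf^\Jmc=\Rmsf^\Imc$, and $\Amsf^\Jmc=\Amsf^{\Imc_\pmsf}$. As $\Imc$ was arbitrary, $\Kmc\models_\four\qmbf$.

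($\Rightarrow$) Assume $\Kmc\models_\four\qmbf$ and let $\Jmc$ be an arbitrary classical model of $\Kmc^+$. Lift it to a $\four$-interpretation $\Imc$ with $\Delta^\Imc=\Delta^\Jmc$, $a^\Imc=a^\Jmc$, $\Rmsf^\Imc=\Rmsf^\Jmc$, $\Amsf^{\Imc_\pmsf}=\Amsf^\Jmc$, and — the one real idea — $\Amsf^{\Imc_\nmsf}=\Delta^\Imc$ for \emph{every} $\Amsf\in\CN$. Since the positive part of $\Imc$ agrees with $\Jmc$ on all concepts occurring in $\Kmc^+$ (same case analysis as above), $\Imc$ satisfies every axiom and assertion of $\Kmc^+$; and the dropped axioms $\Amsf\sqsubseteq\neg\Bmsf$ hold too, because $(\neg\Bmsf)^{\Imc_\pmsf}=\Bmsf^{\Imc_\nmsf}=\Delta^\Imc\supseteq\Amsf^{\Imc_\pmsf}$. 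Thus $\Imc\models_\four\Kmc$, so $\qmbf$ has a $\four$-match in $\Imc$, which is once more literally a classical match in $\Jmc$. As $\Jmc$ was arbitrary, $\Kmc^+\models\qmbf$.

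There is no genuinely hard step; the only point requiring care is the routine case analysis establishing $\Cmsf^{\Imc_\pmsf}=\Cmsf^\Jmc$ for each permitted concept shape — in particular handling the constants $\top$ and $\bot$ and unfolding $\exists\Smsf.\Bmsf$ via its definition in terms of $\neg\forall\Smsf.\neg\Bmsf$ — together with noticing that setting $\Amsf^{\Imc_\nmsf}=\Delta^\Imc$ automatically validates every weak-disjointness axiom $\Amsf\sqsubseteq\neg\Bmsf$ while leaving all positive extensions untouched, which is exactly why those axioms can be dropped without affecting BCQ entailment.
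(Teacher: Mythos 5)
Your proof is correct and takes essentially the same route as the paper: one direction projects a $\four$-model onto its positive part to obtain a classical model of $\Kmc^+$, and the other lifts a classical model of $\Kmc^+$ to a $\four$-model of $\Kmc$ by setting every $\Amsf^{\Imc_\nmsf}=\Delta^\Imc$, which validates the dropped axioms $\Amsf\sqsubseteq\neg\Bmsf$. The paper merely phrases one direction contrapositively and leaves the case analysis on concept shapes implicit, which you spell out; the content is identical.
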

\begin{proof}
Assume that $\Kmc\models_\four\qmbf$. 
 If $\Kmc^+$ has no classical model, then $\Kmc^+\models\qmbf$. Otherwise, 
 let $\Imc=\langle\Delta^\Imc,\cdot^\Imc\rangle$ be a model of $\Kmc^+$. 
 Define $\Jmc=\langle\Delta^\Imc,\cdot^{\Imc},\cdot^{\Imc_\nmsf}\rangle$ with $\Amsf^{\Imc_\nmsf}=\Delta^\Imc$ for every $\Amsf\in\CN$. 
 Since $\neg$ only occurs in inclusions of the form $\Amsf\sqsubseteq \neg \Bmsf$ in $\Kmc$, it is easy to check that $\Jmc\models_\four\Kmc$. 
 It follows that $\Jmc\models_\four\qmbf$, which implies the existence of a match for $\qmbf$ in $\Imc$ by construction of $\Jmc$. Hence $\Kmc^+\models\qmbf$. 

 In the other direction, if $\Kmc\not\models_\four\qmbf$, there is a $\four$-model $\Jmc=\langle\Delta^\Jmc,\cdot^{\Jmc_\pmsf},\cdot^{\Jmc_\nmsf}\rangle$ of $\Kmc$ such that $\Jmc\not\models_\four\qmbf$. Let $\Imc=\langle\Delta^\Jmc,\cdot^{\Jmc_\pmsf}\rangle$. 
 Again, it is easy to check that $\Imc\models\Kmc^+$. Thus $\Imc$ is a model of $\Kmc^+$ such that there is no match for $\qmbf$ in $\Imc$, so $\Kmc^+\not\models\qmbf$. 
\end{proof}
The fact that paraconsistent query answering in Horn DLs basically amounts to ignoring possible sources of contradiction provides strong 
motivation for exploring a more expressive query language that better exploits the paraconsistent semantics. 
We propose such a language by introducing four \emph{value operators} corresponding to Belnapian values. 
\begin{definition}[Queries with values]\label{def:CQV}
A~\emph{conjunctive query with values} (CQV) is a CQ whose atoms are of the form $\mathsf{R}(t,t')$, $\Amsf(t)$ or $\Xmbf(\Amsf(t))$ with %$t,t'\in\Term$ and 
$\Xmbf\in\{\true,\both,\neither,\false\}$. 
A~\emph{Bo\-ole\-an} CQV (BCQV) has no free variable. 
\end{definition}
We illustrate next the intuitive use of value operators.
\begin{example}\label{example:teachers2}
Let $\Kmc'_\Umsf=\langle\Tmc_\Umsf\cup\Tmc',\Amc_\Umsf\cup\Amc'\rangle$ extend $\Kmc_\Umsf$ from~Example~\ref{example:teachers1}. The additional TBox axioms state that one should not be a teaching assistant ($\mathsf{TA}$) and a professor ($\mathsf{Prf}$), that a course should not be a graduate course ($\mathsf{Gr}$) and an obligatory course ($\mathsf{Obl}$) and that every professor teaches some graduate course. Additional ABox assertions give information about the courses (formal verification $\mathbf{fv}$, algorithms $\mathbf{alg}$, logic $\mathbf{log}$, and automata theory $\mathbf{at}$) taught by four persons as well as the position they hold.
\begin{align*}%\label{equ:teachers2ontology}
\Tmc'&=\left\{\begin{matrix}\mathsf{TA}\!\sqsubseteq\!\neg\mathsf{Prf},&\mathsf{Prf}\!\sqsubseteq\!\neg\mathsf{TA},&\mathsf{Prf}\!\sqsubseteq\!\exists\mathsf{teaches}.\mathsf{Gr},\\
\mathsf{Gr}\sqsubseteq\neg\mathsf{Obl},&\mathsf{Obl}\sqsubseteq\neg\mathsf{Gr}\end{matrix}\right\}\nonumber\\
\Amc'&=\left\{\begin{matrix}\mathsf{teaches}(\mathbf{ann},\mathbf{fv}),&\mathsf{teaches}(\mathbf{ann},\mathbf{alg}),\\\mathsf{teaches}(\mathbf{ann},\mathbf{log}),&\mathsf{teaches}(\mathbf{bea},\mathbf{log}),\\\mathsf{teaches}(\mathbf{bea},\!\mathbf{alg}),&\mathsf{Obl}(\mathbf{log}),\mathsf{Gr}(\mathbf{log}),\\\mathsf{Obl}(\mathbf{alg}),\mathsf{Gr}(\mathbf{fv}),&\mathsf{teaches}(\mathbf{claire},\mathbf{at})\\\mathsf{TA}(\mathbf{bea}),&\mathsf{TA}(\mathbf{claire}),\\\mathsf{Asc}(\mathbf{diane})\end{matrix}\right\}
\end{align*}
Now, consider the following queries:
\begin{align*}
\qmbf_1&\coloneqq\mathsf{teaches}(x,y)\wedge\true(\mathsf{Gr}(y))\\
\qmbf_2&\coloneqq\mathsf{teaches}(x,y)\wedge\neither(\mathsf{Gr}(y))\wedge\neither(\mathsf{Obl}(y))\\
\qmbf_3&\coloneqq\mathsf{teaches}(x,y)\wedge\true(\mathsf{TA}(x))\wedge\both(\mathsf{Obl}(y))\\
\qmbf_\four&\coloneqq\exists y:\true(\mathsf{Asc}(x))\wedge\true(\mathsf{Gr}(y))\wedge\mathsf{teaches}(x,y)
\end{align*}
Intuitively, $\qmbf_1$, $\qmbf_2$, and $\qmbf_3$ look for pairs of persons and courses they teach such that: the course is a graduate course ($\qmbf_1$), the kind of course is not specified ($\qmbf_2$), or the person is a teaching assistant and there is contradictory information about the course being obligatory. One can imagine using $\qmbf_2$ and $\qmbf_3$ to curate the university course database: $\qmbf_2$ will find courses for which some information is missing and $\qmbf_3$ (or a simpler version $\both(\mathsf{Obl}(y))$) will find courses for which contradictory information is provided. On the other hand, $\qmbf_1$ will provide answers for which the kind of the course is not contradicted, hence that we presumably can trust even from the uncurated database.

We thus expect $(\mathbf{ann},\mathbf{fv})$ to be the unique answer for $\qmbf_1$, since $\mathbf{alg}$ is not said to be a graduate course and $\mathbf{log}$ is also registered as an obligatory course, which contradicts that it is a graduate course. Regarding $\qmbf_2$, we expect the unique answer $(\mathbf{claire},\mathbf{at})$, since automata theory is the only course about which it is not specified whether it is graduate or obligatory. Finally, we expect that $(\mathbf{bea},\mathbf{log})$ is the unique answer for $\qmbf_3$. Indeed, Bea is the only teaching assistant who teaches logic since we have $\Kmc_\Umsf'\models_\four\neg\mathsf{TA}(\mathbf{ann})$ using the assertion from Example~\ref{example:teachers1} that Ann is an associate professor. Regarding $\qmbf_4$, which asks for associate professors that teach some graduate course, we expect that $\mathbf{diane}$ is the only answer. Indeed, Diane is the only one who is undoubtedly an associate professor (recall from Example~\ref{example:teachers1} that Ann is a~head of a~chair which means that she is supposed to be a~full professor even though she is listed as an associate). Moreover, although no course taught by Diane is mentioned in the ABox, we know that associate professors should teach at least one graduate course. As this is not contradicted, $\mathbf{diane}$ should be the only answer to $\qmbf_\four$.
\end{example}
We now give the formal semantics of CQVs.
\begin{definition}[Atom sets]\label{def:queryatoms}
Let $\At(\qmbf)$ be the set of all atoms occurring in $\qmbf$ and define for $\Xmbf,\mathbf{Y}\in\{\true,\both,\neither,\false\}$:
\begin{align*}
\At^\Xmbf(\qmbf)&=\{\Amsf(t)\mid\Xmbf(\Amsf(t))\in\At(\qmbf)\}\\\
\At^{\mathbf{XY}}(\qmbf)&=\At^\Xmbf(\qmbf)\cup\At^\mathbf{Y}(\qmbf)\\
\At^+(\qmbf)&=\{\Amsf(t)\mid\Amsf(t)\in\At(\qmbf)\}\cup\At^{\true\both}(\qmbf).
\end{align*}
\end{definition}
\begin{definition}[Answers]\label{def:querysanswer}
A~KB $\Kmc$~\emph{$\four$-entails} a~BCQV~$\qmbf$ ($\Kmc\models_\four\qmbf$) if the following conditions hold.
\begin{enumerate}[noitemsep,topsep=1pt]
\item\label{item:classicalconditionquery} For every $\four$-model $\Imc=\langle\Delta^\Imc,\cdot^{\Imc_\pmsf},\cdot^{\Imc_\nmsf}\rangle$ of $\Kmc$, there is a~match $\pi:\Term\mapsto\Delta^\Imc$ such that for every $c\in\IN$, $\pi(c)=c^\Imc$, and 
\begin{itemize}[noitemsep,topsep=1pt]
\item $(\pi(t_1),\pi(t_2))\in\Rmsf^\Imc$ for every $\Rmsf(t_1,t_2)\in\At(\qmbf)$;
\item $\pi(t)\in\Amsf^{\Imc_\pmsf}$ for every $\Amsf(t)\in\At^+(\qmbf)$;
\item $\pi(t)\in\Amsf^{\Imc_\nmsf}$ for every $\Amsf(t)\in\At^{\both\false}(\qmbf)$.
\end{itemize} 
\item\label{item:valueconditionquery} There exists a~$\four$-model $\Imc=\langle\Delta^\Imc,\cdot^{\Imc_\pmsf},\cdot^{\Imc_\nmsf}\rangle$ of $\Kmc$ and a~match $\pi$ as required above which is additionally s.t.\ %such that 
\begin{itemize}[noitemsep,topsep=1pt]
\item $\pi(t)\notin\Amsf^{\Imc_\nmsf}$ for every $\Amsf(t)\in\At^{\true\neither}(\qmbf)$; 
\item $\pi(t)\notin \Amsf^{\Imc_\pmsf}$ for every $\Amsf(t)\in\At^{\false\neither}(\qmbf)$.
\end{itemize}
\end{enumerate}
We say that $\vec{a}$ is a~four-valued paraconsistent answer to a~CQV $\qmbf(\vec{x})$ with free variables $\vec{x}$ over $\Kmc$ ($\vec{a}\in\ans_\four(\qmbf(\vec{x}),\Kmc)$) if $\Kmc\models_\four\qmbf(\vec{a}) $ where $\qmbf(\vec{a})$ is the Boolean query obtained by replacing the variables from $\vec{x}$ with the constants from $\vec{a}$.
\end{definition}
When $\qmbf$ is just a CQ, the semantics coincides with the one given in Definition~\ref{def:CQ}. 
Indeed, in this case $\At^{\both\false}(\qmbf)$, $\At^{\true\neither}(\qmbf)$ and $\At^{\false\neither}(\qmbf)$ are empty so the condition reduces to item~\ref{item:classicalconditionquery} restricted to its first two points.

One can interpret value operators as follows: $\Kmc\models_\four\true(\Amsf(a))$ means that there is sufficient evidence to conclude $\Amsf(a)$ from $\Kmc$ and no evidence for $\neg\Amsf(a)$; dually, if $\Kmc\models_\four\false(\Amsf(a))$, then we can conclude $\neg\Amsf(a)$ from $\Kmc$ but cannot derive $\Amsf(a)$; $\Kmc\models_\four\both(\Amsf(a))$ means that the evidence regarding $\Amsf(a)$ is \emph{contradictory}; finally, if $\Kmc\models_\four\neither(\Amsf(a))$, then we do not have sufficient information to conclude that $\Amsf(a)$ is true nor to conclude that it is false. 
Intuitively condition~\ref{item:valueconditionquery} in Definition~\ref{def:querysanswer} considers the “negative support” of the query atoms. This allows for distinction between $\Amsf(a)$ being \emph{exactly true} and \emph{at least true} (i.e., true and \emph{maybe false}) and likewise between \emph{exactly false} and \emph{at least false}. We will see in Example~\ref{example:teachers3} that it is impossible to achieve without value operators.

A~straightforward check of the KB and queries in Example~\ref{example:teachers2} now gives the expected answers:
\begin{align*}
\ans_\four(\qmbf_1(x,y),\Kmc'_\Umsf)&=\{(\mathbf{ann},\mathbf{fv})\}\\
\ans_\four(\qmbf_2(x,y),\Kmc'_\Umsf)&=\{(\mathbf{claire},\mathbf{at})\}\\
\ans_\four(\qmbf_3(x,y),\Kmc'_\Umsf)&=\{(\mathbf{bea},\mathbf{log})\}\\
\ans_\four(\qmbf_4(x),\Kmc'_\Umsf)&=\{(\mathbf{diane})\}
\end{align*}
This example illustrates that value operators allow for a~compact and intuitive representation of queries such as ‘a~person who teaches an unspecified course’, or ‘a~person who teaches a~graduate-level course’ (meaning a~course that is labelled as a~graduate-level one \emph{without contradiction}).
\begin{remark}
When used over existentially quantified variables, the semantics of the value operators remains quite lax. 
 Consider for example $\Tmc=\{\Bmsf\sqsubseteq\neg\Amsf\}$, $\Amc=\{\Rmsf(a,b), \Amsf(b),\Bmsf(b)\}$ and $\qmbf=\exists x : \Rmsf(a,x)\wedge \true(\Amsf(x))$. It holds that $\Kmc\models_\four \qmbf$ because every $\four$-model of $\Kmc$ is such that $(a^\Imc,b^\Imc)\in \Rmsf^\Imc$ and $b^\Imc\in\Amsf^{\Imc_\pmsf}$, satisfying item~\ref{item:classicalconditionquery} of Definition~\ref{def:querysanswer}, and there exists a $\four$-model $\Jmc$ of $\Kmc$ such that $(a^\Imc,x)\in \Rmsf^\Jmc$ and $x\notin\Amsf^{\Jmc_\nmsf}$ for some $x\neq b^\Jmc$, satisfying item~\ref{item:valueconditionquery}. Value operators are thus intended to be used preferentially on answer variables or constants. 
\end{remark}

We conclude by briefly discussing alternative semantics we could consider for CQVs and why they are not satisfactory. 
First, if we drop item \ref{item:valueconditionquery} from Definition~\ref{def:querysanswer}, then the semantics of the value operators is overly permissive. For example, $\true(\Amsf(a))$, $\false(\Amsf(a))$, $\both(\Amsf(a))$ and $\neither(\Amsf(a))$ would all be entailed from $\langle\{\Bmsf\sqsubseteq \neg \Amsf\},\{\Amsf(a), \Bmsf(a)\}\rangle$. 
If instead we adopt a naive “certain answers semantics” by considering that $\true(\Amsf(a))$ (resp.\ $\false(\Amsf(a))$, $\both(\Amsf(a))$, $\neither(\Amsf(a))$) is entailed if every model of the KB is such that $\Amsf(a)$ is exactly true (resp.\ exactly false, both true and false, neither true nor false), then the semantics of the value operators is too strict. For example, $\true(\Amsf(a))$ would then not be entailed by $\{\Amsf(a)\}$ and an empty TBox because there are $\four$-models of this KB such that $a^\Imc$ is both in $\Amsf^{\Imc_\pmsf}$ and $\Amsf^{\Imc_\nmsf}$.

\subsubsection*{Relationship to Classical BCQ Entailment} 
We now briefly show how BCQV entailment from a four-valued KB and classical BCQ entailment can be related. Given an~$\ALCHItrianglefour$ KB $\Kmc$ and a BCQV $\qmbf$ such that the only value operators in $\qmbf$ are $\true$ and $\false$, let $\Kmc^\flat$ and $\qmbf^\flat$ be the results of removing all occurrences of $\triangle$ in $\Kmc$ and replacing every $\true(\Amsf(t))$ and $\false(\Amsf(t))$ in $\qmbf$ by $\Amsf(t)$ and $\neg \Amsf(t)$ respectively. 
The query semantics is sound in the following sense. 
\begin{restatable}{proposition}{propsoundness}\label{prop:soundness}
$\Kmc\models_\four\qmbf$ implies $\Kmc^\flat\models\qmbf^\flat$. 
\end{restatable}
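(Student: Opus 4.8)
The plan is to establish the implication directly: I will show that every classical model of $\Kmc^\flat$ admits a match for $\qmbf^\flat$. Fix a classical model $\Imc$ of $\Kmc^\flat$ and build from it the \emph{classical} $\four$-interpretation $\Jmc$ that keeps the domain $\Delta^\Imc$ and the interpretation of individual and role names unchanged, and sets $\Amsf^{\Jmc_\pmsf}=\Amsf^\Imc$ and $\Amsf^{\Jmc_\nmsf}=\Delta^\Imc\setminus\Amsf^\Imc$ for every $\Amsf\in\CN$. Writing $\Cmsf^\flat$ for the concept obtained by deleting all occurrences of $\triangle$ from $\Cmsf$, a routine induction on $\Cmsf$ using the defining clauses for the $\four$-interpretation of concepts (including~\eqref{equ:ALCHI4propositional} and the clauses for $\forall$, together with the abbreviations for $\sqcup$, $\exists$, $\bot$) shows that $\Cmsf^{\Jmc_\pmsf}=(\Cmsf^\flat)^\Imc$ and $\Cmsf^{\Jmc_\nmsf}=\Delta^\Imc\setminus(\Cmsf^\flat)^\Imc$; the $\triangle$ case goes through because $(\triangle\Cmsf)^{\Jmc_\pmsf}=\Cmsf^{\Jmc_\pmsf}$ and $(\triangle\Cmsf)^{\Jmc_\nmsf}=\Delta^\Imc\setminus\Cmsf^{\Jmc_\pmsf}$ agree with $(\triangle\Cmsf)^\flat=\Cmsf^\flat$ by the induction hypothesis, and no NNF assumption on $\Kmc$ is needed. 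It follows that $\Jmc\models_\four\Kmc$: for a concept inclusion $\Cmsf\sqsubseteq\Dmsf\in\Tmc$ we get $\Cmsf^{\Jmc_\pmsf}=(\Cmsf^\flat)^\Imc\subseteq(\Dmsf^\flat)^\Imc=\Dmsf^{\Jmc_\pmsf}$ since $\Imc\models(\Cmsf\sqsubseteq\Dmsf)^\flat$, and role inclusions and assertions are immediate as $\cdot^\Jmc$ coincides with $\cdot^\Imc$ on roles and individuals and $\Amsf^{\Jmc_\pmsf}=\Amsf^\Imc$.

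Next I would apply item~\ref{item:classicalconditionquery} of Definition~\ref{def:querysanswer} to the $\four$-model $\Jmc$ (item~\ref{item:valueconditionquery} is irrelevant for soundness toward classical entailment), obtaining a match $\pi$ with $(\pi(t_1),\pi(t_2))\in\Rmsf^\Jmc$ for each role atom, $\pi(t)\in\Amsf^{\Jmc_\pmsf}$ for each $\Amsf(t)\in\At^+(\qmbf)$, and $\pi(t)\in\Amsf^{\Jmc_\nmsf}$ for each $\Amsf(t)\in\At^{\both\false}(\qmbf)$. Since $\qmbf$ uses only the operators $\true$ and $\false$, the sets $\At^{\both}(\qmbf)$ and $\At^{\neither}(\qmbf)$ are empty, so $\At^+(\qmbf)$ consists exactly of the plain concept atoms of $\qmbf$ together with $\At^{\true}(\qmbf)$, while $\At^{\both\false}(\qmbf)=\At^{\false}(\qmbf)$. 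Translating via $\Rmsf^\Jmc=\Rmsf^\Imc$, $\Amsf^{\Jmc_\pmsf}=\Amsf^\Imc$, and $\Amsf^{\Jmc_\nmsf}=\Delta^\Imc\setminus\Amsf^\Imc$, the conditions on $\pi$ become: in $\Imc$, $\pi$ satisfies every role atom of $\qmbf^\flat$, every positive atom $\Amsf(t)$ of $\qmbf^\flat$ (coming either from a plain atom of $\qmbf$ or from some $\true(\Amsf(t))$), and every negated atom $\neg\Amsf(t)$ of $\qmbf^\flat$ (coming from some $\false(\Amsf(t))$). Hence $\pi$ is a match for $\qmbf^\flat$ in $\Imc$, and since $\Imc$ was an arbitrary model of $\Kmc^\flat$, we conclude $\Kmc^\flat\models\qmbf^\flat$.

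The only point requiring care is the bookkeeping in the second step: verifying that, with no $\both$ or $\neither$ operators present, the positive-support set $\At^+(\qmbf)$ and negative-support set $\At^{\both\false}(\qmbf)$ required by item~\ref{item:classicalconditionquery} coincide precisely with the positive and negated atoms of $\qmbf^\flat$. The supporting induction in the first step — that a classical $\four$-interpretation interprets every concept two-valuedly and in accordance with its $\triangle$-free reading — is standard and parallels reasoning already used for Proposition~\ref{prop:triangleembedding1} and Lemma~\ref{lem:reduction-classical-paraconsistent}, so I do not anticipate a genuine obstacle here.
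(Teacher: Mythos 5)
Your proof is correct and is essentially the paper's argument: the paper proves the contrapositive using exactly the same construction (the $\four$-valued counterpart $\Amsf^{\Imc^\four_\pmsf}=\Amsf^\Imc$, $\Amsf^{\Imc^\four_\nmsf}=\Delta^\Imc\setminus\Amsf^\Imc$, isolated as Lemma~\ref{lem:four-counterparts}) and the same bookkeeping between $\At^+(\qmbf)$, $\At^{\both\false}(\qmbf)$ and the positive/negated atoms of $\qmbf^\flat$. Presenting it directly rather than contrapositively is a cosmetic difference only.
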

The converse holds in a restricted setting. 
\begin{restatable}{proposition}{propcompleteness}\label{prop:completeness}
If $\Kmc$ is a classically satisfiable~$\ELHInegtrianglefour$ KB and 
$\false$ does not occur in $\qmbf$, 
then $\Kmc^\flat\models\qmbf^\flat$ implies $\Kmc\models_\four\qmbf$. 
\end{restatable}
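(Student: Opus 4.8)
The plan is to establish the two conditions of Definition~\ref{def:querysanswer} in turn, using a suitable universal model. First I record the collapses forced by the hypotheses: since $\Kmc$ is an $\ELHInegtrianglefour$ KB it contains no $\triangle$, so $\Kmc^\flat$ is merely $\Kmc$ read classically, i.e.\ (rewriting each $\Amsf\sqsubseteq\neg\Bmsf$ as the classically equivalent $\Amsf\sqcap\Bmsf\sqsubseteq\bot$) a classically satisfiable $\ELHIbot$ KB; and as $\false$ does not occur in $\qmbf$ while by assumption $\qmbf$'s only value operators are $\true$ and $\false$, the query $\qmbf^\flat$ is just $\qmbf$ with every $\true$ erased---a plain CQ---so that $\At^+(\qmbf)$ is precisely the set of concept atoms of $\qmbf^\flat$, $\At^{\both\false}(\qmbf)=\At^{\false\neither}(\qmbf)=\emptyset$, and $\At^{\true\neither}(\qmbf)=\At^\true(\qmbf)$. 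Let $\Kmc^+$ be as in Proposition~\ref{prop:hornsameasdropneg} (drop the axioms $\Amsf\sqsubseteq\neg\Bmsf$); it is a satisfiable $\ELHIbot$ KB, hence admits a universal model $\Imc^*$ such that $\Kmc^+\models\psi$ iff $\psi$ matches in $\Imc^*$, for every BCQ $\psi$.

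The crux is that $\Imc^*$ is itself a classical model of $\Kmc^\flat$. We have $\Imc^*\models\Kmc^+$, and the only axioms of $\Kmc^\flat$ not already in $\Kmc^+$ are the disjointnesses $\Amsf\sqcap\Bmsf\sqsubseteq\bot$ arising from $\Amsf\sqsubseteq\neg\Bmsf\in\Kmc$. Since $\Imc^*$ maps homomorphically into every classical model of $\Kmc^+$---hence into every classical model of $\Kmc^\flat$, of which there is at least one by the classical satisfiability of $\Kmc$---and classical models of $\Kmc^\flat$ satisfy all these disjointnesses, we get $\Amsf^{\Imc^*}\cap\Bmsf^{\Imc^*}=\emptyset$ whenever $\Amsf\sqsubseteq\neg\Bmsf\in\Kmc$, so $\Imc^*\models\Kmc^\flat$. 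From $\Kmc^\flat\models\qmbf^\flat$ we then obtain a match $\pi$ of $\qmbf^\flat$ in $\Imc^*$, whence $\Kmc^+\models\qmbf^\flat$ by universality and so $\Kmc\models_\four\qmbf^\flat$ (in the sense of Definition~\ref{def:CQ}) by Proposition~\ref{prop:hornsameasdropneg}. As $\At^+(\qmbf)$ coincides with the concept atoms of $\qmbf^\flat$ and $\At^{\both\false}(\qmbf)=\emptyset$, this is exactly condition~\ref{item:classicalconditionquery}.

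For condition~\ref{item:valueconditionquery}, I build a single $\four$-model with minimal negative extensions. Let $\Jmc$ agree with $\Imc^*$ on the domain, individuals, roles, and positive concept extensions, and set $\Amsf^{\Jmc_\nmsf}=\bigcup\{\Bmsf^{\Imc^*}\mid\Bmsf\sqsubseteq\neg\Amsf\in\Kmc\}$ for each $\Amsf\in\CN$. A routine induction via~\eqref{equ:ALCHI4propositional} shows that the positive extension in $\Jmc$ of every $\ELHInegtrianglefour$ concept equals its classical extension in $\Imc^*$; hence the positive-only axioms of $\Kmc$ (role inclusions, $\Amsf\sqsubseteq\exists\Smsf.\Bmsf$, $\exists\Smsf.\Amsf\sqsubseteq\Cmsf$, $\Amsf\sqcap\Bmsf\sqsubseteq\Cmsf$) and the ABox hold in $\Jmc$ because $\Imc^*\models\Kmc^+$, while each $\Amsf\sqsubseteq\neg\Bmsf$ holds in $\Jmc$ by construction of $\Bmsf^{\Jmc_\nmsf}$, so $\Jmc\models_\four\Kmc$. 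The match $\pi$ from the previous paragraph sends all role atoms and all concept atoms of $\qmbf^\flat$ (in particular the atoms of $\At^+(\qmbf)$) appropriately into $\Jmc$, so it meets the requirements of condition~\ref{item:classicalconditionquery} relative to $\Jmc$; it remains to check that $\pi(t)\notin\Amsf^{\Jmc_\nmsf}$ for every $\Amsf(t)\in\At^\true(\qmbf)$. Were this to fail, we would have $\pi(t)\in\Bmsf^{\Imc^*}$ for some $\Bmsf\sqsubseteq\neg\Amsf\in\Kmc$, while also $\pi(t)\in\Amsf^{\Imc^*}$ because $\Amsf(t)$ is a concept atom of $\qmbf^\flat$, contradicting $\Amsf^{\Imc^*}\cap\Bmsf^{\Imc^*}=\emptyset$ established above. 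Thus $\Jmc$ and $\pi$ witness condition~\ref{item:valueconditionquery}, and $\Kmc\models_\four\qmbf$ follows.

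I expect the main obstacle to be the step showing that the universal model of $\Kmc^+$ already respects the dropped disjointness axioms---this is exactly where classical satisfiability of $\Kmc$ enters, and without it the minimal negative extensions of $\Jmc$ could swallow the image of a $\true$-atom, breaking condition~\ref{item:valueconditionquery} (indeed the proposition is false when $\Kmc$ is classically unsatisfiable). The remaining work---the notational simplifications when only $\true$ occurs, the coincidence of positive and classical concept extensions for $\triangle$-free NNF concepts, and the universal-model characterisation of BCQ entailment in $\ELHIbot$---is routine.
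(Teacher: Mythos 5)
Your proof is correct, and it shares the overall skeleton of the paper's argument (split into the two conditions of Definition~\ref{def:querysanswer}, handle condition~\ref{item:classicalconditionquery} via the universal model of the positive part and its homomorphisms into the positive extensions of $\four$-models, then exhibit an explicit witness for condition~\ref{item:valueconditionquery}), but the witness for condition~\ref{item:valueconditionquery} is genuinely different. The paper takes an \emph{arbitrary} classical model $\Jmc$ of $\Kmc^\flat$ and passes to its $\four$-counterpart with \emph{complemented} negative extensions $\Amsf^{\Jmc^\four_\nmsf}=\Delta^\Jmc\setminus\Amsf^\Jmc$ (Lemma~\ref{lem:four-counterparts}); there positive and negative extensions are disjoint for every concept name, so the check that $\true$-atoms avoid the negative extensions is immediate and no canonical model is needed for this half. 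You instead equip the universal model $\Imc^*$ with the \emph{minimal} negative extensions generated by the axioms $\Bmsf\sqsubseteq\neg\Amsf$, and then must separately establish the disjointness $\Amsf^{\Imc^*}\cap\Bmsf^{\Imc^*}=\emptyset$ via the homomorphism of $\Imc^*$ into some classical model of $\Kmc^\flat$ --- which is exactly where classical satisfiability enters, a dependency your write-up makes more explicit than the paper's (the paper absorbs it into the unstated fact that $\Jmc_\Kmc\models\Kmc^\flat$). A second, smaller divergence: for condition~\ref{item:classicalconditionquery} you reuse Proposition~\ref{prop:hornsameasdropneg} as a black box, whereas the paper proves a dedicated homomorphism lemma (Lemma~\ref{lemma:hom-universal-model-horn}); both routes need the same chase construction underneath. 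Two cosmetic points to tighten: your inductive claim that positive extensions in $\Jmc$ coincide with classical extensions in $\Imc^*$ should be restricted to the negation-free concepts occurring in the positive axioms (for $\neg\Bmsf$ it is false in general, but you rightly treat the axioms $\Amsf\sqsubseteq\neg\Bmsf$ separately); and the edge case $\Amsf\sqsubseteq\neg\top$, where $\top^{\Jmc_\nmsf}$ is forced to $\emptyset$, is covered only because your disjointness fact then yields $\Amsf^{\Imc^*}=\emptyset$ --- worth a half-sentence.
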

This ensures that when the KB is classically satisfiable, the paraconsistent answers to $\qmbf$ are the same as the classical answers of $\qmbf^\flat$, which is intuitively a desirable property. It does not hold if $\sqcup$ is present, even for assertion entailment, as shown by the following example.
\begin{example}\label{example:consistent}
Let $\Amc=\{\Cmsf(a)\}$ and $\Tmc=\{\Cmsf\sqsubseteq \neg \Bmsf, \Cmsf\sqsubseteq \Amsf\sqcup \Bmsf\}$. $\Kmc$ is consistent and $\Kmc\models \Amsf(a)$. However, $\Kmc\not\models_\four \Amsf(a)$ because of the following $\four$-model of $\Kmc$: $\Amsf^{\Imc_\pmsf}=\Amsf^{\Imc_\nmsf}=\emptyset$, $\Bmsf^{\Imc_\pmsf}=\Bmsf^{\Imc_\nmsf}=\{a^\Imc\}$, $\Cmsf^{\Imc_\pmsf}=\{a^\Imc\}$ and $\Cmsf^{\Imc_\nmsf}=\emptyset$. 
\end{example}
Alternative paraconsistent logics have been proposed to address this arguably counter-intuitive behaviour. For example, \cite{ZhangXiaoLinvandenBussche2014} 
propose a~\emph{strong interpretation} of disjunction (we denote it $\sqcup^\mathsf{qc}$) which allows for the disjunctive syllogism that fails for $\sqcup$. However, it also behaves in a~non-standard manner as $\Amsf(a)\not\models_\four(\Amsf\sqcup^\mathsf{qc}\Bmsf)(a)$. In general, it is unavoidable that paraconsistent logic shows some unexpected behaviour when compared to classical semantics since its basis is to reject some traditional inference principles in order to be able to cope with contradictory information. 
Regarding the second restriction of Proposition~\ref{prop:completeness}, the following example illustrates the issue with $\false$.
\begin{example}
 Let $\Amc=\{\Amsf(a), \Cmsf(a)\}$ and $\Tmc=\{\Amsf\sqsubseteq \exists \Rmsf.\top, \exists \Rmsf.\Bmsf\sqsubseteq \Bmsf, \Bmsf\sqsubseteq \neg \Cmsf, \Cmsf\sqsubseteq \neg \Bmsf\}$, and assume that $\qmbf=\exists x :\Rmsf(a,x)\wedge \false(\Bmsf(x))$, i.e., $\qmbf^\flat=\exists x :\Rmsf(a,x)\wedge \neg \Bmsf(x)$. 
 $\Kmc=\Kmc^\flat$ is consistent and $\Kmc^\flat\models \qmbf^\flat$ but $\Kmc\not\models_\four \qmbf$. 
 Indeed, the following $\four$-interpretation $\Imc$ is such that $\Imc\models_\four\Kmc$ but there is no match for $\qmbf$ in $\Imc$ as required by item \ref{item:classicalconditionquery} of Definition~\ref{def:querysanswer}: 
 $\Rmsf^{\Imc}=\{(a^\Imc, e)\}$, $\Amsf^{\Imc_\pmsf}=\{a^\Imc\}$, $\Amsf^{\Imc_\nmsf}=\emptyset$, $\Bmsf^{\Imc_\pmsf}=\{a^\Imc, e\}$, $\Bmsf^{\Imc_\nmsf}=\{a^\Imc\}$, $\Cmsf^{\Imc_\pmsf}=\{a^\Imc\}$ and $\Cmsf^{\Imc_\nmsf}=\{a^\Imc,e\}$. 
\end{example}

\subsubsection*{Comparison With Other Query Languages} 
We now compare our query language with those proposed in the literature on paraconsistent DLs. As already mentioned, CQVs extend CQs and their semantics is compatible with the one considered by~\citeauthor{ZhouHuangQiMaHuangQu2012} \shortcite{ZhouHuangQiMaHuangQu2012}. \citeauthor{NguyenSzalas2012} \shortcite{NguyenSzalas2012} consider ground queries defined as conjunction of complex assertions of the form $\Cmsf(a)$ (with $\Cmsf$ a potentially complex concept), $\Rmsf(a,b)$, $\neg\Rmsf(a,b)$ and $a\neq b$, interpreted in the expected manner. In particular, $\neg\Rmsf(a,b)$ is entailed from $\Kmc$ if $(a^\Imc,b^\Imc)\in\Rmsf^{\Imc_\nmsf}$ (\citeauthor{NguyenSzalas2012} use four-valued roles). Even if CQVs do not allow directly for the use of $\ALCHItrianglefour$ complex concepts, it is always possible to introduce a concept name $\Amsf$ and add $\Cmsf\equiv\Amsf$ and $\neg\Cmsf\equiv\neg\Amsf$ to the TBox. This will ensure that $\Amsf^{\Imc_\pmsf}=\Cmsf^{\Imc_\pmsf}$ and $\Amsf^{\Imc_\nmsf}=\Cmsf^{\Imc_\nmsf}$.

One can see that $\qmbf_2$ from Example~\ref{example:teachers2} does not have an analogue in the languages of~\cite{ZhouHuangQiMaHuangQu2012} and~\cite{NguyenSzalas2012} since they cannot express that ‘$\Amsf(a)$ \emph{is not true}’ or ‘$\Amsf(a)$ \emph{is not false}’ which is required for the $\neither$ operator. $\both(\Amsf(a))$, on the other hand, can be expressed as $\Amsf(a)\wedge\neg\Amsf(a)$ in the language of~\citeauthor{NguyenSzalas2012}. Note however that this cannot be expressed with the CQs considered by \citeauthor{ZhouHuangQiMaHuangQu2012}, and since they consider DL-Lite ontologies, they cannot either use $\Amsf(a)\wedge\Amsf'(a)$ and a~definition $\neg\Amsf\equiv\Amsf'$ to capture it. 

The inability to express things such as ‘$\Amsf(a)$ \emph{is not true}’ or ‘$\Amsf(a)$ \emph{is not false}’ prevents these query languages from expressing $\true$ and $\false$. The following example illustrates the impact of omitting $\true$ in queries of Example~\ref{example:teachers2}.
\begin{example}\label{example:teachers3}
Consider the following queries
\begin{align*}
\qmbf^\flat_1&\coloneqq\mathsf{teaches}(x,y)\wedge\mathsf{Gr}(y)\\
\qmbf^\flat_4&\coloneqq\exists y:\mathsf{Asc}(x)\wedge\mathsf{Gr}(y)\wedge\mathsf{teaches}(x,y)
\end{align*}
It is clear that $(\mathbf{bea},\mathbf{log})\in\ans_\four(\qmbf^\flat_1(x,y))$ and $\mathbf{ann}\in\ans_\four(\qmbf^\flat_4(x))$. However, it would be problematic as there is an obvious contradiction considering $\mathbf{log}$, whence one cannot be sure whether logic counts as a~graduate or obligatory course. Thus, it might happen that Bea does not teach any graduate courses. Likewise, $\Kmc'_\Umsf$ contains a~contradiction w.r.t.\ Ann's position, whence, it is unclear whether she is still an associate professor or already a~full professor.
\end{example}

\section{Complexity of Query Answering\label{sec:complexity}}
In this section, we establish the complexity of answering CQVs. We do this by constructing a~reduction of CQV answering to answering union of conjunctive queries (UCQs) over classically interpreted knowledge bases.
\begin{definition}\label{def:querytranslation}
Let $\qmbf=\exists\vec{y}:\varphi$ be a~Boolean CQV and let further $\IN_\qmbf=\{c_x\mid x\in\Var\text{ occurs in }\qmbf\}$. Define
\begin{align*}
c_t&=\begin{cases}
t\text{ if }t\in\IN\\
%c_x\text{ if }\exists x\in\Var:x=t
c_t\text{ if }t\in\Var %CB: from reviewer comment: indeed simpler?
\end{cases}
\end{align*}
Using sets of atoms from Definition~\ref{def:queryatoms}, we set
\begin{align*}
\qmbf^+&\coloneqq\exists\vec{y}:\bigwedge\limits_{\Rmsf(t,t')\in\At(\qmbf)}\!\!\!\!\!\!\Rmsf(t,t')\wedge\!\!\!\!\!\!\bigwedge\limits_{\Amsf(t)\in\At^+(\qmbf)}\!\!\!\!\!\!\!\!\!\Amsf^+(t)\wedge\!\!\!\!\!\!\bigwedge\limits_{\Amsf(t)\in\At^{\both\false}(\qmbf)}\!\!\!\!\!\!\!\!\!\Amsf^-(t)\\
\qmbf^\ctr&\coloneqq\bigvee\limits_{\Amsf(t)\in\At^{\true\neither}(\qmbf)}\Amsf^-(c_t)\vee\bigvee\limits_{\Amsf(t)\in\At^{\false\neither}(\qmbf)}\Amsf^+(c_t)\\
\Amc_\qmbf&\coloneqq\{\Rmsf(c_t,c_{t'})\mid\Rmsf(t,t')\in\At(\qmbf^+)\}\cup\\&\hspace{1.5em}\{\Amsf^+(c_t)\mid\Amsf^+
(t)\in\At(\qmbf^+)\}\cup\\&\hspace{1.5em}\{\Amsf^-(c_t)\mid\Amsf^-(t)\in\At(\qmbf^+)\}
\end{align*}
\end{definition}

We are now ready to state our main result.
\begin{restatable}{theorem}{theoremqueryreduction}\label{theorem:queryreduction}
Let $\Kmc$ be an $\ALCHItrianglefour$ KB and $\qmbf$ be a~BCQV. %Then
\begin{align*}
\Kmc\models_\four\qmbf&\text{ iff }\Kmc^\cl\models\qmbf^+\text{ and }\Kmc^\cl\cup\Amc_\qmbf\not\models\qmbf_\ctr
\end{align*}
\end{restatable}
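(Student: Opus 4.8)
The plan is to prove the two directions of the equivalence separately, in each case unpacking Definition~\ref{def:querysanswer} and translating between $\four$-models of $\Kmc$ and classical models of $\Kmc^\cl$ via Proposition~\ref{prop:triangleembedding1}. Throughout I rely on the observation that a $\four$-model $\Imc$ of $\Kmc$ corresponds to a classical model $\Imc^\cl$ of $\Kmc^\cl$ with the same domain, same role and individual interpretations, and $(\Amsf^+)^{\Imc^\cl}=\Amsf^{\Imc_\pmsf}$, $(\Amsf^-)^{\Imc^\cl}=\Amsf^{\Imc_\nmsf}$; moreover this correspondence is a bijection between the two model classes (since the KBs are in NNF and $\cdot^\cl$ is built precisely to mirror the positive/negative extensions).

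For the forward direction, assume $\Kmc\models_\four\qmbf$. For the first conjunct, take any classical model $\Jmc$ of $\Kmc^\cl$; it arises as $\Imc^\cl$ for some $\four$-model $\Imc$ of $\Kmc$, and condition~\ref{item:classicalconditionquery} of Definition~\ref{def:querysanswer} gives a match $\pi$ in $\Imc$ with $\pi(t)\in\Rmsf^\Imc$-pairs, $\pi(t)\in\Amsf^{\Imc_\pmsf}$ for $\Amsf(t)\in\At^+(\qmbf)$, and $\pi(t)\in\Amsf^{\Imc_\nmsf}$ for $\Amsf(t)\in\At^{\both\false}(\qmbf)$; translating extensions, this same $\pi$ is a match for $\qmbf^+$ in $\Jmc=\Imc^\cl$. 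Since $\Jmc$ was arbitrary, $\Kmc^\cl\models\qmbf^+$. For the second conjunct, condition~\ref{item:valueconditionquery} gives a $\four$-model $\Imc$ and a match $\pi$ additionally avoiding $\Amsf^{\Imc_\nmsf}$ for $\At^{\true\neither}(\qmbf)$ and $\Amsf^{\Imc_\pmsf}$ for $\At^{\false\neither}(\qmbf)$. The idea is to modify $\Imc$ on the fresh individuals $\IN_\qmbf$ so that it becomes a model of $\Kmc^\cl\cup\Amc_\qmbf$ that is still a counterexample to $\qmbf_\ctr$: informally, interpret each $c_x$ as $\pi(x)$ — but since $c_x$ must be fresh we instead need a model whose $c_x$-elements ``behave like'' $\pi(x)$ with respect to the relevant extensions. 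Here the care needed is that $\Amc_\qmbf$ records exactly the positive atoms of $\qmbf^+$ instantiated on the $c_t$'s, so a model of $\Kmc^\cl\cup\Amc_\qmbf$ is forced to send $c_t$ into the extensions named in $\qmbf^+$; the point is to argue that one can choose such a model in which additionally $c_t\notin(\Amsf^-)^{\Jmc}$ for $\At^{\true\neither}$ and $c_t\notin(\Amsf^+)^{\Jmc}$ for $\At^{\false\neither}$, i.e.\ $\qmbf_\ctr$ fails — and this is exactly what the match $\pi$ from condition~\ref{item:valueconditionquery} provides, since $\Amc_\qmbf$ only adds the \emph{positive} support, never the ``wrong side'' extension membership.

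For the backward direction, assume $\Kmc^\cl\models\qmbf^+$ and $\Kmc^\cl\cup\Amc_\qmbf\not\models\qmbf_\ctr$. To verify condition~\ref{item:classicalconditionquery}, take any $\four$-model $\Imc$ of $\Kmc$; then $\Imc^\cl\models\Kmc^\cl$, so there is a match for $\qmbf^+$ in $\Imc^\cl$, which translates back (via the extension correspondence) into a match $\pi$ in $\Imc$ witnessing the three bullet points of condition~\ref{item:classicalconditionquery}. To verify condition~\ref{item:valueconditionquery}, use the second hypothesis: there is a classical model $\Jmc$ of $\Kmc^\cl\cup\Amc_\qmbf$ with $\Jmc\not\models\qmbf_\ctr$. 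Define the required match $\pi$ by $\pi(t)=t^{\Jmc}$ if $t\in\IN$ and $\pi(x)=c_x^{\Jmc}$ for $x\in\Var$; because $\Amc_\qmbf$ contains all role atoms and all $\Amsf^+$/$\Amsf^-$ atoms of $\qmbf^+$ instantiated on the $c_t$'s, $\pi$ is a match in $\Jmc$ satisfying the first-item bullets, and because $\Jmc\not\models\qmbf_\ctr$, we get $c_t^{\Jmc}\notin(\Amsf^-)^{\Jmc}$ for every $\Amsf(t)\in\At^{\true\neither}(\qmbf)$ and $c_t^{\Jmc}\notin(\Amsf^+)^{\Jmc}$ for every $\Amsf(t)\in\At^{\false\neither}(\qmbf)$ — i.e.\ the extra conditions of item~\ref{item:valueconditionquery} hold. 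Taking the $\four$-model corresponding to $\Jmc$ then finishes this item.

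The main obstacle I anticipate is the bookkeeping around the fresh individuals $\IN_\qmbf$ and the $\Amc_\qmbf$ component, specifically making rigorous that ``a model of $\Kmc^\cl\cup\Amc_\qmbf$ not entailing $\qmbf_\ctr$'' is interchangeable with ``a $\four$-model of $\Kmc$ with a match satisfying item~\ref{item:valueconditionquery}.'' One has to check that $\IN_\qmbf$ is disjoint from the individuals of $\Kmc$ so that adding $\Amc_\qmbf$ does not spuriously interact with $\Kmc^\cl$, that existentially quantified variables of $\qmbf$ are handled correctly (the $c_t$-trick pins down a specific witness, which is fine because item~\ref{item:valueconditionquery} only asks for \emph{some} match), and that for constants $t\in\IN$ occurring in $\qmbf$ the definition $c_t=t$ makes $\Amc_\qmbf$ impose constraints on the genuine individuals — which is harmless since those constraints are already consequences of $\qmbf^+$ being entailed. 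A secondary subtlety is that condition~\ref{item:classicalconditionquery} and condition~\ref{item:valueconditionquery} use possibly \emph{different} $\four$-models, matching the split into an entailment ($\qmbf^+$) and a non-entailment ($\qmbf_\ctr$) on the classical side; keeping these two halves cleanly separated is the key structural point of the argument.
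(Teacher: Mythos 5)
Your proposal is correct and follows essentially the same route as the paper's proof: both directions go through the classical-counterpart/$\four$-counterpart correspondence of Proposition~\ref{prop:triangleembedding1}, with item~\ref{item:classicalconditionquery} handled by the entailment $\Kmc^\cl\models\qmbf^+$ over all models and item~\ref{item:valueconditionquery} handled by the countermodel to $\qmbf_\ctr$ over $\Kmc^\cl\cup\Amc_\qmbf$ with $\pi(t)=c_t^\Jmc$. The one hesitation you voice --- that freshness of $c_x$ prevents interpreting it as $\pi(x)$ --- is unfounded (no unique-name assumption is in play), and the paper resolves it exactly as you first suggest, by extending the interpretation with $c_t^\Jmc=\pi(t)$.
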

Intuitively, $\Kmc^\cl\models\qmbf^+$ ensures that the positive interpretation of every $\four$-model of $\Kmc$ satisfies item~\ref{item:classicalconditionquery} of Definition~\ref{def:querysanswer} and $\Kmc^\cl\cup\Amc_\qmbf\not\models\qmbf_\ctr$ ensures that there exists a $\four$-model of $\Kmc$ as required by item~\ref{item:valueconditionquery}. %of Definition~\ref{def:querysanswer}. 
Indeed, $\Amc_\qmbf$ enforces a match for $\qmbf^+$ and $\qmbf_\ctr$ checks whether it implies some contradiction of the conditions given by item~\ref{item:valueconditionquery}. The proof relies on classical counterparts and $\four$-counterparts to go from $\four$-models of $\Kmc$ to classical models of $\Kmc^\cl$ or $\Kmc^\cl\cup\Amc_\qmbf$ and vice-versa.

\begin{table}
\begin{tabular*}{\columnwidth}{l @{\extracolsep{\fill}} c c}
\toprule
 & Combined & Data \\
\midrule
$\mathcal{ALCI}, \mathcal{ALCHI}$ & $\twoexptime$-c. & $\conp$-c.
\\
$\mathcal{ALC}$, $\mathcal{ALCH}$& $\exptime$-c. & $\conp$-c.
\\
$\mathcal{ELI}_\bot, \mathcal{ELHI}_\bot$ &$\exptime$-c. & $\ptime$-c.
\\
$\mathcal{EL}_\bot, \mathcal{ELH}_\bot$&$\np$-c. & $\ptime$-c.
\\
DL-Lite$_\Rmc$&$\np$-c. & $\aczero$
\\
\bottomrule
\end{tabular*}
\caption{Complexity of BUCQ entailment over classical KBs. See surveys~\protect\cite{BienvenuOrtiz2O15} for $\mathcal{ELHI}_\bot$ and its sublogics and \protect\cite{DBLP:conf/rweb/OrtizS12} for $\mathcal{ALC}$ and its extensions. 
}\label{tab:complexity-classical}
\end{table}

\begin{table}
\begin{tabular*}{\columnwidth}{l @{\extracolsep{\fill}} c c}
\toprule
 & Combined & Data \\
\midrule
$\mathcal{ALCI}^\four_\triangle, \mathcal{ALCHI}^\four_\triangle$ & $\twoexptime$-c. & $\bhtwo$-c.
\\
$\mathcal{ALC}^\four_\triangle$, $\mathcal{ALCH}^\four_\triangle$ &$\exptime$-c. & $\bhtwo$-c.
\\
${\mathcal{ELI}_\neg}^\four_\triangle, \ELHInegtrianglefour$&$\exptime$-c. & $\ptime$-c.
\\
${\mathcal{EL}_\neg}^\four_\triangle, {\mathcal{ELH}_\neg}^\four_\triangle$&$\np$-c. & $\ptime$-c.
\\
${\text{DL-Lite}_\Rmc}^\four_\triangle$&$\np$-c. & $\aczero$
\\
\bottomrule
\end{tabular*}
\caption{Complexity of BCQV entailment over four-valued KBs.}\label{tab:complexity-paracons}
\end{table}

Using %the reduction of 
Theorem \ref{theorem:queryreduction} and the complexity results for classical DL KBs recalled in Table~\ref{tab:complexity-classical}, we obtain tight complexity results for BCQV entailment in $\mathcal{ALCHI}^\four_\triangle$ and its sublogics, showing that answering queries with values over paraconsistent KBs is often not harder than standard BCQ answering. The only case where we note a complexity increase is %for
the data complexity of $\mathcal{ALC}$ and its extensions. 

\begin{theorem}\label{complexity}
The results stated in Table \ref{tab:complexity-paracons} hold. 
\end{theorem}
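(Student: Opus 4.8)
The plan is to combine the reduction of Theorem~\ref{theorem:queryreduction} with the classical bounds of Table~\ref{tab:complexity-classical}. Recall from Section~\ref{sec:ALCfourtriangle} that for each DL $\Lmc$ occurring in the tables, the classical counterpart $\Kmc^\cl$ of an $\Lmc^\four_\triangle$ KB $\Kmc$ is computable in polynomial time and is itself an $\Lmc$ KB: for $\ALCHItrianglefour$ and its sublogics one simply doubles each concept name into $\Amsf^+,\Amsf^-$; for $\ELHInegtrianglefour$ every $\Amsf\sqsubseteq\neg\Bmsf$ becomes $\Amsf^+\sqsubseteq\Bmsf^-$, so $\Kmc^\cl$ stays an $\ELHIbot$ KB; and the classical counterpart of a ${\text{DL-Lite}_\Rmc}^\four_\triangle$ KB is a DL-Lite$_\Rmc$ KB. Moreover $\qmbf^+$ is a BCQ, $\qmbf_\ctr$ is a disjunction of ground atoms (hence a BUCQ), and $\Amc_\qmbf$ is an ABox, all of polynomial size and computable in polynomial time from $\qmbf$. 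Hence by Theorem~\ref{theorem:queryreduction}, deciding $\Kmc\models_\four\qmbf$ reduces in polynomial time to one positive BCQ-entailment test $\Kmc^\cl\models\qmbf^+$ together with one BUCQ-\emph{non}-entailment test $\Kmc^\cl\cup\Amc_\qmbf\not\models\qmbf_\ctr$ over classical KBs of the corresponding DL.

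For the upper bounds, consider first the deterministic classes $\ptime$, $\exptime$, $\twoexptime$: both tests lie in the class indicated by Table~\ref{tab:complexity-classical} for $\Lmc$ (extending a KB by $\Amc_\qmbf$ keeps it in $\Lmc$), and so does their Boolean combination; this yields the combined bounds for $\mathcal{ELI}_\bot/\mathcal{ELHI}_\bot$, $\mathcal{ALC}/\mathcal{ALCH}$, $\mathcal{ALCI}/\mathcal{ALCHI}$ and the $\ptime$ data bounds for all Horn DLs. For the $\np$ combined cases (${\mathcal{EL}_\neg}^\four_\triangle/{\mathcal{ELH}_\neg}^\four_\triangle$ and ${\text{DL-Lite}_\Rmc}^\four_\triangle$): $\Kmc^\cl\models\qmbf^+$ is in $\np$, while for the second test we use that $\Kmc^\cl\cup\Amc_\qmbf$ is Horn, hence entails the disjunction of ground atoms $\qmbf_\ctr$ iff it is inconsistent or entails one of the disjuncts --- both polynomial-time decidable for these DLs --- so $\Kmc^\cl\cup\Amc_\qmbf\not\models\qmbf_\ctr$ is in $\ptime$ and the whole problem is in $\np$. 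For the $\aczero$ data bound of ${\text{DL-Lite}_\Rmc}^\four_\triangle$, both tests are FO-rewritable (note $\Amc_\qmbf$ depends only on $\qmbf$, not on the data), so the combination is in $\aczero$. Finally, for the $\bhtwo$ data bound of the $\mathcal{ALC}$ family: BCQ entailment over $\mathcal{ALC}$ and its extensions is $\conp$-complete in data, so $\Kmc^\cl\models\qmbf^+$ is a $\conp$ condition and $\Kmc^\cl\cup\Amc_\qmbf\not\models\qmbf_\ctr$ an $\np$ condition, and the conjunction of a $\conp$ and an $\np$ condition is exactly the shape of a $\bhtwo$ ($=\mathbf{DP}$) problem.

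The combined bounds and the $\ptime$ data bounds for the Horn DLs are inherited from the classical ones. For the Horn DLs, Proposition~\ref{prop:hornsameasdropneg} gives $\Kmc\models_\four\qmbf$ iff $\Kmc\models\qmbf$ whenever $\Kmc$ is an ${\Lmc_\neg}^\four_\triangle$ KB with no $\neg$-inclusions; since every classical $\Lmc_\bot$ KB in normal form is such a KB and every CQ is a CQV, the $\exptime$/$\np$ combined and $\ptime$ data hardness of classical $\mathcal{ELI}_\bot/\mathcal{EL}_\bot$ query answering transfers. For the $\mathcal{ALC}$ family, Lemma~\ref{lem:reduction-classical-paraconsistent} gives $\Kmc\models\qmbf$ iff $\Kmc^\triangle\models_\four\qmbf$ for any classical KB $\Kmc$ in NNF and any plain BCQ $\qmbf$ (a CQ match only inspects positive concept extensions and two-valued roles), with $\Kmc^\triangle$ remaining in the same DL; hence $\twoexptime$/$\exptime$ combined hardness transfers, as does $\np$ combined hardness for DL-Lite$_\Rmc$.

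The remaining and genuinely new task --- and the one I expect to be the main obstacle --- is the matching $\bhtwo$-hardness for the \emph{data} complexity of the $\mathcal{ALC}$ family, where the classically inherited $\conp$-hardness no longer suffices and the value operators must be exploited. The plan is to reduce from a canonical $\bhtwo$-complete problem, e.g.\ deciding for a pair of CNF formulas $(\psi_1,\psi_2)$ whether $\psi_1$ is satisfiable and $\psi_2$ is unsatisfiable. Working over two disjoint copies of the vocabulary, we fix an $\mathcal{ALC}^\four_\triangle$ TBox and a fixed BCQV $\qmbf$ containing a single value atom $\true(\Amsf(a))$, and place $\psi_1,\psi_2$ into the ABox so that, through Theorem~\ref{theorem:queryreduction}: the positive test $\Kmc^\cl\models\qmbf^+$ holds iff $\psi_2$ is unsatisfiable (using the standard $\conp$-hardness reduction for classical $\mathcal{ALC}$ instance/BCQ entailment on the second copy), while the contradiction test reduces --- since the only value atom is $\true(\Amsf(a))$, so $\qmbf_\ctr=\Amsf^-(a)$ --- to $\Kmc^\cl\cup\Amc_\qmbf\not\models\Amsf^-(a)$, which we arrange to hold iff $\psi_1$ is satisfiable (using that non-instance-checking over classical $\mathcal{ALC}$ is $\np$-hard in data, on the first copy). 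The delicate points are ensuring that the two encodings do not interfere --- which the disjoint vocabularies are meant to guarantee --- and that the fixed TBox and fixed query indeed induce exactly the $\qmbf^+$, $\qmbf_\ctr$, and $\Amc_\qmbf$ demanded by the two sub-reductions; these verifications are routine but somewhat tedious.
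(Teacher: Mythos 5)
Your proposal is correct and follows essentially the same route as the paper: upper bounds via the Theorem~\ref{theorem:queryreduction} reduction (including the same convexity argument for the Horn/DL-Lite $\np$ cases and the same observation that a conjunction of a $\conp$ and an $\np$ condition lands in $\bhtwo$ for the $\mathcal{ALC}$ family), lower bounds transferred from the classical logics, and $\bhtwo$-hardness of data complexity via a reduction from SAT-UNSAT. The only part you leave as a sketch is that last reduction, but your plan --- a fixed query with a single $\true(\Amsf(a))$ atom whose contradiction test $\not\models\Amsf^-(a)$ captures satisfiability of $\psi_1$ while the positive test captures unsatisfiability of $\psi_2$ --- is exactly the construction the paper carries out (with $\qmbf=\true(\mathsf{Sat}(\varphi_1))\wedge\mathsf{NotSat}(\varphi_2)$ over a 2+2-CNF encoding), so filling in the details poses no obstacle.
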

\begin{proof}
By Theorem \ref{theorem:queryreduction}, $\Kmc\models_\four\qmbf$ iff $\Kmc^\cl\models\qmbf^+$ and $\Kmc^\cl\cup\Amc_\qmbf\not\models\qmbf_\ctr$ so if BUCQ (Boolean union of conjunctive queries) entailment over classical $\Lmc$ KBs is in a complexity class $\Cmc$, BCQV entailment over $\Lmc^\four_\triangle$ KBs can be decided by a Turing machine with a $\Cmc$-oracle (making one~$\Cmc$-call and one~co-$\Cmc$-call). 
Recall that if $\Kmc$ is an ${\Lmc_\neg}^\four_\triangle$ KB with $\Lmc\in\{\mathcal{ELHI},\mathcal{ELI},\mathcal{ELH},\mathcal{EL}\}$, $\Kmc^\cl$ is an $\Lmc_\bot$ KB. 
Moreover, $\qmbf_\ctr$ is actually a disjunction of at most $2|\qmbf|$ assertions and in the case of Horn DLs, $\Kmc^\cl\cup\Amc_\qmbf\not\models\qmbf_\ctr$ iff $\Kmc^\cl\cup\Amc_\qmbf\not\models \alpha$ for every assertion $\alpha$ that occurs in $\qmbf_\ctr$. Since assertion entailment can be done in polynomial time w.r.t.\ combined complexity for $\mathcal{EL}_\bot$, $\mathcal{ELH}_\bot$ and DL-Lite$_\Rmc$ \cite{BaaderBL05,CalvaneseGLLR07}, the $\np$-call to decide $\Kmc^\cl\models\qmbf^+$ and the $\ptime$-calls to decide $\Kmc^\cl\cup\Amc_\qmbf\not\models \alpha$ for each $\alpha$ can be combined in a~single $\np$-call.

Lower bounds for $\Lmc$ transfer to $\Lmc^\four_\triangle$ by Lemma~\ref{lem:reduction-classical-paraconsistent}: given an $\Lmc$ KB $\Kmc$ and BCQ $\qmbf$, $\Kmc\models \qmbf$ iff $\Kmc^\triangle\models_\four \qmbf$ (since there exists a match for $\qmbf$ in every model $\Imc$ of $\Kmc$ iff there exists a match for $\qmbf$ in the positive extension of every $\four$-model $\Imc_\four$ of $\Kmc^\triangle$). 
We obtain the remaining $\bhtwo$-lower bound via a reduction from the $\bhtwo$-complete problem SAT-UNSAT.
\end{proof}

\section{Comparison With Repair-Based Semantics \label{sec:4valuedvsrepairs}}
In this section, we compare paraconsistent querying semantics with existing repair-based semantics. When dealing with repair-based semantics, we assume a \emph{classically consistent} TBox, i.e., we assume that if a~KB is inconsistent, it is due to errors in the ABox. For our comparison, we will naturally consider the popular $\AR$ semantics, which deems a tuple to be an answer if it holds w.r.t.\ every repair. We shall further consider repair-based semantics that provide minimal under-approximation and maximal over-approximations of $\AR$ \cite{BienvenuBourgaux2016}: $\IAR$, $\brave$ and $\CAR$. The $\IAR$ semantics retains only the “safest” answers that are true in the intersection of the repairs, while the $\brave$ semantics considers all answers that hold in at least one repair. Finally, the $\CAR$ semantics over-approximates the $\AR$ semantics in a way that is incomparable with $\brave$, by incorporating a closure operation on the ABox. The latter semantics may seem closer in spirit to paraconsistent reasoning where the positive extensions retain all consequences of the axioms. 

The formal definitions of repairs and the considered repair-based semantics follow. Recall that ABox $\Amc$ is called \emph{$\Tmc$-consistent} if the KB $\langle\Tmc,\Amc\rangle$ is (classically) consistent. 
\begin{definition}[Repairs]\label{def:repair}
Let $\Kmc=\langle\Tmc,\Amc\rangle$ and define
\begin{align*}
\close^*_\Tmc(\Amc)&=\left\{\phi\left|\begin{matrix}\phi\text{ assertion s.t. }\langle\Tmc,\Amc'\rangle\models\phi\\\text{ for some }\Tmc\text{-consistent }\Amc'\subseteq\Amc\end{matrix}\right.\right\}
\end{align*}
\begin{itemize}[noitemsep,topsep=1pt]
\item A~\emph{repair of $\Kmc$} is a maximal 
$\Tmc$-consistent subset of $\Amc$. 
\item A~\emph{closed repair of $\Kmc$} is a~$\Tmc$-consistent $\Rmc\subseteq\close^*_\Tmc(\Amc)$ for which there is no $\Tmc$-consistent $\Rmc'\subseteq\close^*_\Tmc(\Amc)$ s.t.\ either (1) $\Rmc\cap\Amc\subsetneq\Rmc'\cap\Amc$ or (2) $\Rmc\cap\Amc=\Rmc'\cap\Amc$ and $\Rmc\subsetneq\Rmc'$.
\end{itemize}
We denote the set of all repairs (resp.\ closed repairs) of $\Kmc$ with $\Rep(\Kmc)$ (resp.\ $\closeRep(\Kmc)$).
\end{definition}
\begin{definition}[Repair semantics]\label{def:repairsemantics} Let $\qmbf$ be a~Boolean CQ.
\begin{itemize}[noitemsep,topsep=1pt]
\item $\Kmc\models_\AR\qmbf$ if $\langle\Tmc,\Amc'\rangle\models\qmbf$ for every $\Amc'\in\Rep(\Kmc)$.
\item $\Kmc\!\models_\brave\!\qmbf$ if $\langle\Tmc,\!\Amc'\rangle\!\models\!\qmbf$ for some~$\Amc'\in\Rep(\Kmc)$.
\item $\Kmc\!\models_\IAR\!\qmbf$ if $\left\langle\Tmc,\!\!\!\!\!\!\bigcap\limits_{\Amc'\in\Rep(\Kmc)}\!\!\!\!\!\!\Amc'\right\rangle\!\models\!\qmbf$.
\item $\Kmc\models_\CAR\qmbf$ if $\langle\Tmc,\Rmc\rangle\models\qmbf$ for every $\Rmc\in\closeRep(\Kmc)$.
\end{itemize}
\end{definition}

We recall the relations between these semantics.
\begin{center}
\begin{tikzpicture}
\node at (0,0) {$\Kmc\models_{\IAR} \qmbf$};
\node at (1.25,0) {$\implies$};
\node at (2.5,0) {$\Kmc\models_{\AR} \qmbf$};
\node[rotate=30] at (3.75,0.2) {$\implies$};
\node[rotate=330] at (3.75,-0.2) {$\implies$};
\node at (5,0.35) {$\Kmc\models_{\brave} \qmbf$};
\node at (4.9,-0.35) {$\Kmc\models_{\CAR} \qmbf$};
\end{tikzpicture}
\end{center}

We start by remarking that $\models_\four$ over-approximates $\models_\brave$ in Horn DLs. 
\begin{theorem}\label{theorem:braveoverapproximation}
If $\Kmc$ is an $\mathcal{ELHI}_\neg$ KB and $\mathbf{q}$ is a~BCQ, then $\Kmc\models_\brave\mathbf{q}$ implies $\Kmc\models_\four\mathbf{q}$.
\end{theorem}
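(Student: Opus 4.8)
The plan is to reduce to a classical entailment via Proposition~\ref{prop:hornsameasdropneg}. Write $\Kmc^+$ for the KB obtained from $\Kmc$ by deleting every axiom $\Amsf\sqsubseteq\neg\Bmsf$; by that proposition it suffices to establish the classical entailment $\Kmc^+\models\mathbf{q}$. From $\Kmc\models_\brave\mathbf{q}$ I fix a repair $\Amc'\in\Rep(\Kmc)$ with $\langle\Tmc,\Amc'\rangle\models\mathbf{q}$, noting that $\langle\Tmc,\Amc'\rangle$ is classically consistent by definition of a repair. Since $\Amc'\subseteq\Amc$ and classical (U)CQ entailment is monotone under enlarging the ABox, it is in fact enough to prove $\langle\Tmc^+,\Amc'\rangle\models\mathbf{q}$, after which Proposition~\ref{prop:hornsameasdropneg} gives $\Kmc\models_\four\mathbf{q}$.

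The subtle point — which I expect to be the main obstacle — is that $\Tmc^+$ is classically weaker than $\Tmc$, so $\langle\Tmc^+,\Amc'\rangle\models\mathbf{q}$ does \emph{not} follow formally from $\langle\Tmc,\Amc'\rangle\models\mathbf{q}$. The idea is that consistency of $\langle\Tmc,\Amc'\rangle$ makes the deleted axioms dispensable for deriving $\mathbf{q}$. Concretely, $\langle\Tmc^+,\Amc'\rangle$ is a consistent Horn KB — consistent because $\Tmc^+\subseteq\Tmc$, so every model of the consistent KB $\langle\Tmc,\Amc'\rangle$ is also a model of $\langle\Tmc^+,\Amc'\rangle$ — hence it has a universal model $\Umc$ which maps homomorphically, with the identity on individual names, into every model of $\langle\Tmc^+,\Amc'\rangle$. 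I then pick any model $\Nmc$ of $\langle\Tmc,\Amc'\rangle$ together with a homomorphism $h\colon\Umc\to\Nmc$. Each deleted axiom $\Amsf\sqsubseteq\neg\Bmsf$ is classically equivalent to $\Amsf\sqcap\Bmsf\sqsubseteq\bot$ and holds in $\Nmc$; since $h$ preserves membership in atomic concepts, an element of $\Amsf^\Umc\cap\Bmsf^\Umc$ would map to one of $\Amsf^\Nmc\cap\Bmsf^\Nmc=\emptyset$, so $\Amsf^\Umc\cap\Bmsf^\Umc=\emptyset$ as well. Hence $\Umc$ satisfies all the deleted axioms, i.e.\ $\Umc\models\langle\Tmc,\Amc'\rangle$.

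Finally, since $\langle\Tmc,\Amc'\rangle\models\mathbf{q}$ and $\Umc$ is one of its models, $\Umc$ admits a match for $\mathbf{q}$; composing this match with the homomorphism from $\Umc$ into an arbitrary model of $\langle\Tmc^+,\Amc'\rangle$ shows $\langle\Tmc^+,\Amc'\rangle\models\mathbf{q}$, which closes the argument as explained above. Everything besides the $\Tmc$-versus-$\Tmc^+$ step is standard: monotonicity of classical entailment under ABox extension, the existence and universality of canonical models for Horn ($\mathcal{ELHI}_\bot$) KBs, preservation of CQ matches under homomorphisms that fix individual names, and the classical equivalence of $\Amsf\sqsubseteq\neg\Bmsf$ with $\Amsf\sqcap\Bmsf\sqsubseteq\bot$.
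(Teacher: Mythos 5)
Your proof is correct, but it takes a different route from the paper's. The paper's argument is three lines: from $\Kmc\models_\brave\qmbf$ it fixes a consistent sub-KB $\Kmc'=\langle\Tmc,\Amc'\rangle$ with $\Kmc'\models\qmbf$, invokes Proposition~\ref{prop:completeness} (applicable because $\qmbf$ has no value operators and $\Kmc'$ is classically satisfiable) to get $\Kmc'\models_\four\qmbf$, and concludes by monotonicity of $\models_\four$ for plain CQs, since item~\ref{item:valueconditionquery} of Definition~\ref{def:querysanswer} is vacuous. You instead stay entirely in the classical world: you reduce the goal to $\Kmc^+\models\qmbf$ via Proposition~\ref{prop:hornsameasdropneg}, and then carry out by hand the argument that the canonical model of $\langle\Tmc^+,\Amc'\rangle$ already satisfies the dropped axioms $\Amsf\sqsubseteq\neg\Bmsf$ (because it homomorphically embeds into some model of the consistent KB $\langle\Tmc,\Amc'\rangle$), hence admits a match for $\qmbf$, which then transfers by homomorphism to every model of $\langle\Tmc^+,\Amc'\rangle$. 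The underlying mathematical content is the same in both proofs --- it is exactly the canonical-model-plus-homomorphism fact that the paper isolates in Lemma~\ref{lemma:hom-universal-model-horn} and uses inside the proof of Proposition~\ref{prop:completeness} --- so your version essentially re-proves a special case of that proposition rather than citing it. What your route buys is that it makes explicit the remark the paper states informally right after the theorem, namely that in Horn DLs dropping the negative inclusions over-approximates $\brave$ already at the level of classical entailment; what it costs is redundancy with machinery the paper has already established. All the individual steps you flag as standard (ABox monotonicity, existence and universality of canonical models for consistent $\mathcal{ELHI}_\bot$ KBs, preservation of CQ matches under homomorphisms fixing individuals, and the classical equivalence of $\Amsf\sqsubseteq\neg\Bmsf$ with $\Amsf\sqcap\Bmsf\sqsubseteq\bot$) are indeed standard and used correctly.
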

\begin{proof}
 Assume that $\Kmc\models_\brave\qmbf$: there is a~classically consistent subset $\Kmc'\subseteq\Kmc$ such that $\Kmc'\models\qmbf$. By Proposition~\ref{prop:completeness}, $\Kmc'\models_\four\qmbf$
because $\qmbf$ does not contain any value operator. It follows that $\Kmc\models_\four\qmbf$. Indeed, every $\four$-model of $\Kmc$ is a~$\four$-model of~$\Kmc'$ and since $\qmbf$ does not contain any value operator, item~\ref{item:valueconditionquery} of Definition~\ref{def:querysanswer} is vacuously true.
\end{proof}
Note that Theorem~\ref{theorem:braveoverapproximation} and Proposition~\ref{prop:hornsameasdropneg} are a way to see that in Horn DLs, dropping the negative inclusions $\Amsf\sqsubseteq\neg\Bmsf$ provides an over-approximation of $\brave$. 
However, we cannot generalise Theorem~\ref{theorem:braveoverapproximation} beyond Horn DLs. Indeed, recall that $\models_\four$ and $\models$ differ on consistent KBs (cf.\ Example \ref{example:consistent}) for languages with $\sqcup$, while all repair-based semantics coincide with $\models$ on consistent KBs. 

Since Theorem~\ref{theorem:braveoverapproximation} indicates that $\models_\four$ with CQs (without values) is more permissive than $\brave$, a natural idea for bringing closer paraconsistent reasoning and repair-based reasoning is to add $\true$ on query atoms to strengthen the requirements on answers. We quickly observe that in this case, $\models_\four$ no longer over-approximates (in contrast with Theorem~\ref{theorem:braveoverapproximation}) even the safest semantics $\IAR$, while it does not under-approximate the loosest semantics $\brave$ and $\CAR$. For example, consider the following knowledge base: $\Kmc_\mathsf{ic}=\langle\{\Cmsf\sqsubseteq\Amsf,\Cmsf\sqsubseteq\neg\Amsf,\Cmsf\sqsubseteq\neg\Bmsf\},\{\Cmsf(a),\Bmsf(a)\}\rangle$. The only (closed) repair of $\Kmc_\mathsf{ic}$ is $\{\Bmsf(a)\}$ so $\Kmc_\mathsf{ic}\models_\IAR\Bmsf(a)$ while $\Kmc_\mathsf{ic}\not\models_\brave\Cmsf(a)$ and $\Kmc_\mathsf{ic}\not\models_\CAR\Cmsf(a)$. On the other hand, $\Kmc_\mathsf{ic}\not\models_\four\true(\Bmsf(a))$ while $\Kmc_\mathsf{ic}\models_\four\true(\Cmsf(a))$. However, this example relies on the use of a concept name unsatisfiable w.r.t.~the TBox, which may be not so common in practice. We thus next investigate the case of \emph{coherent} KBs, i.e., KBs where all concept names are satisfiable w.r.t.~the TBox. 

We show that even for coherent KBs, answering CQs under repair-based semantics and answering CQVs over paraconsistent DL KBs is incomparable. Moreover, we show this not only for the paraconsistent DLs we study in this paper but for a wider class of such logics. The following definition, inspired by~\cite[Chapter~3]{Gottwald2001} and~\cite[\S1.5.2]{Skurt2020}, will allow us to state our incomparability results in a general setting, by abstracting from the way extensions (and especially negative extensions) of complex concepts are defined. 
\begin{definition}\label{def:normality}~
For a concept $\Cmsf$ and a~$\four$-interpretation $\Imc$, let %we denote 
$\Cmsf^{\Imc_\true}=\Cmsf^{\Imc_\pmsf}\setminus\Cmsf^{\Imc_\nmsf}$ and $\Cmsf^{\Imc_\false}=\Cmsf^{\Imc_\nmsf}\setminus\Cmsf^{\Imc_\pmsf}$. We say that
\begin{itemize}
\item a~unary connective $-$ is
\begin{itemize}
\item \emph{\textsc{neg}-normal} if $x\!\in\!\Cmsf^{\Imc_\true}$ implies $x\!\in\!({-}\Cmsf)^{\Imc_\false}$ and $x\!\in\!\Cmsf^{\Imc_\false}$ implies $x\!\in\!({-}\Cmsf)^{\Imc_\true}$;
\item \emph{\textsc{neg}-standard} if $({-}\Cmsf)^{\Imc_\pmsf}=\Delta^\Imc\setminus\Cmsf^{\Imc_\pmsf}$;
\item \emph{paraconsistent} if $(-\Cmsf)^{\Imc_\pmsf}=\Cmsf^{\Imc_\nmsf}$ and there is a~$\four$-interpretation $\Imc'$ s.t.\ $\Amsf^{\Imc'_\pmsf}\cap\Amsf^{\Imc'_\nmsf}\neq\emptyset$ for some $\Amsf\in\CN$;
\end{itemize}
\item a~binary connective $\circledast$ is
\begin{itemize}
\item \emph{\textsc{and}-normal} if $x\in\Cmsf^{\Imc_\true}\cap\Dmsf^{\Imc_\true}$ implies $x\in(\Cmsf\circledast\Dmsf)^{\Imc_\true}$ and $x\in\Cmsf^{\Imc_\false}\cup\Dmsf^{\Imc_\false}$ implies $x\in(\Cmsf\circledast\Dmsf)^{\Imc_\false}$;
\item \emph{\textsc{and}-standard} if $(\Cmsf\circledast\Dmsf)^{\Imc_\pmsf}=\Cmsf^{\Imc_\pmsf}\cap\Dmsf^{\Imc_\pmsf}$;
\item \emph{\textsc{or}-normal} if $x\in\Cmsf^{\Imc_\true}\cup\Dmsf^{\Imc_\true}$ implies $x\in(\Cmsf\circledast\Dmsf)^{\Imc_\true}$ and $x\in\Cmsf^{\Imc_\false}\cap\Dmsf^{\Imc_\false}$ implies $x\in(\Cmsf\circledast\Dmsf)^{\Imc_\false}$;
\item \emph{\textsc{or}-standard} if $(\Cmsf\circledast\Dmsf)^{\Imc_\pmsf}=\Cmsf^{\Imc_\pmsf}\cup\Dmsf^{\Imc_\pmsf}$;
\end{itemize}
\item a~quantifier $\heartsuit_\Smsf$ is
\begin{itemize}
\item \emph{\textsc{all}-normal} if $\forall y((x,y)\in\Smsf^\Imc\Rightarrow y\in\Cmsf^{\Imc_\true})$ implies $x\in(\heartsuit_\Smsf\Cmsf)^{\Imc_\true}$ and $\exists y((x,y)\in\Smsf^\Imc~\&~y\in\Cmsf^{\Imc_\false})$ implies $x\in(\heartsuit_\Smsf\Cmsf)^{\Imc_\false}$;
\item \emph{\textsc{all}-standard} if $(\heartsuit_\Smsf\Cmsf)^{\Imc_\pmsf}=\{x\mid\forall y:(x,y)\in\Smsf^{\Imc}\Rightarrow y\in\Cmsf^{\Imc_\pmsf}\}$;
\item \emph{\textsc{ex}-normal} if $\exists y((x,y)\in\Smsf^\Imc~\&~y\in\Cmsf^{\Imc_\true})$ implies $x\in(\heartsuit_\Smsf\Cmsf)^{\Imc_\true}$ and $\forall y((x,y)\in\Smsf^\Imc\Rightarrow y\in\Cmsf^{\Imc_\false})$ implies $x\in(\heartsuit_\Smsf\Cmsf)^{\Imc_\false}$;
\item \emph{\textsc{ex}-standard} if $(\heartsuit_\Smsf\Cmsf)^{\Imc_\pmsf}=
\{x\mid\exists y:(x,y)\in\Smsf^{\Imc} \& y\in\Cmsf^{\Imc_\pmsf}\}$.
\end{itemize}
\end{itemize}
\end{definition}

Considering $\ALCHItrianglefour$ connectives, $\neg,\sqcap,\sqcup,\exists\Smsf,\forall\Smsf$ are \textsc{neg}-, \textsc{and}-, \textsc{or}-, \textsc{ex}- and \textsc{all}-normal respectively, while $\sqcap,\sqcup,\exists\Smsf,\forall\Smsf$ are \textsc{and}-, \textsc{or}-, \textsc{ex}- and \textsc{all}-standard respectively, and $\neg$ is not \textsc{neg}-standard but paraconsistent. Note that normality and standardness do not imply one another: the strong interpretation of disjunction in~\cite{ZhangXiaoLinvandenBussche2014} is \textsc{or}-normal but not \textsc{or}-standard, while 
$\otimes$ from~\cite{OmoriSano2015} is \textsc{and}-standard but not \textsc{and}-normal.

The next theorem states our incomparability result when value operators are allowed in queries: even for atomic concept queries over coherent Horn DL KBs, 
when we put $\true$ on top of the query atom, $\models_\four$ does not over-approximate 
$\IAR$, while it does not under-approximate 
$\brave$ and $\CAR$. 
\begin{definition}
Given a~set of connectives $\mathfrak{C}=\{-,\otimes,\oplus,\blacksquare_\Smsf,\blacklozenge_\Smsf\}$ and an $\ALCHI$ concept $\Cmsf$, we denote by $\Cmsf^\mathfrak{C}$ the concept obtained from $\Cmsf$ by replacing $\neg$ with $-$, $\sqcap$ with $\otimes$, $\sqcup$ with $\oplus$, $\forall\Smsf$ with~$\blacksquare_\Smsf$, and $\exists\Smsf$ with~$\blacklozenge_\Smsf$. We say that a~query entailment relation $\models_\Ymsf$ \emph{$\true$-over-approximates}
(resp.\ \emph{$\true$-under-approximates}) $\models_\Xmsf$ under $\mathfrak{C}$ if $\Kmc\models_\Xmsf\Amsf(a)$ implies $\Kmc^\mathfrak{C}\models_\Ymsf \true(\Amsf(a))$ (resp. $\Kmc^\mathfrak{C}\models_\Ymsf\true(\Amsf(a))$ implies $\Kmc\models_\Xmsf\Amsf(a)$) 
%\textcolor{red}{We say that a~query entailment relation $\models_\Ymsf$ \emph{$\true$-over-approximates}
%(resp.\ \emph{$\true$-under-approximates}) $\models_\Xmsf$ under $\mathfrak{C}$ if $\Kmc^\mathfrak{C}\models_\Xmsf\true(\Amsf(a))$ implies $\Kmc\models_\Ymsf\Amsf(a)$ (resp. $\Kmc\models_\Ymsf\Amsf(a)$ implies $\Kmc^\mathfrak{C}\models_\Xmsf\true(\Amsf(a))$)}
%We say that a~query entailment relation $\models_\Ymsf$ \emph{$\true$-over-approximates}~$\models_\Xmsf$ (alternatively, $\models_\Xmsf$ \emph{$\true$-under-approximates} $\models_\Ymsf$) under $\mathfrak{C}$ if \textcolor{red}{$\Kmc^\mathfrak{C}\models_\Xmsf\true(\Amsf(a))$ implies $\Kmc\models_\Ymsf\Amsf(a)$} 
for any KB $\Kmc$ and Boolean atomic concept query $\Amsf(a)$, where $\Kmc^\mathfrak{C}$ is obtained from $\Kmc$ by replacing every concept $\Cmsf$ by $\Cmsf^\mathfrak{C}$. 
\end{definition}

\begin{theorem}\label{theorem:noapproximations}
It holds that:
\begin{itemize}[noitemsep,topsep=1pt]
\item $\models_\four$ does not $\true$-over-approximate $\models_\IAR$, and 
\item $\models_\four$ does not $\true$-under-approximate $\models_\brave$ and $\models_\CAR$
\end{itemize}
under $\mathfrak{C}$ when $\sqsubseteq$ is the internal inclusion and for
\begin{enumerate}[noitemsep,topsep=1pt]
\item coherent DL-Lite ontologies if $-$ is paraconsistent and $\blacklozenge_\Smsf$ is \textsc{ex}-normal or \textsc{ex}-standard,
\item coherent propositional Horn ontologies if $-$ is paraconsistent and $\otimes$ is \textsc{and}-standard.
\end{enumerate}
\end{theorem}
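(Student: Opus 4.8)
The plan is to refute each approximation claim by an explicit \emph{coherent} counterexample, reusing a single KB in each of cases~1 and~2 together with two atomic concept queries: one that is $\IAR$-entailed but whose $\true$-annotation is not $\four$-entailed from $\Kmc^\mathfrak{C}$ (refuting $\true$-over-approximation of $\IAR$), and one whose $\true$-annotation \emph{is} $\four$-entailed from $\Kmc^\mathfrak{C}$ but which fails under both $\brave$ and $\CAR$ (refuting $\true$-under-approximation of $\brave$ and $\CAR$). The blueprint is the knowledge base $\Kmc_{\mathsf{ic}}$ from the discussion preceding the theorem: there $\Bmsf(a)$ is $\IAR$-entailed while $\true(\Bmsf(a))$ is not $\four$-entailed, because $\Cmsf\sqsubseteq\neg\Bmsf$ together with the \emph{asserted} $\Cmsf(a)$ pins $a\in\Bmsf^{\Imc_\nmsf}$ in every $\four$-model; and $\true(\Cmsf(a))$ \emph{is} $\four$-entailed yet $\Cmsf(a)$ holds in no repair. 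The obstacle — and the whole content of the proof — is that $\Kmc_{\mathsf{ic}}$ makes the concept name $\Cmsf$ unsatisfiable, so one must reproduce this ``classically dead but paraconsistently live'' pattern while keeping every concept name satisfiable.

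For part~1 I would use the DL-Lite$_\Rmc$ KB
\[\Kmc_1=\langle\{\exists\Rmsf.\top\sqsubseteq\Amsf,\ \exists\Rmsf.\top\sqsubseteq\neg\Amsf,\ \exists\Rmsf.\top\sqsubseteq\neg\Bmsf,\ \exists\Rmsf.\top\sqsubseteq\Dmsf\},\ \{\Rmsf(a,b),\Bmsf(a)\}\rangle.\]
Every classical interpretation with empty $\Rmsf$ satisfies the TBox, so each of $\Amsf,\Bmsf,\Dmsf$ is satisfiable and $\Kmc_1$ is coherent; yet $\{\Rmsf(a,b)\}$ is classically inconsistent, whence $\{\Bmsf(a)\}$ is the unique repair, $\close^*_\Tmc(\Amc)=\{\Bmsf(a)\}$ is the unique closed repair, $\Kmc_1\models_\IAR\Bmsf(a)$, and $\Kmc_1\not\models_\brave\Dmsf(a)$, $\Kmc_1\not\models_\CAR\Dmsf(a)$. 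On the paraconsistent side, if $\blacklozenge_\Rmsf$ is \textsc{ex}-normal or \textsc{ex}-standard then the asserted $\Rmsf(a,b)$ puts $a$ into $(\blacklozenge_\Rmsf\top)^{\Imc_\pmsf}$ in every $\four$-model of $\Kmc_1^\mathfrak{C}$; paraconsistency of $-$ then forces $a\in\Dmsf^{\Imc_\pmsf}$ and $a\in\Bmsf^{\Imc_\nmsf}$ in every such model. Hence the first condition of Definition~\ref{def:querysanswer} holds for $\true(\Dmsf(a))$ and its second condition fails for $\true(\Bmsf(a))$; since $\Dmsf$ never occurs negatively in $\Kmc_1^\mathfrak{C}$ one can also satisfy the second condition for $\true(\Dmsf(a))$, and $\Kmc_1^\mathfrak{C}$ is $\four$-satisfiable (take all negative extensions to be $\Delta^\Imc$). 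Thus $\Kmc_1^\mathfrak{C}\models_\four\true(\Dmsf(a))$ and $\Kmc_1^\mathfrak{C}\not\models_\four\true(\Bmsf(a))$, giving both failures.

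For part~2 I would replace $\exists\Rmsf.\top$ by a conjunction, using the coherent propositional Horn KB
\[\Kmc_2=\langle\{\Amsf\sqsubseteq\neg\Bmsf,\ \Amsf\sqcap\Bmsf\sqsubseteq\Emsf,\ \Emsf\sqsubseteq\neg\Dmsf\},\ \{\Amsf(a),\Bmsf(a),\Dmsf(a)\}\rangle.\]
Classically $\{\Amsf(a),\Bmsf(a)\}$ is the only minimal inconsistent subset, so the repairs and closed repairs are $\{\Amsf(a),\Dmsf(a)\}$ and $\{\Bmsf(a),\Dmsf(a)\}$, $\close^*_\Tmc(\Amc)=\Amc$, $\Kmc_2\models_\IAR\Dmsf(a)$, and $\Emsf(a)$ is entailed by no (closed) repair, so $\Kmc_2\not\models_\brave\Emsf(a)$ and $\Kmc_2\not\models_\CAR\Emsf(a)$; coherence holds since $\Emsf\sqsubseteq\neg\Dmsf$ and $\Amsf\sqcap\Bmsf\sqsubseteq\Emsf$ force no name empty. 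Paraconsistently, \textsc{and}-standardness of $\otimes$ gives $a\in(\Amsf\otimes\Bmsf)^{\Imc_\pmsf}=\Amsf^{\Imc_\pmsf}\cap\Bmsf^{\Imc_\pmsf}$ in every $\four$-model of $\Kmc_2^\mathfrak{C}$ (as $\Amsf(a),\Bmsf(a)$ are asserted), hence $a\in\Emsf^{\Imc_\pmsf}$, hence (paraconsistency of $-$) $a\in\Dmsf^{\Imc_\nmsf}$, always; so $\true(\Emsf(a))$ meets both conditions of Definition~\ref{def:querysanswer} (the second because $\Emsf$ never occurs negatively), while the second condition for $\true(\Dmsf(a))$ fails. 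Checking that $\Kmc_2^\mathfrak{C}$ is $\four$-satisfiable is immediate, so $\Kmc_2^\mathfrak{C}\models_\four\true(\Emsf(a))$ and $\Kmc_2^\mathfrak{C}\not\models_\four\true(\Dmsf(a))$.

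The main difficulty, as indicated above, is engineering coherence: because no \emph{concept name} may be unsatisfiable, the ``dead'' assertion has to be a role assertion ($\Rmsf(a,b)$, part~1) or a pair of assertions whose conjunction is classically impossible ($\Amsf(a)\wedge\Bmsf(a)$, part~2), and the negative support this creates must then be rerouted — via $\exists\Rmsf.\top\sqsubseteq\neg\Bmsf$, resp.\ the chain $\Amsf\sqcap\Bmsf\sqsubseteq\Emsf\sqsubseteq\neg\Dmsf$ — onto a name that survives in every repair precisely because it is only positively constrained. The remaining work (computing $\Rep(\Kmc)$, $\closeRep(\Kmc)$, $\close^*_\Tmc(\Amc)$ and exhibiting the needed $\four$-models) is routine, and since every argument uses only the stated abstractions (paraconsistency of $-$, \textsc{ex}-normality/standardness of $\blacklozenge_\Rmsf$, \textsc{and}-standardness of $\otimes$), it applies unchanged to any paraconsistent DL semantics meeting those hypotheses.
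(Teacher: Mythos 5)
Your proposal is correct and follows essentially the same strategy as the paper's proof: one explicit coherent counterexample per ontology language, each simultaneously refuting $\true$-over-approximation of $\IAR$ and $\true$-under-approximation of $\brave$ and $\CAR$ by routing unavoidable negative support onto an atom that survives in every repair, while an atom derivable in no repair comes out exactly true. Your propositional Horn example is a streamlined version of the paper's and is fine as stated.

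One point deserves flagging in your DL-Lite example: the pair $\exists\Rmsf.\top\sqsubseteq\Amsf$ and $\exists\Rmsf.\top\sqsubseteq\neg\Amsf$ forces $\Rmsf$ to be \emph{empty in every classical model of the TBox}, so every $\Rmsf$-assertion is inconsistent on its own. This satisfies the letter of the paper's definition of coherence (which only requires concept \emph{names} to be satisfiable), so your proof is technically valid, but it reintroduces for a role exactly the ``classically dead predicate'' pathology that the coherence restriction was introduced to exclude, and a referee could reasonably object that the counterexample is again degenerate. The paper sidesteps this by writing $\exists\Rmsf^-.\top\sqsubseteq\neg\Amsf$ and asserting $\Rmsf(a,a)$: then $\Rmsf$ itself remains satisfiable and only the reflexive loop is inconsistent. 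You could adopt the same trick at no cost. A second, minor point: in the abstract setting, \textsc{ex}-normality only bounds $(\blacklozenge_\Rmsf\top)^{\Imc_\pmsf}$ from below, so when you exhibit the countermodel witnessing item~2 of Definition~\ref{def:querysanswer} for $\true(\Dmsf(a))$ you should saturate the positive extensions of $\Amsf$ and $\Dmsf$ (and the negative extensions of $\Amsf,\Bmsf$) to all of $\Delta^\Imc$, so that the internal inclusions hold whatever the under-determined extension of $\blacklozenge_\Rmsf\top$ turns out to be; the paper achieves the same effect by working over a singleton domain. Both are routine, but the verification should be made explicit.
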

\begin{proof}
For point 1, consider $\Kmc_1=\langle\Tmc_1,\Amc_1\rangle$ and let $\Kmc^\mathfrak{C}_1=\langle\Tmc^\mathfrak{C}_1,\Amc^\mathfrak{C}_1\rangle$ be the result of replacing $\exists$ with $\blacklozenge$ and $\neg$ with $-$.
\begin{align*}
\Tmc_1&=\{\exists\Rmsf.\top\sqsubseteq\Amsf,\exists\Rmsf^-.\top\sqsubseteq\neg\Amsf, \exists\Rmsf.\top\sqsubseteq\Cmsf, \Cmsf\sqsubseteq\neg\Bmsf\}\\
\Amc_1&=\{\Rmsf(a,a), \Bmsf(a)\}
\end{align*}
Observe that the only (closed) repair is $\{\Bmsf(a)\}$, so $\Kmc_1\models_\IAR\Bmsf(a)$, $\Kmc_1\not\models_\brave\Cmsf(a)$ and $\Kmc_1\not\models_\CAR\Cmsf(a)$. However, we can show that $\Kmc^\mathfrak{C}_1\not\models_\four\true(\Bmsf(a))$ while $\Kmc^\mathfrak{C}_1\models_\four\true(\Cmsf(a))$. 
Indeed, for every $\four$-model $\Imc$ of $\Kmc^\mathfrak{C}_1$, since $(a^\Imc,a^\Imc)\in\Rmsf^\Imc$, by \textsc{ex}-standardness or normality of $\blacklozenge_\Rmsf$, 
$a^\Imc\in(\blacklozenge_\Rmsf\top)^{\Imc_\pmsf}$ so $a^\Imc\in\Cmsf^{\Imc_\pmsf}$ and $a^\Imc\in(-\Bmsf)^{\Imc_\pmsf}=\Bmsf^{\Imc_\nmsf}$ since $-$ is paraconsistent. 
Moreover, the following $\four$-model $\Jmc$ of $\Kmc^\mathfrak{C}_1$ is such that $a^\Jmc\notin\Cmsf^{\Jmc_\nmsf}$: $\Rmsf^\Jmc=\{(a^\Jmc,a^\Jmc)\}$, $\Amsf^{\Jmc_\pmsf}=\Amsf^{\Jmc_\nmsf}=\{a^\Jmc\}$, $\Bmsf^{\Jmc_\pmsf}=\Bmsf^{\Jmc_\nmsf}=\{a^\Jmc\}$, $\Cmsf^{\Jmc_\pmsf}=\{a^\Jmc\}$, and $\Cmsf^{\Jmc_\nmsf}=\emptyset$.

For point 2, we consider the following example inspired by~\cite{BienvenuBourgaux2016}. Let $\Kmc_2=\langle\Tmc_2,\Amc_2\rangle$ and $\Kmc^\mathfrak{C}_2$ be the result of replacing $\sqcap$ by $\otimes$ and $\neg$ by $-$.
\begin{align*}%\label{equ:not-brave-CAR-but-true}
\Tmc_2&=\left\{\begin{matrix}\mathsf{Prf}\sqsubseteq\mathsf{Emp},&\mathsf{UGr}\sqsubseteq\mathsf{Std},\\
\mathsf{Std}\sqcap\mathsf{Emp}\sqsubseteq\mathsf{EmpStd},&\mathsf{EmpStd}\sqsubseteq\neg\mathsf{Tech},\\
\mathsf{Prf}\sqsubseteq\neg\mathsf{Std},&\mathsf{Std}\sqsubseteq\neg\mathsf{Prf},\\
\mathsf{UGr}\sqsubseteq\neg\mathsf{Emp},&
\mathsf{Emp}\sqsubseteq\neg\mathsf{UGr}&%\\
%\mathsf{Tech}\sqsubseteq\neg\mathsf{EmpStd}
\end{matrix}\right\}\nonumber\\
\Amc_2&=\{\mathsf{Prf}(\mathbf{s}), \mathsf{UGr}(\mathbf{s}),\mathsf{Tech}(\mathbf{s})\}
\end{align*}
The repairs are $\{\mathsf{Prf}(\mathbf{s}),\mathsf{Tech}(\mathbf{s})\}$ and $\{\mathsf{UGr}(\mathbf{s}),\mathsf{Tech}(\mathbf{s})\}$ and the closed repairs are $\{\mathsf{Prf}(\mathbf{s}),\mathsf{Tech}(\mathbf{s}),\mathsf{Emp}(\mathbf{s})\}$ and $\{\mathsf{UGr}(\mathbf{s}),\mathsf{Tech}(\mathbf{s}), \mathsf{Std}(\mathbf{s})\}$ so $\Kmc_2\models_\IAR\mathsf{Tech}(\mathbf{s})$, $\Kmc_2\not\models_\brave\mathsf{EmpStd}(\mathbf{s})$ and $\Kmc_2\not\models_\CAR\mathsf{EmpStd}(\mathbf{s})$. However, we show that $\Kmc^\mathfrak{C}_2\not\models_\four\true(\mathsf{Tech}(\mathbf{s}))$ while $\Kmc^\mathfrak{C}_2\models_\four\true(\mathsf{EmpStd}(\mathbf{s}))$. Indeed, let $\Imc$ be a $\four$-model of $\Kmc^\mathfrak{C}_2$. Since $\mathbf{s}^\Imc\in\mathsf{Prf}^{\Imc_\pmsf}$, $\mathbf{s}^\Imc\in\mathsf{Emp}^{\Imc_\pmsf}$ and similarly, since $\mathbf{s}^\Imc\in\mathsf{UGr}^{\Imc_\pmsf}$, $\mathbf{s}^\Imc\in\mathsf{Std}^{\Imc_\pmsf}$. Since $\mathbf{s}^\Imc\in\mathsf{Std}^{\Imc_\pmsf}\cap\mathsf{Emp}^{\Imc_\pmsf}$, by \textsc{and}-standardness of $\otimes$, $\mathbf{s}^\Imc\in(\mathsf{Std}\otimes\mathsf{Emp})^{\Imc_\pmsf}$. 
Hence, $\mathbf{s}^\Imc\in\mathsf{EmpStd}^{\Imc_\pmsf}$ and $\mathbf{s}^\Imc\in (-\mathsf{Tech})^{\Imc_\pmsf}=\mathsf{Tech}^{\Imc_\nmsf}$. 
Moreover, the following $\four$-model $\Jmc$ of $\Kmc^\mathfrak{C}_2$ is such that $\mathbf{s}^\Jmc\notin\mathsf{EmpStd}^{\Jmc_\nmsf}$: $\Amsf^{\Jmc_\pmsf}=\Amsf^{\Jmc_\nmsf}=\{\mathbf{s}^\Jmc\}$ for $\Amsf\in\{\mathsf{Prf},\mathsf{Emp},\mathsf{UGr},\mathsf{Std},\mathsf{Tech}\}$, $\mathsf{EmpStd}^{\Jmc_\pmsf}=\{\mathbf{s}^\Jmc\}$ and $\mathsf{EmpStd}^{\Jmc_\nmsf}=\emptyset$.
\end{proof}
One could of course wonder what happens if we use value operators other than $\true$ in queries. 
However, note that $\neither(\Amsf(a))$ and $\false(\Amsf(a))$ require that $a^\Imc\notin\Amsf^{\Imc_\pmsf}$ in all $\four$-models, so intuitively that $\Amsf(a)$ cannot be derived, while queries under repair-based semantics only look for answers that can be derived in some way. 
Regarding $\both$, we can see that $\Kmc^\mathfrak{C}_1\models_\four\both(\Amsf(a))$ while $\Kmc_1\not\models_\Xmsf\Amsf(a)$ with $\Xmsf\in\{\brave,\CAR\}$, and that if we let $\Kmc_3=\langle\emptyset,\{\Dmsf(a)\}\rangle$, $\Kmc^\mathfrak{C}_3\models_\IAR \Dmsf(a)$ while $\Kmc^\mathfrak{C}_3\not\models_\four\both(\Dmsf(a))$.

We conclude this section by recalling the computational advantages of paraconsistent reasoning over repair-based semantics: %Its data complexity is lower 
BCQ entailment under $\brave$, (resp.~$\AR$ and $\IAR$) is $\Sigma^P_2$-hard (resp.~$\Pi^P_2$-hard) in $\mathcal{ALC}$ and $\np$-hard (resp.~$\conp$-hard) in $\mathcal{EL}_\bot$ w.r.t.~data complexity \cite{BienvenuBourgaux2016}. Moreover, paraconsistent reasoning does not need to assume that the TBox is satisfiable (the $\AR$ semantics has been generalised to repairs that may remove part of the TBox as well \cite{EiterLukasiewiczPredoiu2016} but the complexity of the generalised semantics is at least as high as that of $\AR$).

\section{Conclusion and Discussion}\label{sec:conclusion}
In this paper, we presented a new approach to querying inconsistent DL KBs based upon paraconsistent logic, which we show to be incomparable to repair-based semantics. Differently from existing paraconsistent OMQA approaches, our query language enables us to take full advantage of the four-valued semantics, making it possible to differentiate between \emph{exactly true} and \emph{at least true} instances of a~concept. We proved that our approach is computationally well-behaved (cf.~Table~\ref{tab:complexity-paracons}): in Horn KBs, the combined and data complexity of paraconsistent query answering coincides with that of the classical certain answers semantics; in expressive DLs, data complexity of paraconsistent BCQV entailment remains lower than in repair-based semantics. 
Moreover, our complexity results rely on a simple reduction of CQV answering to OMQA, providing a way to readily implement CQV answering. We also expect that the technique based on translation we provide can be adapted to more expressive DLs (the translation given by \cite{MaierMaHitzler2013} that we adapted was for $\mathcal{SROIQ}$). 

Paraconsistent DLs assign truth values to concept assertions and are in this regard close to \emph{fuzzy DLs}, in which concept memberships are evaluated using \emph{degrees}. In particular, it is natural to wonder whether there is a relationship between our work and \emph{lattice-based} fuzzy DLs that allow for incomparable membership degrees \cite{DBLP:journals/ijar/BorgwardtP14}. If we consider fuzzy DLs based on a lattice formed by Belnapian values (be it the lattice with $\true$ as the supremum and $\false$ as the infimum or the one with $\both$ as the supremum and $\neither$ as the infimum) and queries that allow one to ask that a concept holds to at least some degree, then one can capture the semantics of CQs (without value operators) in paraconsistent DLs (Definition~\ref{def:CQ}). However, it would not be possible to capture CQVs under the semantics we introduce. For example, considering the lattice with $\both$ as supremum, in our semantics $\true(\Amsf(a))$ would mean that $\Amsf(a)$ has degree at least $\true$ in all models and that \emph{there exists a model such that $\Amsf(a)$ has not degree at least $\false$}, which is not directly expressible in fuzzy DLs.

Following this idea of queries requiring that an atom has “degree at least $\Xmbf$”, note that CQV atoms of the form $\Amsf(t)$ can be seen as two-valued atoms “$\true$ or $\both$” (at least positive evidence). We could extend the definition of CQVs to allow for multi-valued atoms and would easily treat the case “$\false$ or $\both$" (at least negative evidence) by extending $\qmbf^+$ (Definition~\ref{def:querytranslation}) with $\Amsf^-(t)$ for such atoms and not taking them into account in $\qmbf^\ctr$. However, allowing multi-valued query atoms in general would affect the results. For example, the cases “$\neither$ or $\false$" (no positive evidence) or “$\neither$ or $\true$" (no negative evidence) would be equivalent to having a (classical) negation in the query (we would need atoms of the form $\neg \Amsf^+(t)$ or $\neg \Amsf^-(t)$ in $\qmbf^+$) so we would need to reduce queries with such atoms to queries with negative atoms in classical DLs, which are known to be much harder to handle and will lead to higher complexity results. 

In future work, we plan to adapt CQVs to paraconsistent DL knowledge bases with four-valued roles by allowing value operators also on role atoms. We expect that our complexity results will continue to hold in the presence of four-valued roles under any of the previously proposed semantics, by adapting the translation-based approach. However, what kind of new inferences we can obtain by adding four-value roles will depend on the adopted semantics for roles and on which other DL constructors are present.
\section*{Acknowledgements}
This work was supported by the ANR AI Chair INTENDED (ANR-19-CHIA-0014).
\bibliographystyle{kr}
\bibliography{kr-sample}

\newpage
\appendix
\section{Proofs of Section~\ref{sec:ALCfourtriangle}}
Classical counterparts ($\cdot^\cl$) are defined in Definition~\ref{def:triangleembedding}.
\proptriangleembeddingfirst*
\begin{proof}
Let $\Kmc\not\models_\four\phi$ and $\Imc=\langle\Delta^\Imc,\cdot^{\Imc_\pmsf},\cdot^{\Imc_\nmsf}\rangle$ be a~$\four$-interpretation that falsifies the entailment. 
We show that $\Imc^\cl\models\Kmc^\cl$ and $\Imc^\cl\not\models\phi^\cl$, so that $\Kmc^\cl\not\models\phi^\cl$. 
Since role interpretations are not affected by $\cdot^\cl$, it suffices to show that $(\Cmsf^\cl)^{\Imc^\cl}=\Cmsf^{\Imc_\pmsf}$ for every concept $\Cmsf$. We proceed by induction on~$\Cmsf$. The proof 
is mostly the same as that of
~\cite[Proposition~38]{MaierMaHitzler2013}, so we only consider the cases with $\triangle$. The cases of $\Cmsf=\triangle\Amsf$ and $\Cmsf=\triangle\neg\Amsf$ are straightforward since $(\triangle\Dmsf)^{\Imc_\pmsf}=\Dmsf^{\Imc_\pmsf}$ for any concept $\Dmsf$. Let now $\Cmsf=\neg\triangle\Amsf$. Then $(\neg\triangle\Amsf)^{\Imc_\pmsf}=\Delta^\Imc\setminus\Amsf^{\Imc_\pmsf}=\Delta^{\Imc^\cl}\setminus(\Amsf^+)^{\Imc^\cl}=({\neg}\Amsf^+)^{\Imc^\cl}$, as required. The case of $\Cmsf=\neg\triangle\neg\Amsf$ can be tackled similarly: 
$(\neg\triangle\neg\Amsf)^{\Imc_\pmsf}=(\triangle\neg\Amsf)^{\Imc_\nmsf}=\Delta^\Imc\setminus(\neg\Amsf)^{\Imc_\pmsf}=\Delta^\Imc\setminus\Amsf^{\Imc_\nmsf}=\Delta^{\Imc^\cl}\setminus(\Amsf^-)^{\Imc^\cl}=({\neg}\Amsf^-)^{\Imc^\cl}$, as required. 
This also shows that $\Imc\models_\four\Kmc$ iff $\Imc^\cl\models\Kmc^\cl$.

For the converse, let $\Imc=\langle\Delta^\Imc,\cdot^\Imc\rangle$ be a~\emph{classical} interpretation that witnesses $\Kmc^\cl\not\models\phi^\cl$. We define its \emph{$\four$-counterpart} $\Imc^\four=\langle\Delta^\Imc,\cdot^{\Imc^\four_\pmsf},\cdot^{\Imc^\four_\nmsf}\rangle$ as follows: $\Amsf^{\Imc^\four_\pmsf}=(\Amsf^+)^\Imc$ and $\Amsf^{\Imc^\four_\nmsf}=(\Amsf^-)^\Imc$ for $\Amsf^+,\Amsf^-\in\CN$; $\Rmsf^{\Imc^\four}=\Rmsf^\Imc$ for $\Rmsf\in\RN$ and $a^{\Imc^\four}=a^\Imc$ for $a\in\IN$. We show that $\Imc^\four\models_\four\Kmc$ and $\Imc^\four\not\models_\four\phi$ by checking that $\Cmsf^{\Imc^\four_\pmsf}=(\Cmsf^\cl)^\Imc$ for every concept. The proof is again easy to make by induction on $\Cmsf$.
\end{proof}

Recall that $\chi^\triangle$ denotes the result of putting $\triangle$ in front of every concept name occurring in $\chi$.

\lemreductionclassicalparaconsistent*
\begin{proof}
Let $\Imc=\langle\Delta^\Imc,\cdot^\Imc\rangle$ be a~classical model of $\Kmc$. Now, let $\cdot^{\Imc^\four_\nmsf}$ be arbitrary and $\cdot^{\Imc^\four_\pmsf}=\cdot^{\Imc}$ and define $\Imc^\four=\langle\Delta^\Imc,\cdot^{\Imc^\four_\pmsf},\cdot^{\Imc^\four_\nmsf}\rangle$. We show that $\Imc_\four\models_\four\Kmc^\triangle$. 
Since $\cdot^\triangle$ does not affect role names, it suffices to check that $(\Cmsf^\triangle)^{\Imc^\four_\pmsf}=\Cmsf^\Imc$ for every concept. We proceed by induction on $\Cmsf$. The basis case of $\Cmsf=\Amsf$ and $\Cmsf^\triangle=\triangle\Amsf$ follows by the construction of $\Imc_\four$ since $(\triangle\Amsf)^{\Imc^\four_\pmsf}=\Amsf^{\Imc^\four_\pmsf}=\Amsf^\Imc$. If $\Cmsf=\neg\Amsf$, we have that
\begin{align*}
(\neg\Amsf)^\Imc=\Delta^\Imc\setminus\Amsf^\Imc=\Delta^\Imc\setminus\Amsf^{\Imc^\four_\pmsf}=(\triangle\Amsf)^{\Imc^\four_\nmsf}=(\neg\triangle\Amsf)^{\Imc^\four_\pmsf}
\end{align*}
The cases of other connectives and quantifiers follow by an application of the induction hypothesis.

In the other direction, let $\Imc^\four=\langle\Delta^{\Imc^\four},\cdot^{\Imc^\four_\pmsf},\cdot^{\Imc^\four_\nmsf}\rangle$ be a $\four$-model of $\Kmc^\triangle$ and define $\Imc=\langle\Delta^{\Imc^\four},\cdot^\Imc\rangle$ with $\cdot^\Imc=\cdot^{\Imc^\four_\pmsf}$. To show that $\Imc\models\Kmc$, it again suffices to check that $\Cmsf^\Imc=(\Cmsf^\triangle)^{\Imc^\four_\pmsf}$ for every concept, and the proof is similar to the one for the first direction.
\end{proof}

\section{Proofs of Section~\ref{sec:4queries}}

Recall that $\Kmc$ is an $\ALCHItrianglefour$ KB and $\qmbf$ is a BCQV such that the only value operators in $\qmbf$ are $\true$ and $\false$, and that $\Kmc^\flat$ and $\qmbf^\flat$ denote the results of removing all occurrences of $\triangle$ in $\Kmc$ and replacing every $\true(\Amsf(t))$ and $\false(\Amsf(t))$ in $\qmbf$ by $\Amsf(t)$ and $\neg \Amsf(t)$ respectively.

\begin{lemma}\label{lem:four-counterparts}
For every classical (two-valued) model $\Imc=\langle\Delta^\Imc,\cdot^\Imc\rangle$ of $\Kmc^\flat$, the \emph{$\four$-valued counterpart $\Imc^\four$ of $\Imc$} defined by $\Imc^\four=\langle\Delta^\Imc,\cdot^{\Imc^\four_\pmsf},\cdot^{\Imc^\four_\nmsf}\rangle$ with $\Amsf^{\Imc^\four_\pmsf}=\Amsf^\Imc$ and $\Amsf^{\Imc^\four_\nmsf}=\Delta^\Imc\setminus\Amsf^\Imc$ for all $\Amsf\in\CN$, $\Rmsf^{\Imc^\four}=\Rmsf^\Imc$ for all $\Rmsf\in\RN$, and $a^{\Imc^\four}=a^\Imc$ for $a\in\IN$ is such that $\Imc^\four\models_\four\Kmc$. 
\end{lemma}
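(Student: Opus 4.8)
The plan is to reduce the statement to a single structural fact about $\Imc^\four$: that it behaves ``classically'' on every $\ALCHItrianglefour$ concept, and that deleting $\triangle$ is harmless under it. Concretely, I would prove by induction on the concept $\Cmsf$ the claim
\[
  \Cmsf^{\Imc^\four_\pmsf}=(\Cmsf^\flat)^\Imc
  \quad\text{and}\quad
  \Cmsf^{\Imc^\four_\nmsf}=\Delta^\Imc\setminus\Cmsf^{\Imc^\four_\pmsf},
\]
where $\Cmsf^\flat$ is $\Cmsf$ with all occurrences of $\triangle$ removed. Once this claim is in hand, I would finish by inspecting each $\phi\in\Tmc\cup\Amc$. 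For a concept inclusion $\Cmsf\sqsubseteq\Dmsf\in\Tmc$, the KB $\Kmc^\flat$ contains $\Cmsf^\flat\sqsubseteq\Dmsf^\flat$, so $\Imc\models\Kmc^\flat$ gives $(\Cmsf^\flat)^\Imc\subseteq(\Dmsf^\flat)^\Imc$, and the claim turns this into $\Cmsf^{\Imc^\four_\pmsf}\subseteq\Dmsf^{\Imc^\four_\pmsf}$, i.e.\ $\Imc^\four\models_\four\Cmsf\sqsubseteq\Dmsf$. For role inclusions $\Smsf\sqsubseteq\Smsf'$ and assertions $\Amsf(a)$, $\Rmsf(a,b)$, the formula is untouched by $\flat$, and since $\Imc^\four$ agrees with $\Imc$ on all roles and their inverses, on individual names, and on the positive extensions of concept names, $\Imc\models\phi$ immediately yields $\Imc^\four\models_\four\phi$.

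For the induction I would argue as follows. The base case $\Cmsf=\Amsf\in\CN$ is the definition of $\Imc^\four$, and $\Cmsf\in\{\top,\bot\}$ is immediate. The case $\Cmsf=\neg\Dmsf$ uses $(\neg\Dmsf)^{\Imc^\four_\pmsf}=\Dmsf^{\Imc^\four_\nmsf}$, the induction hypothesis for $\Dmsf$, and $(\neg\Dmsf)^\flat=\neg(\Dmsf^\flat)$. The case $\Cmsf=\triangle\Dmsf$ is the crux of why $\flat$ is sound in this direction: by the induction hypothesis $\Dmsf^{\Imc^\four_\nmsf}=\Delta^\Imc\setminus\Dmsf^{\Imc^\four_\pmsf}$, hence $(\triangle\Dmsf)^{\Imc^\four_\pmsf}=\Dmsf^{\Imc^\four_\pmsf}$ and $(\triangle\Dmsf)^{\Imc^\four_\nmsf}=\Delta^\Imc\setminus\Dmsf^{\Imc^\four_\pmsf}=\Dmsf^{\Imc^\four_\nmsf}$, so $\triangle\Dmsf$ and $\Dmsf$ have the same extensions in $\Imc^\four$, matching $(\triangle\Dmsf)^\flat=\Dmsf^\flat$. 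The cases $\Cmsf=\Dmsf_1\sqcap\Dmsf_2$ and $\Cmsf=\Dmsf_1\sqcup\Dmsf_2$ follow from the semantic clauses (positive extension an intersection/union, negative extension the dual) together with the induction hypothesis, checking complementation via De~Morgan; $\sqcup$ can alternatively be handled through its definition $\neg(\neg\Dmsf_1\sqcap\neg\Dmsf_2)$. Finally, for quantifiers one uses $\Smsf^{\Imc^\four}=\Smsf^\Imc$ for every $\Smsf\in\RNinv$: then $(\forall\Smsf.\Dmsf)^{\Imc^\four_\pmsf}=\{x\mid\forall y\,((x,y)\in\Smsf^\Imc\Rightarrow y\in(\Dmsf^\flat)^\Imc)\}=(\forall\Smsf.\Dmsf^\flat)^\Imc$, while $(\forall\Smsf.\Dmsf)^{\Imc^\four_\nmsf}=\{x\mid\exists y\,((x,y)\in\Smsf^\Imc\ \&\ y\notin(\Dmsf^\flat)^\Imc)\}$ is exactly the complement of the positive extension; $\exists\Smsf.\Dmsf$ is then treated via $\neg\forall\Smsf.\neg\Dmsf$.

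The argument is largely routine bookkeeping, so the hard part is modest: it lies in the quantifier step of the induction, where one must verify that the existential form of $(\forall\Smsf.\Dmsf)^{\Imc^\four_\nmsf}$ is the set-theoretic complement of the universal form of $(\forall\Smsf.\Dmsf)^{\Imc^\four_\pmsf}$ — so that the ``classical'' behaviour of $\Imc^\four$ genuinely propagates through both quantifiers — and in recognising that the induction hypothesis on negative extensions is precisely what collapses $\triangle$ to the identity under $\Imc^\four$, without which the passage from $\Kmc$ to $\Kmc^\flat$ would not be justified.
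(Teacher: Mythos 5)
Your proposal is correct and follows essentially the same route as the paper: a structural induction establishing $\Cmsf^{\Imc^\four_\pmsf}=(\Cmsf^\flat)^\Imc$ and $\Cmsf^{\Imc^\four_\nmsf}=\Delta^\Imc\setminus(\Cmsf^\flat)^\Imc$, with the $\triangle$ case collapsing to the identity precisely because the negative extension is the complement of the positive one. The only (harmless) difference is that you spell out the final pass over the axioms and assertions, which the paper leaves implicit.
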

\begin{proof}
We show that $\Imc^\four\models_\four\Kmc$ by proving by structural induction that for every $\ALCHItrianglefour$ concept $\Cmsf$, $\Cmsf^{\Imc^\four_\pmsf}=(\Cmsf^\flat)^\Imc$ and $\Cmsf^{\Imc^\four_\nmsf}=\Delta^\Imc\setminus(\Cmsf^\flat)^\Imc$. 
\begin{itemize}
\item Base case: $\Amsf^\flat=\Amsf$ and by construction, $\Amsf^{\Imc^\four_\pmsf}=\Amsf^\Imc$ and $\Amsf^{\Imc^\four_\nmsf}=\Delta^\Imc\setminus\Amsf^{\Imc^\four_\pmsf}=\Delta^\Imc\setminus\Amsf^\Imc$.
\item Induction step: let $\Cmsf$ be an $\ALCHItrianglefour$ concept. 

If $\Cmsf=\neg\Dmsf$, then $\Cmsf^{\Imc^\four_\pmsf}=\Dmsf^{\Imc^\four_\nmsf}=\Delta^\Imc\setminus(\Dmsf^\flat)^\Imc=(\neg\Dmsf^\flat)^\Imc=(\Cmsf^\flat)^\Imc$ and $\Cmsf^{\Imc^\four_\nmsf}=\Dmsf^{\Imc^\four_\pmsf}=(\Dmsf^\flat)^\Imc=\Delta^\Imc\setminus(\neg\Dmsf^\flat)^\Imc=\Delta^\Imc\setminus(\Cmsf^\flat)^\Imc$. 

If $\Cmsf=\triangle\Dmsf$, $\Cmsf^{\Imc^\four_\pmsf}=\Dmsf^{\Imc^\four_\pmsf}=(\Dmsf^\flat)^\Imc=(\Cmsf^\flat)^\Imc$ and $\Cmsf^{\Imc^\four_\nmsf}=\Delta^\Imc\setminus\Dmsf^{\Imc^\four_\pmsf}=\Delta^\Imc\setminus(\Dmsf^\flat)^\Imc=\Delta^\Imc\setminus(\Cmsf^\flat)^\Imc$.

Finally, if $\Cmsf=\Dmsf_1\sqcap\Dmsf_2$, $\Cmsf=\Dmsf_1\sqcup\Dmsf_2$, $\Cmsf=\exists\Rmsf.\Dmsf$ or $\Cmsf=\forall\Rmsf.\Dmsf$, observe from~\eqref{equ:ALCHI4propositional} that $\cdot^{\Imc^\four_\pmsf}$ and $\cdot^\Imc$ behave in the same way. We thus obtain $\Cmsf^{\Imc^\four_\pmsf}=(\Cmsf^\flat)^\Imc$ directly from the fact that, e.g., $\Cmsf^\flat=\Dmsf_1^\flat\sqcap\Dmsf_2^\flat$. We show that $\Cmsf^{\Imc^\four_\nmsf}=\Delta^\Imc\setminus(\Cmsf^\flat)^\Imc$ for the case $\Cmsf=\Dmsf_1\sqcap\Dmsf_2$ (the other cases can be shown in the same way). 
$\Cmsf^{\Imc^\four_\nmsf}=\Dmsf_1^{\Imc^\four_\nmsf}\cup\Dmsf_2^{\Imc^\four_\nmsf}=(\Delta^\Imc\setminus(\Dmsf_1^\flat)^\Imc)\cup(\Delta^\Imc\setminus(\Dmsf_2^\flat)^\Imc)=\Delta^\Imc\setminus((\Dmsf_1^\flat)^\Imc\cap(\Dmsf_2^\flat)^\Imc)=\Delta^\Imc\setminus(\Cmsf^\flat)^\Imc$. 
\qedhere
\end{itemize}
\end{proof}

\propsoundness*
\begin{proof}
Assume that $\Kmc^\flat\not\models\qmbf^\flat$. Since $\Kmc^\flat\not\models\qmbf^\flat$, there is a~classical model $\Imc=\langle\Delta^\Imc,\cdot^\Imc\rangle$ of $\Kmc^\flat$ where there is no match $\pi:\Term\mapsto\Delta^\Imc$ s.t.\
\begin{itemize}
\item $\pi(c)=c^\Imc$ for each $c\in\IN$;
\item $(\pi(t),\pi(t'))\in\Rmsf^\Imc$ for each $\Rmsf(t,t')\in\At(\qmbf^\flat)$;
\item $\pi(t)\in\Amsf^\Imc$ for each $\Amsf(t)\in\At(\qmbf^\flat)$;
\item $\pi(t)\notin\Amsf^\Imc$ for each $\neg\Amsf(t)\in\At(\qmbf^\flat)$.
\end{itemize}

Now, let $\Imc^\four=\langle\Delta^\Imc,\cdot^{\Imc^\four_\pmsf},\cdot^{\Imc^\four_\nmsf}\rangle$ be the~$\four$-valued counterpart of $\Imc$ as defined in Lemma~\ref{lem:four-counterparts}. 
Recall that $\Amsf^{\Imc^\four_\pmsf}=\Amsf^\Imc$ and $\Amsf^{\Imc^\four_\nmsf}=\Delta^\Imc\setminus\Amsf^\Imc$ for all $\Amsf\in\CN$. 
Hence, there is no match $\pi:\Term\mapsto\Delta^\Imc$ such that 
\begin{itemize}
\item $\pi(c)=c^{\Imc^\four}$ for each $c\in\IN$;
\item $(\pi(t),\pi(t'))\in\Rmsf^{\Imc^\four}$ for each $\Rmsf(t,t')\in\At(\qmbf)$;
\item $\pi(t)\in\Amsf^{\Imc^\four_\pmsf}$ for each $\Amsf(t)\in\At^+(\qmbf)$;
\item $\pi(t)\in\Amsf^{\Imc^\four_\nmsf}$ for each $\Amsf(t)\in\At^{\both\false}(\qmbf)$.
\end{itemize}
However, since $\Imc^\four\models_\four\Kmc$, 
the existence of such a match is required by item~\ref{item:classicalconditionquery} of Definition~\ref{def:querysanswer} for $\Kmc\models_\four\qmbf$ to hold. Thus, $\Kmc\not\models_\four\qmbf$.
\end{proof}

The following lemma will be used to prove Proposition~\ref{prop:completeness}.

\begin{lemma}\label{lemma:hom-universal-model-horn}
Let $\Kmc=\langle\Tmc,\Amc\rangle$ be a~classically satisfiable $\mathcal{ELHI}_\neg$ KB and $\Jmc_\Kmc=\langle\Delta^{\Jmc_\Kmc},\cdot^{\Jmc_\Kmc}\rangle$ be its two-valued universal model. For every $\four$-model $\Imc=\langle\Delta^\Imc,\cdot^{\Imc_\pmsf},\cdot^{\Imc_\nmsf}\rangle$ of $\Kmc$, there is a~homomorphism between $\Jmc_\Kmc$ and $\langle\Delta^\Imc,\cdot^{\Imc_\pmsf}\rangle$, i.e., there exists $h:\Delta^{\Jmc_\Kmc}\mapsto\Delta^{\Imc}$ such that:
\begin{itemize}
\item $h(a^{\Jmc_\Kmc})=a^\Imc$ for every $a\in\IN$;
\item $(c,d)\in\Rmsf^{\Jmc_\Kmc}$ implies $(h(c),h(d))\in\Rmsf^{\Imc}$;
\item $c\in\Amsf^{\Jmc_\Kmc}$ implies $h(c)\in\Amsf^{\Imc_\pmsf}$.
\end{itemize}
\end{lemma}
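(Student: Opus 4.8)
The plan is to build on the observation that, for the concept constructors available in $\mathcal{ELHI}_\neg$ --- namely $\top$, $\bot$, $\sqcap$ and $\exists\Smsf.(\cdot)$ --- the positive interpretation function of a $\four$-interpretation behaves exactly like a classical interpretation function. Unfolding~\eqref{equ:ALCHI4propositional} together with $\bot\coloneqq\neg\top$ and $\exists\Smsf.\Cmsf\coloneqq\neg\forall\Smsf.\neg\Cmsf$ gives $\top^{\Imc_\pmsf}=\Delta^\Imc$, $\bot^{\Imc_\pmsf}=\emptyset$, $(\Cmsf\sqcap\Dmsf)^{\Imc_\pmsf}=\Cmsf^{\Imc_\pmsf}\cap\Dmsf^{\Imc_\pmsf}$ and $(\exists\Smsf.\Cmsf)^{\Imc_\pmsf}=\{x\mid\exists y\,((x,y)\in\Smsf^\Imc~\&~y\in\Cmsf^{\Imc_\pmsf})\}$. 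Hence, writing $\Imc_\pmsf$ for the classical interpretation $\langle\Delta^\Imc,\cdot^{\Imc_\pmsf}\rangle$ (with roles and individuals inherited from $\Imc$), the $\four$-satisfaction by $\Imc$ of any $\mathcal{ELHI}$ axiom not mentioning $\neg$ is equivalent to its classical satisfaction by $\Imc_\pmsf$. A point that requires care: $\Imc_\pmsf$ need \emph{not} be a classical model of $\Kmc$, since an axiom $\Amsf\sqsubseteq\neg\Bmsf$ reads classically as $\Amsf\sqcap\Bmsf\sqsubseteq\bot$, which $\Imc_\pmsf$ may well violate. But $\Imc_\pmsf$ \emph{is} a classical model of $\Kmc^+$ (obtained from $\Kmc$ by deleting all inclusions $\Amsf\sqsubseteq\neg\Bmsf$, cf.\ Proposition~\ref{prop:hornsameasdropneg}), because all remaining axioms of $\Kmc$ lie in the $\neg$-free fragment above.

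Next I would argue that $\Jmc_\Kmc$ is a universal model of $\Kmc^+$ as well, not only of $\Kmc$. Since $\Kmc$ is classically satisfiable, the chase that produces $\Jmc_\Kmc$ never triggers a $\bot$-axiom (nor the disjointness hidden in an $\Amsf\sqsubseteq\neg\Bmsf$), so it builds exactly the same structure as the chase of $\Kmc^+$; and for the $\mathcal{ELHI}_\bot$ KB $\Kmc^+$ this structure is its universal model. Universality then immediately yields a homomorphism $h\colon\Jmc_\Kmc\to\Imc_\pmsf$, i.e.\ a map $h\colon\Delta^{\Jmc_\Kmc}\to\Delta^\Imc$ with $h(a^{\Jmc_\Kmc})=a^\Imc$ for $a\in\IN$, with $(c,d)\in\Rmsf^{\Jmc_\Kmc}$ implying $(h(c),h(d))\in\Rmsf^\Imc$, and with $c\in\Amsf^{\Jmc_\Kmc}$ implying $h(c)\in\Amsf^{\Imc_\pmsf}$ --- precisely the statement.

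A more self-contained alternative is to construct $h$ directly by induction along the chase sequence of $\Kmc$: set $h(a^{\Jmc_\Kmc})=a^\Imc$ (valid since $\Imc\models_\four\Amc$), and at each rule application invoke $\Imc\models_\four\Kmc$ together with the classical behaviour of $\cdot^{\Imc_\pmsf}$ recorded above --- for an existential rule $\Amsf\sqsubseteq\exists\Smsf.\Bmsf$ fired at $c$, choose as the image of the new successor some $e$ with $(h(c),e)\in\Smsf^\Imc$ and $e\in\Bmsf^{\Imc_\pmsf}$; for the fact-generating rules ($\exists\Smsf.\Amsf\sqsubseteq\Cmsf$, $\Amsf\sqcap\Bmsf\sqsubseteq\Cmsf$, role inclusions) simply check that the three invariants are preserved. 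I expect the only genuine obstacle to be conceptual rather than technical: noticing that one must pass to $\Kmc^+$, because the positive part of a $\four$-model is in general not a classical model of $\Kmc$ itself, and confirming that this substitution does not change the universal model. Setting up the $\mathcal{ELHI}_\bot$ chase in the presence of inverse roles and role inclusions, and observing that $\bot$ is never derived --- so every chase step can be mirrored in $\Imc$, since $\bot^{\Imc_\pmsf}=\emptyset$ and $\Imc$ respects the $\bot$-axioms of $\Kmc$ --- is then routine bookkeeping.
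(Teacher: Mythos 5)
Your proposal is correct, and in fact contains the paper's proof as its ``self-contained alternative'': the paper constructs $h$ directly by induction along the chase sequence $\Jmc_0,\Jmc_1,\dots$ defining $\Jmc_\Kmc$, extending $h$ at each existential rule application by picking a witness in $\Cmsf^{\Imc_\pmsf}$ guaranteed by $\Imc\models_\four\Amsf\sqsubseteq\exists\Smsf.\Bmsf$, exactly as you sketch. Your primary route is a legitimate repackaging of the same content: you first isolate the key observation that $\cdot^{\Imc_\pmsf}$ is a classical interpretation of the $\neg$-free fragment, so that $\langle\Delta^\Imc,\cdot^{\Imc_\pmsf}\rangle\models\Kmc^+$ (this is precisely the step the paper already verifies in Proposition~\ref{prop:hornsameasdropneg}), and then invoke the standard universality of the $\mathcal{ELHI}_\bot$ chase for $\Kmc^+$ rather than re-deriving it. What the citation route buys is brevity and a cleaner conceptual separation (``positive parts of $\four$-models of $\Kmc$ are exactly classical models of $\Kmc^+$, so universality w.r.t.\ $\Kmc^+$ suffices''); what the paper's inline induction buys is self-containedness, since it never needs to name $\Kmc^+$ or appeal to an external universality theorem --- indeed the chase as defined in the paper already ignores the $\bot$- and $\neg$-axioms, so the identification of the chase of $\Kmc$ with that of $\Kmc^+$ is immediate. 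Your flagged subtlety --- that $\langle\Delta^\Imc,\cdot^{\Imc_\pmsf}\rangle$ need not model $\Kmc$ itself because of the $\Amsf\sqsubseteq\neg\Bmsf$ axioms --- is exactly the right point to worry about, and your resolution is sound.
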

\begin{proof}
Recall that $\Jmc_\Kmc=\bigcup_{i\geq 0}\Jmc_i$ where 
$\Delta^{\Jmc_0}=\IN$, 
\begin{itemize}
\item $a^{\Jmc_0}=a$ for every $a\in\IN$;
\item $\Amsf^{\Jmc_0}=\{a\mid\Amsf(a)\in\Amc\}$ for every $\Amsf\in\CN$;
\item $\Rmsf^{\Jmc_0}=\{(a,b)\mid\Rmsf(a,b)\in\Amc\}$ for every $\Rmsf\in\RN$; 
\end{itemize}
and $\Jmc_{i+1}$ results from applying one of the rules below to~$\Jmc_i$ (ignoring TBox inclusions with $\bot$ in the right-hand side).
\begin{enumerate}[noitemsep,topsep=1pt]
\item If $\Smsf_1\!\sqsubseteq\!\Smsf_2\!\in\!\Tmc$ and $(d,\!e)\!\in\!\Smsf_1^{\Jmc_i}$, then $\Smsf_2^{\Jmc_{i+1}}\!=\!\Smsf_2^{\Jmc_i}\!\cup\!\{(d,\!e)\}$.
\item If $\Amsf_1\sqcap \Amsf_2\!\sqsubseteq\!\Bmsf\!\in\!\Tmc$ and $d\!\in\!\Amsf_1^{\Jmc_i}\cap\Amsf_2^{\Jmc_i}$, then $\Bmsf^{\Jmc_{i+1}}\!=\!\Bmsf^{\Jmc_i}\!\cup\!\{d\}$.
\item If $\exists\Smsf.\Amsf\sqsubseteq\Bmsf\in\Tmc$, $(d,e)\in\Smsf^{\Jmc_i}$, and $e\in\Amsf^{\Jmc_i}$, then $\Bmsf^{\Jmc_{i+1}}=\Bmsf^{\Jmc_i}\cup\{d\}$.
\item If $\Amsf\sqsubseteq\exists\Smsf.\Bmsf\in\Tmc$ and $d\in\Amsf^{\Jmc_i}$, then $\Delta^{\Jmc_{i+1}}=\Delta^{\Jmc_i}\cup\{x\}$, $\Smsf^{\Jmc_{i+1}}=\Smsf^{\Jmc_i}\cup\{(d,x)\}$ and $\Bmsf^{\Jmc_{i+1}}=\Bmsf^{\Jmc_i}\cup\{x\}$ where $x\notin\Delta^{\Jmc_i}$ is a~fresh domain element.
\end{enumerate}

Let $\Imc=\langle\Delta^\Imc,\cdot^{\Imc_\pmsf},\cdot^{\Imc_\nmsf}\rangle$ be a~$\four$-model of $\Kmc$. It is easy to build inductively a~homomorphism $h\coloneqq\bigcup\limits_{i\geq 0}h_i$ as required. We start with $h_0$ being the identity on $\IN$, which is such that 
\begin{itemize}
\item $h_0(a^{\Jmc_0})=a^\Imc$ for every $a\in\IN$;
\item $(c,d)\in\Rmsf^{\Jmc_0}$ implies $\Rmsf(c,d)\in\Amc$, so $(h_0(c),h_0(d))\in\Rmsf^{\Imc}$;
\item $c\in\Amsf^{\Jmc_0}$ implies $\Amsf(c)\in\Amc$ so $h_0(c)\in\Amsf^{\Imc_\pmsf}$.
\end{itemize}

Then for every $i\geq 0$, if $h_i:\Delta^{\Jmc_i}\mapsto\Delta^{\Imc}$ is a~homomorphism from $\Jmc_i$ to $\langle\Delta^\Imc,\cdot^{\Imc_\pmsf}\rangle$, we build a~homomorphism $h_{i+1}:\Delta^{\Jmc_{i+1}}\mapsto\Delta^{\Imc}$ from $\Jmc_{i+1}$ to $\langle\Delta^\Imc,\cdot^{\Imc_\pmsf}\rangle$ as follows.

If $\Jmc_{i+1}$ has been obtained from $\Jmc_i$ by applying rule 1, 2 or 3, let $h_i=h_{i+1}$. Since $\Imc$ is a~model of $\Tmc$, it is easy to check that $h_{i+1}$ is still a~homomorphism. Otherwise, if $\Jmc_{i+1}$ has been obtained from $\Jmc_i$ by applying rule 4 to $\Amsf\sqsubseteq\exists\Smsf.\Bmsf\in\Tmc$ and $d\in\Amsf^{\Jmc_i}$, by induction we obtain $h_i(d)\in\Amsf^{\Imc_\pmsf}$. So, since $\Imc\models_\four\Amsf\sqsubseteq\exists\Smsf.\Bmsf$, there must exists $e$ s.t.\ $(h_i(d),e)\!\in\!\Smsf^\Imc$ and $e\!\in\!\Bmsf^{\Imc_\pmsf}$ and we define $h_{i+1}(x)=e$, so that $(h_{i+1}(d),h_{i+1}(x))\!\in\!\Smsf^\Imc$ and $h_{i+1}(x)\!\in\!\Bmsf^{\Imc_\pmsf}$.
\end{proof}

\propcompleteness*
\begin{proof}
Assume that $\Kmc^\flat\models\qmbf^\flat$ and assume for a contradiction that $\Kmc\not\models_\four \qmbf$, i.e., that one of the following conditions holds.
\begin{enumerate}[noitemsep,topsep=1pt]
\item[(i)] There exists a $\four$-model $\Imc=\langle\Delta^\Imc,\cdot^{\Imc_\pmsf},\cdot^{\Imc_\nmsf}\rangle$ of $\Kmc$ such that there is no match $\pi:\Term\mapsto \Delta^\Imc$ as required by item~\ref{item:classicalconditionquery} of Definition \ref{def:querysanswer}.
\item[(ii)] For every $\four$-model $\Imc=\langle\Delta^\Imc,\cdot^{\Imc_\pmsf},\cdot^{\Imc_\nmsf}\rangle$ of $\Kmc$ and every match $\pi$ as required by item~\ref{item:classicalconditionquery} of Definition~\ref{def:querysanswer}, there exists $\true(\Amsf(t))\in\At(\qmbf)$ such that $\pi(t)\in \Amsf^{\Imc_\nmsf}$. 
\end{enumerate}

In case (i), first note that since $\ELHInegtrianglefour$ KBs actually do not contain $\triangle$, $\Kmc=\Kmc^\flat$ so $\Imc\models_\four\Kmc$ implies that $\Imc\models_\four\Kmc^\flat$. 
Hence, by Lemma \ref{lemma:hom-universal-model-horn}, there is a homomorphism $h:\Delta^{\Jmc_\Kmc}\mapsto\Delta^{\Imc}$ from the two-valued universal model $\Jmc_\Kmc$ of $\Kmc^\flat$ and $\langle\Delta^\Imc,\cdot^{\Imc_\pmsf}\rangle$ such that:
 \begin{itemize}
 \item $h(a^{\Jmc_\Kmc})=a^\Imc$ for every $a\in\IN$,
 \item $(c,d)\in \Rmsf^{\Jmc_\Kmc}$ implies $(h(c),h(d))\in \Rmsf^{\Imc}$,
 \item $c\in \Amsf^{\Jmc_\Kmc}$ implies $h(c)\in \Amsf^{\Imc_\pmsf}$.
 \end{itemize}
 Since $\Jmc_\Kmc\models \Kmc^\flat$, $\Kmc^\flat\models \qmbf^\flat$ implies that $\Jmc_\Kmc\models \qmbf^\flat$ so there is a match $\pi$ for $\qmbf^\flat$ in $\Jmc_\Kmc$. We show that $\pi'=h\circ \pi$ is a match for $\qmbf$ in $\Imc$ as required by item~\ref{item:classicalconditionquery} of Definition \ref{def:querysanswer}, contradicting~(i). 
 For every $c\in\IN$, $\pi'(c)=h(\pi(c))=h(c^{\Jmc_\Kmc})=c^\Imc$, and 
 \begin{itemize}
 \item for every $\Rmsf(t_1,t_2)\in\At(\qmbf)$, $(\pi(t_1),\pi(t_2))\in \Rmsf^{\Jmc_\Kmc}$ so $(h(\pi(t_1)),h(\pi(t_2)))=(\pi'(t_1),\pi'(t_2))\in \Rmsf^\Imc$,
 \item for every $\Amsf(t)\in\At(\qmbf)$, $\Amsf(t)\in\At(\qmbf^\flat)$ so $\pi(t)\in \Amsf^{\Jmc_\Kmc}$ and $h(\pi(t))=\pi'(t)\in \Amsf^{\Imc_\pmsf}$,
 \item for every $\true(\Amsf(t))\in\At(\qmbf)$, $\Amsf(t)\in\At(\qmbf^\flat)$ so as above $\pi'(t)\in \Amsf^{\Imc_\pmsf}$. 
 \end{itemize} 

In case (ii), let $\Jmc$ be a two-valued model of $\Kmc^\flat$ and 
let $\Jmc^\four=\langle\Delta^\Jmc,\cdot^{\Jmc^\four_\pmsf},\cdot^{\Jmc^\four_\nmsf}\rangle$ be the~$\four$-valued counterpart of $\Jmc$ as defined in Lemma~\ref{lem:four-counterparts}. 
 By construction, for every $\Amsf\in\CN$, there is no $e\in\Delta^{\Jmc}$ such that $e\in \Amsf^{\Jmc^\four_\pmsf}\cap \Amsf^{\Jmc^\four_\nmsf}$. In particular, for every match $\pi$ of $\qmbf$ in $\Jmc^\four$ as required by item~\ref{item:classicalconditionquery} of Definition \ref{def:querysanswer}, for every $\true(\Amsf(t))\in\At(\qmbf)$, since $\pi(t)\in \Amsf^{\Jmc^\four_\pmsf}$ is required by the definition of $\pi$, it follows that $\pi(t)\notin \Amsf^{\Jmc^\four_\nmsf}$. 
 Hence, $\Jmc^\four$ is a $\four$-model of $\Kmc$ that contradicts (ii).

We obtain a contradiction in both cases so we conclude that $\Kmc\models_\four \qmbf$.
\end{proof}

\section{Proofs of Section~\ref{sec:complexity}}

\subsection{Proof of Theorem \ref{theorem:queryreduction}}
Recall that $\qmbf=\exists\vec{y}:\varphi$ is a~BCQV, $\IN_\qmbf=\{c_x\mid x\in\Var\text{ occurs in }\qmbf\}$, 
\begin{align*}
c_t&=\begin{cases}t\text{ if }t\in\IN\\
c_t\text{ if }t\in\Var
%c_x\text{ if }\exists x\in\Var:x=t
\end{cases}
\end{align*}
and 
\begin{align*}
\qmbf^+&\coloneqq\exists\vec{y}:\bigwedge\limits_{\Rmsf(t,t')\in\At(\qmbf)}\!\!\!\!\!\!\Rmsf(t,t')\wedge\!\!\!\!\!\!\bigwedge\limits_{\Amsf(t)\in\At^+(\qmbf)}\!\!\!\!\!\!\!\!\!\Amsf^+(t)\wedge\!\!\!\!\!\!\bigwedge\limits_{\Amsf(t)\in\At^{\both\false}(\qmbf)}\!\!\!\!\!\!\!\!\!\Amsf^-(t)\\
\qmbf^\ctr&\coloneqq\bigvee\limits_{\Amsf(t)\in\At^{\true\neither}(\qmbf)}\Amsf^-(c_t)\vee\bigvee\limits_{\Amsf(t)\in\At^{\false\neither}(\qmbf)}\Amsf^+(c_t)\\
\Amc_\qmbf&\coloneqq\{\Rmsf(c_t,c_{t'})\mid\Rmsf(t,t')\in\At(\qmbf^+)\}\cup\\&\hspace{1.5em}\{\Amsf^+(c_t)\mid\Amsf^+(t)\in\At(\qmbf^+)\}\cup\\&\hspace{1.5em}\{\Amsf^-(c_t)\mid\Amsf^-(t)\in\At(\qmbf^+)\}
\end{align*}

\theoremqueryreduction*
\begin{proof}
\noindent($\Rightarrow$) Assume that $\Kmc\models_\four\qmbf$.

Let $\Imc^\cl=\langle\Delta^{\Imc^\cl},\cdot^{\Imc^\cl}\rangle$ be a~(classical) model of $\Kmc^\cl$ and $(\Imc^\cl)^\four=\langle\Delta^{(\Imc^\cl)^\four},\cdot^{(\Imc^\cl)^\four_\pmsf},\cdot^{(\Imc^\cl)^\four_\nmsf}\rangle$ be its \emph{$\four$-counterpart} defined as in the proof of Proposition~\ref{prop:triangleembedding1}: $(\Imc^\cl)^\four\models_\four\Kmc$. Thus, by item~\ref{item:classicalconditionquery} of Definition~\ref{def:querysanswer} there is a~match $\pi:\Term\mapsto\Delta^{(\Imc^\cl)^\four}=\Delta^{\Imc^\cl}$ such that $\pi(c)=c^{(\Imc^\cl)^\four}=c^{\Imc^\cl}$ for every $c\in\IN$, and
\begin{itemize}
\item $(\pi(t_1),\pi(t_2))\in\Rmsf^{(\Imc^\cl)^\four}=\Rmsf^{\Imc^\cl}$ for each $\Rmsf(t_1,t_2)\in\At(\qmbf)$;
\item $\pi(t)\in\Amsf^{(\Imc^\cl)^\four_\pmsf}=(\Amsf^+)^{\Imc^\cl}$ for each $\Amsf(t)\in\At^+(\qmbf)$; and
\item $\pi(t)\in \Amsf^{(\Imc^\cl)^\four_\nmsf}=(\Amsf^-)^{\Imc^\cl}$ for each $\Amsf(t)\in\At^{\both\false}(\qmbf)$.
\end{itemize}
Hence, by definition of $\qmbf^+$, $\pi$ is a~match for $\qmbf^+$ in $\Imc^\cl$. It follows that $\Kmc^\cl\models\qmbf^+$.

Moreover, by item~\ref{item:valueconditionquery} of Definition~\ref{def:querysanswer}, there exists a~$\four$-model $\Imc'=\langle\Delta^{\Imc'},\cdot^{\Imc'_\pmsf},\cdot^{\Imc'_\nmsf}\rangle$ of $\Kmc$ and a~match $\pi:\Term\mapsto\Delta^{\Imc'}$ such that $\pi(c)=c^{\Imc'}$ for every $c\in\IN$, and
\begin{itemize}
\item $(\pi(t_1),\pi(t_2))\in\Rmsf^{\Imc'}$ for every $\Rmsf(t_1,t_2)\in\At(\qmbf)$;
\item $\pi(t)\in\Amsf^{\Imc'_\pmsf}$ for every $\Amsf(t)\in\At(\qmbf)$;
\item $\pi(t)\in\Amsf^{\Imc'_\pmsf}$ and $\pi(t)\notin\Amsf^{\Imc'_\nmsf}$ for every $\true(\Amsf(t))\in\At(\qmbf)$;
\item $\pi(t)\in\Amsf^{\Imc'_\nmsf}$ and $\pi(t)\notin\Amsf^{\Imc'_\pmsf}$ for every $\false(\Amsf(t))\in\At(\qmbf)$;
\item $\pi(t)\in\Amsf^{\Imc'_\pmsf}\cap\Amsf^{\Imc'_\nmsf}$ for every $\both(\Amsf(t))\in\At(\qmbf)$;
\item $\pi(t)\notin\Amsf^{\Imc'_\pmsf}\cup\Amsf^{\Imc'_\nmsf}$ for every $\neither(\Amsf(t))\in\At(\qmbf)$.
\end{itemize}

Let $\Imc'^\cl$ be the classical counterpart of $\Imc'$ (cf.\ Definition~\ref{def:triangleembedding}). By Proposition~\ref{prop:triangleembedding1}, $\Imc'^\cl\models\Kmc^\cl$. Let $\Jmc$ be the classical interpretation defined by extending $\cdot^{\Imc'^\cl}$ to the individuals from $\IN_\qmbf$ as follows: $c_t^\Jmc=\pi(t)$ for every $c_t\in\IN_\qmbf$. 
We show that $\Jmc\models\Kmc^\cl\cup\Amc_\qmbf$. Since $\Imc'^\cl\models\Kmc^\cl$, we have $\Jmc\models\Kmc^\cl$. Moreover: 
\begin{itemize}
\item for every $\Rmsf(c_t,c_{t'})\in\Amc_\qmbf$ since $\Rmsf(t,t')\in\At(\qmbf^+)$, i.e., $\Rmsf(t,t')\in\At(\qmbf)$ and $\pi$ is a~match for $\qmbf$ in $\Imc'$, $(c_t^\Jmc,c_{t'}^\Jmc)=(\pi(t),\pi(t'))\in\Rmsf^{\Imc'}=\Rmsf^{\Imc'^\cl}=\Rmsf^\Jmc$;
\item for every $\Amsf^+(c_t)\in\Amc_\qmbf$ since $\Amsf^+(t)\in\At(\qmbf^+)$, then either $\Amsf(t)$, $\true(\Amsf(t))$, or $\both(\Amsf(t))$ is in $\qmbf$, and since $\pi$ is a~match for $\qmbf$ in $\Imc'$, $c_t^\Jmc=\pi(t)\in\Amsf^{\Imc'_\pmsf}$, so $c_t^\Jmc\in(\Amsf^+)^{\Imc'^\cl}=(\Amsf^+)^{\Jmc}$ ; and
\item for every $\Amsf^-(c_t)\in\Amc_\qmbf$, since $\Amsf^-(t)\in\At(\qmbf^+)$, then either $\false(\Amsf(t))$ or $\both(\Amsf(t))$ is in $\qmbf$, and since $\pi$ is a~match for $\qmbf$ in $\Imc'$, $c_t^\Jmc=\pi(t)\in\Amsf^{\Imc'_\nmsf}$, so $c_t^\Jmc\in(\Amsf^-)^{\Imc'^\cl}=(\Amsf^-)^{\Jmc}$.
\end{itemize} 
Thus, $\Jmc\models\Amc_\qmbf$, whence $\Jmc\models\Kmc^\cl\cup\Amc_\qmbf$. In addition, for every term $t$ of $\qmbf$, we have that $c_t^\Jmc=\pi(t)$ (by definition of $c_t^\Jmc$ if $c_t\in\IN_\qmbf$ and otherwise, if $c_t=t\in\IN$, $t^\Jmc=t^{\Imc'^\cl}=t^{\Imc'}=\pi(t)$). Hence,
\begin{itemize}
\item for each $\true(\Amsf(t))\in\At(\qmbf)$ since by definition of $\Imc'$, $\pi(t)\notin\Amsf^{\Imc'_\nmsf}$, then $c_t^\Jmc=\pi(t)\notin(\Amsf^-)^{\Imc'^\cl}=(\Amsf^-)^\Jmc$, i.e., $\Jmc\not\models\Amsf^-(c_t)$;
\item for each $\false(\Amsf(t))\in\At(\qmbf)$ since by definition of $\Imc'$, $\pi(t)\notin\Amsf^{\Imc'_\pmsf}$, then $c_t^\Jmc=\pi(t)\notin(\Amsf^+)^{\Imc'^\cl}=(\Amsf^+)^\Jmc$, i.e., $\Jmc\not\models\Amsf^+(c_t)$;
\item for every $\neither(\Amsf(t))\in\At(\qmbf)$, since by definition of $\Imc'$, $\pi(t)\notin\Amsf^{\Imc'_\pmsf}\cup\Amsf^{\Imc'_\nmsf}$, $\Jmc\not\models\Amsf^+(c_t)$ and $\Jmc\not\models\Amsf^-(c_t)$.
\end{itemize}
It follows that $\Jmc\not\models\qmbf_\ctr$, thus $\Kmc^\cl\cup\Amc_\qmbf\not\models\qmbf_\ctr$.

\noindent($\Leftarrow$) For the converse direction, assume that $\Kmc^\cl\models\qmbf^+$ and $\Kmc^\cl\cup\Amc_\qmbf\not\models\qmbf_{\ctr}$. Note that there exist $\four$-models of $\Kmc$ because $\Kmc^\cl$ has classical models (otherwise it would hold that $\Kmc^\cl\cup\Amc_\qmbf\models\qmbf_{\ctr}$) and the $\four$-counterparts of these models as defined in the proof of Proposition~\ref{prop:triangleembedding1} are $\four$-models of $\Kmc$. For every $\Imc\models_\four\Kmc$, by Proposition~\ref{prop:triangleembedding1}, the classical counterpart $\Imc^\cl$ of $\Imc$ is a~model of $\Kmc^\cl$. Thus, since $\Kmc^\cl\models\qmbf^+$, $\Imc^\cl\models\qmbf^+$. It follows that there is a~match $\pi$ for $\qmbf^+$ in $\Imc^\cl$. Hence $\pi:\Term\mapsto \Delta^{\Imc^\cl}=\Delta^\Imc$ is such that $\pi(c)=c^{\Imc^\cl}=c^\Imc$ for all $c\in\IN$, and
\begin{itemize}
\item $(\pi(t_1),\pi(t_2))\in\Rmsf^{\Imc^\cl}=\Rmsf^\Imc$ for each $\Rmsf(t_1,t_2)\in\At(\qmbf)$;
\item $\pi(t)\in(\Amsf^+)^{\Imc^\cl}=\Amsf^{\Imc_\pmsf}$ for every $\Amsf(t)\in\At^+(\qmbf)$; and
\item $\pi(t)\in(\Amsf^-)^{\Imc^\cl}=\Amsf^{\Imc_\nmsf}$ for every $\Amsf(t)\in\At^{\both\false}(\qmbf)$.
\end{itemize}
It follows that $\pi$ is a~match for $\qmbf$ in $\Imc$ as required by item~\ref{item:classicalconditionquery} of Definition~\ref{def:querysanswer}.

Moreover, since $\Kmc^\cl\cup\Amc_\qmbf\not\models\qmbf_{\ctr}$, there is a~(classical) model $\Jmc=\langle\Delta^\Jmc,\cdot^\Jmc\rangle$ of $\Kmc^\cl\cup\Amc_\qmbf$ such that $\Jmc\not\models\qmbf_\ctr$. Let now $\Jmc^\four=\langle\Delta^{\Jmc^\four},\cdot^{\Jmc^\four_\pmsf},\cdot^{\Jmc^\four_\nmsf}\rangle$ be the $\four$-counterpart of~$\Jmc$ as defined in the proof of Proposition~\ref{prop:triangleembedding1}. Since $\Jmc\models\Kmc^\cl$, we have shown in the proof of Proposition~\ref{prop:triangleembedding1} that $\Jmc^\four\models_\four\Kmc$.

Define $\pi\!:\!\Term\cup\IN_\qmbf\!\mapsto\!\Delta^{\Jmc^\four}$ by $\pi(t)\!=\!c_t^\Jmc$ for every term $t$ of $\qmbf$. We show that $\pi$ is a~match for $\qmbf$ in $\Jmc^\four$ as required by item~\ref{item:valueconditionquery} of Definition~\ref{def:querysanswer}.
\begin{itemize}
\item For every $c\in\IN$, $\pi(c)=c^\Jmc=c^{\Jmc^\four}$.
\item For each $\Rmsf(t_1,t_2)\in\At(\qmbf)$, since $\Jmc\models\Amc_\qmbf$ and $\Rmsf(c_{t_1},c_{t_2})\in\Amc_\qmbf$, then $(\pi(t_1),\pi(t_2))=(c_{t_1}^\Jmc,c_{t_2}^\Jmc)\in\Rmsf^\Jmc=\Rmsf^{\Jmc^\four}$ .
\item For each $\Amsf(t)\in\At(\qmbf)$, $\Amsf^+(t)\!\in\!\At(\qmbf^+)$, so $\Amsf^+(c_t)\!\in\!\Amc_\qmbf$ and since $\Jmc\models\Amc_\qmbf$, then $\pi(t)\!=\!c_t^\Jmc\!\in\!(\Amsf^+)^\Jmc=\Amsf^{\Jmc^\four_\pmsf}$.
\item For every $\true(\Amsf(t))\in\At(\qmbf)$, $\Amsf^+(t)\in\At(\qmbf^+)$. So, we obtain as above that $\pi(t)\in\Amsf^{\Jmc^\four_\pmsf}$. Moreover, since $\Jmc\not\models\qmbf_\ctr$, and in particular $\Jmc\not\models\Amsf^-(c_t)$, we have $\pi(t)=c_t^\Jmc\notin(\Amsf^-)^{\Jmc}$, i.e., $\pi(t)\notin\Amsf^{\Jmc^\four_\nmsf}$.
\item For every $\false(\Amsf(t))\in\At(\qmbf)$, $\Amsf^-(t)\in\At(\qmbf^+)$, so $\Amsf^-(c_t)\in\Amc_\qmbf$. Since $\Jmc\models\Amc_\qmbf$, we thus have $\pi(t)=c_t^\Jmc\in(\Amsf^-)^\Jmc=\Amsf^{\Jmc^\four_\nmsf}$. Moreover, since $\Jmc\not\models\qmbf_\ctr$, and in particular $\Jmc\not\models\Amsf^+(c_t)$, we have $\pi(t)=c_t^\Jmc\notin (\Amsf^+)^{\Jmc}$, i.e., $\pi(t)\notin\Amsf^{\Jmc^\four_\pmsf}$.
\item For every $\both(\Amsf(t))\in\At(\qmbf)$, $\{\Amsf^+(t),\Amsf^-(t)\}\subseteq\At(\qmbf^+)$. So, we obtain as before that $\pi(t)\in(\Amsf^+)^\Jmc\cap(\Amsf^-)^\Jmc$, i.e., $\pi(t)\in\Amsf^{\Jmc^\four_\pmsf}\cap\Amsf^{\Jmc^\four_\nmsf}$.
\item For every $\neither(\Amsf(t))\in\At(\qmbf)$, since $\Jmc\not\models\qmbf_\ctr$, and in particular $\Jmc\not\models\Amsf^-(c_t)$ and $\Jmc\not\models\Amsf^+(c_t)$, $\pi(t)=c_t^\Jmc\notin(\Amsf^+)^\Jmc\cup(\Amsf^-)^\Jmc$, i.e., $\pi(t)\notin\Amsf^{\Jmc^\four_\pmsf}\cup\Amsf^{\Jmc^\four_\nmsf}$.
\end{itemize}
Hence, $\Kmc\models_\four\qmbf$.
\end{proof}

\subsection{Missing Lower Bound for Theorem \ref{complexity}}
\begin{proposition}
Boolean CQV entailment over $\mathcal{ALC}^\four_\triangle$ is $\bhtwo$-hard w.r.t.\ data complexity.
\end{proposition}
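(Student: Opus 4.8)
The plan is to reduce the canonical $\bhtwo$-complete problem SAT-UNSAT --- given a pair $(\varphi,\psi)$ of propositional CNF formulas, decide whether $\varphi$ is satisfiable \emph{and} $\psi$ is unsatisfiable --- to BCQV entailment over a \emph{fixed} $\mathcal{ALC}^\four_\triangle$ TBox and a \emph{fixed} BCQV, letting only the ABox depend on the input. The guiding observation is that the two-part criterion of Theorem~\ref{theorem:queryreduction}, namely $\Kmc\models_\four\qmbf$ iff $\Kmc^\cl\models\qmbf^+$ \emph{and} $\Kmc^\cl\cup\Amc_\qmbf\not\models\qmbf_\ctr$, already has the shape of $\bhtwo=\conp\wedge\np$: the first conjunct is a classical CQ entailment, hence $\conp$-hard in data complexity for $\mathcal{ALC}$, while the second is the complement of a classical ground entailment, hence $\np$-hard. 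It therefore suffices to place an UNSAT test for one formula into $\qmbf^+$ and a SAT test for the other into the non-entailment of $\qmbf_\ctr$.

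Concretely, I would take two classical $\mathcal{ALC}$ gadgets on pairwise disjoint signatures, both instances of standard $\conp$-hardness-in-data-complexity constructions: (i) a fixed TBox $\Tmc_\psi^0$ and Boolean CQ $\qmbf_0$ with a polynomial map $\psi\mapsto\Amc_\psi$ such that $\langle\Tmc_\psi^0,\Amc_\psi\rangle\models\qmbf_0$ iff $\psi$ is unsatisfiable (e.g.\ the reduction from $2{+}2$-UNSAT using a single disjunctive axiom $\top\sqsubseteq\mathsf{T}\sqcup\mathsf{F}$ to guess a truth assignment and a tree-shaped $\qmbf_0$ detecting a falsified clause); and (ii) a fixed TBox $\Tmc_\varphi^0$, a concept name $\Hmsf$, an individual $b$, and a polynomial map $\varphi\mapsto\Amc_\varphi$ such that $\langle\Tmc_\varphi^0,\Amc_\varphi\rangle\models\Hmsf(b)$ iff $\varphi$ is unsatisfiable (the $\conp$-hardness of instance checking in $\mathcal{ALC}$). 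Using a fresh concept name $\Gmsf$, I then set $\Tmc=(\Tmc_\psi^0)^\triangle\cup(\Tmc_\varphi^0)^\triangle\cup\{\triangle\Hmsf\sqsubseteq\neg\Gmsf\}$, $\Amc=\Amc_\psi\cup\Amc_\varphi\cup\{\Gmsf(b)\}$, and take the fixed query $\qmbf=\qmbf_0\wedge\true(\Gmsf(b))$. By the computation underlying Proposition~\ref{prop:triangleembedding1}, $\Kmc^\cl$ is --- up to renaming each concept name $\Nmsf$ to $\Nmsf^+$ --- the disjoint union of the two classical gadgets together with the axiom $\Hmsf^+\sqsubseteq\Gmsf^-$ and the assertion $\Gmsf^+(b)$.

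It then remains to evaluate the criterion of Theorem~\ref{theorem:queryreduction} for this $\Kmc$ and $\qmbf$. Here $\At^+(\qmbf)$ consists of the concept atoms of $\qmbf_0$ together with $\Gmsf(b)$, while $\At^{\both\false}(\qmbf)=\At^{\false\neither}(\qmbf)=\emptyset$ and $\At^{\true\neither}(\qmbf)=\{\Gmsf(b)\}$, so $\qmbf^+$ is $\qmbf_0$ (with each concept name replaced by its $+$-copy) conjoined with $\Gmsf^+(b)$, the ground disjunction $\qmbf_\ctr$ is the single atom $\Gmsf^-(b)$, and $\Amc_\qmbf$ adds to $\Kmc^\cl$ only $\Gmsf^+(b)$ together with frozen atoms over the signature of $\qmbf_0$. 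Since the $\psi$-gadget, the $\varphi$-gadget, and $\{\Gmsf,\Gmsf^-\}$ use disjoint signatures and $\qmbf_0$ and $\Gmsf^+(b)$ share no variables, $\Kmc^\cl\models\qmbf^+$ holds iff $\langle\Tmc_\psi^0,\Amc_\psi\rangle\models\qmbf_0$, i.e.\ iff $\psi$ is unsatisfiable; and because $\Gmsf^-$ occurs only on the right-hand side of $\Hmsf^+\sqsubseteq\Gmsf^-$ while $\Gmsf^-(b)$ is in no ABox, $\Kmc^\cl\cup\Amc_\qmbf\models\Gmsf^-(b)$ iff $\Kmc^\cl\models\Hmsf^+(b)$ iff $\varphi$ is unsatisfiable, so the required non-entailment holds iff $\varphi$ is satisfiable. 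Combining the two, $\Kmc\models_\four\qmbf$ iff $\varphi$ is satisfiable and $\psi$ is unsatisfiable; as $\Tmc$ and $\qmbf$ are fixed and $\Amc$ is built in polynomial time, this yields $\bhtwo$-hardness w.r.t.\ data complexity (the matching upper bound being already covered by Theorem~\ref{complexity}).

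The step I expect to require the most care is the construction/verification in the middle: keeping the two gadgets on strictly disjoint signatures so that the conjunction inside $\qmbf^+$ and the lone disjunct of $\qmbf_\ctr$ decompose as claimed, and checking that $\Gmsf^-(b)$ can be derived from $\Kmc^\cl\cup\Amc_\qmbf$ \emph{only} through the auxiliary axiom $\triangle\Hmsf\sqsubseteq\neg\Gmsf$ --- in particular that the extra assertions $\Amc_\qmbf$ places on the $\qmbf_0$-signature (and the fact that the chosen classical gadgets are $\bot$-free, hence stay consistent under these additions) cannot inadvertently trigger it. Everything else is routine bookkeeping on Definitions~\ref{def:querytranslation} and~\ref{def:triangleembedding} together with off-the-shelf classical $\conp$-hardness reductions for $\mathcal{ALC}$.
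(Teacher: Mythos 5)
Your proposal is correct, and the statement it establishes matches the paper's (the two conjuncts of SAT--UNSAT are attached to the two semantic conditions in exactly the same way: the universal, item~\ref{item:classicalconditionquery}-style condition carries the UNSAT half, and the existential condition behind a single $\true$-atom carries the SAT half). The execution, however, is genuinely different. The paper builds one bespoke, intertwined $2{+}2$-SAT gadget (shared $\mathsf{T}/\mathsf{F}/\mathsf{Var}$ machinery for both formulas, with $\mathsf{NotSat}\sqsubseteq\neg\mathsf{Sat}$ linking derivability of ``unsatisfied'' to the negative extension of $\mathsf{Sat}$) and verifies the query semantics \emph{directly} on $\four$-models, never invoking Theorem~\ref{theorem:queryreduction}. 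You instead treat Theorem~\ref{theorem:queryreduction} as the key lemma, observe that its two conjuncts already have the $\conp\wedge\np$ shape of $\bhtwo$, and plug in two black-box classical $\conp$-hardness gadgets on disjoint signatures, bridged by $\triangle\Hmsf\sqsubseteq\neg\Gmsf$ and the atom $\true(\Gmsf(b))$. Your route is more modular and makes the ``$\bhtwo$ = one entailment call plus one non-entailment call'' structure transparent; it also generalises immediately to any sublogic where classical BCQ entailment is $\conp$-hard in data complexity. The paper's route is self-contained and exhibits concretely how the four-valued semantics of $\true$ interacts with models, at the cost of a longer hands-on verification. Two points in your argument are correct but should be written out rather than asserted: (a) the modularity claim that a consistent component over a disjoint signature cannot affect CQ entailment over the other component (justified by taking disjoint unions of models, which is unproblematic in $\mathcal{ALC}$ absent nominals), and (b) the observation that $\Gmsf^-$ occurs only as the right-hand side of $\Hmsf^+\sqsubseteq\Gmsf^-$ and that $\Kmc^\cl\cup\Amc_\qmbf$ remains consistent because both gadgets are $\bot$-free, so $\Gmsf^-(b)$ is derivable only via $\Hmsf^+(b)$. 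You flag both yourself, and neither is an obstacle.
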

\begin{proof}
We use a reduction from the $\bhtwo$-complete problem SAT-UNSAT: given a pair $(\varphi_1,\varphi_2)$ of propositional formulas, decide whether $\varphi_1$ is satisfiable and $\varphi_2$ is unsatisfiable. We assume that $\varphi_1$ and $\varphi_2$ are sets of clauses such that each clause has exactly 2 positive and 2 negative literals, and any of the four positions in a clause can be filled instead by one of the truth constants true and false (this is w.l.o.g.\ since 2+2SAT is $\np$-complete~\cite{Schaerf93}). 
Let $\varphi_1$ and $\varphi_2$ be two set of clauses $\{c^1_1,\dots, c^1_{m_1}\}$ and $\{c^2_1,\dots, c^2_{m_2}\}$ respectively, over variables $x^1_1,\dots,x^1_{n_1}$ and $x^2_1,\dots,x^2_{n_2}$ respectively. 
We define an $\mathcal{ALC}^\four_\triangle$ knowledge base $\Kmc=\langle\Tmc,\Amc\rangle$ and a BCQV $\qmbf$ as follows. 
\begin{align*}
 \Amc=&\{\mathsf{Var}(x^i_j)\mid 1\leq i\leq 2, 1\leq j\leq n_i\}\cup\{\mathsf{F}(f),\mathsf{T}(t)\}\\
 &\{\mathsf{Sat}(\varphi_1)\}\cup\{\mathsf{Cl}(\varphi_i,c^i_k)\mid 1\leq i\leq 2, 1\leq k\leq m_i\}\cup \\
 &\{\mathsf{P}_\ell(c^i_k,x^i_j) \mid \ell\in\{1, 2\},x^i_j\text{ is the }\ell^{th}\text{ pos. lit. of }c^i_k \}\cup\\
 &\{\mathsf{N}_\ell(c^i_k,x^i_j) \mid \ell\in\{1, 2\},\neg x^i_j\text{ is the }\ell^{th}\text{ neg. lit. of }c^i_k \}
 \\
 \Tmc=&\{\mathsf{Var}\sqsubseteq \mathsf{T}\sqcup\mathsf{F}, \mathsf{T}\sqsubseteq \neg\mathsf{F}, \mathsf{F}\sqsubseteq \neg\mathsf{T}\}\cup\\&
 \{ \exists\mathsf{P_1}.\mathsf{F}\sqcap\exists\mathsf{P_2}.\mathsf{F}\sqcap \exists\mathsf{N_1}.\mathsf{T}\sqcap \exists\mathsf{N_2}.\mathsf{T}\sqsubseteq \mathsf{NotSat} \}\cup\\
 &\{\exists\mathsf{Cl}.\mathsf{NotSat}\sqsubseteq\mathsf{NotSat}, \mathsf{NotSat}\sqsubseteq \neg\mathsf{Sat}\}\\
 \qmbf=&\true(\mathsf{Sat}(\varphi_1))\wedge \mathsf{NotSat}(\varphi_2)
\end{align*}
We show that $\Kmc\models_\four \qmbf$ iff $\varphi_1$ is satisfiable and $\varphi_2$ is unsatisfiable. 

\noindent ($\Rightarrow$) Assume that $\Kmc\models_\four \qmbf$. 

Since $\Kmc\models_\four\true(\mathsf{Sat}(\varphi_1))$, by item~\ref{item:valueconditionquery} of Definition~\ref{def:querysanswer}, there exists a $\four$-model $\Imc=\langle\Delta^\Imc,\cdot^{\Imc_\pmsf},\cdot^{\Imc_\nmsf}\rangle$ of $\Kmc$ such that $\varphi_1^\Imc\notin \mathsf{Sat}^{\Imc_\nmsf}$. Since $\Imc\models_\four \mathsf{NotSat}\sqsubseteq \neg\mathsf{Sat}$, it follows that $\varphi_1^\Imc\notin \mathsf{NotSat}^{\Imc_\pmsf}$. 
Hence, for every $1\leq k\leq m_1$, ${c^1_k}^\Imc\notin \mathsf{NotSat}^{\Imc_\pmsf}$. 
It follows that for every $1\leq k\leq m_1$, at least one of the following conditions holds: either (i) $x^1_j$ is a positive literal of $c^1_k$ and ${x^1_j}^\Imc\notin \mathsf{F}^{\Imc_\pmsf}$, so ${x^1_j}^\Imc\in \mathsf{T}^{\Imc_\pmsf}$ (since either ${x^1_j}^\Imc\in \mathsf{Var}^{\Imc_\pmsf}$ or $x^1_j=t$) or (ii) $\neg x^1_j$ is a negative literal of $c^1_k$ and ${x^1_j}^\Imc\notin \mathsf{T}^{\Imc_\pmsf}$, so ${x^1_j}^\Imc\in \mathsf{F}^{\Imc_\pmsf}$ (since either ${x^1_j}^\Imc\in \mathsf{Var}^{\Imc_\pmsf}$ or $x^1_j=f$). 
Let $\nu$ be the valuation of $x^1_1,\dots,x^1_{n_1}$ defined by $\nu(x^1_j)=\text{true}$ iff ${x^1_j}^\Imc\in \mathsf{T}^{\Imc_\pmsf}$. 
For every $1\leq k\leq m_1$, if we are in case (i), there is a positive literal $x^1_j$ of $c^1_k$ such that $\nu(x^1_j)=\text{true}$, and if we are in case (ii), there is a negative literal $\neg x^1_j$ of $c^1_k$ such that $\nu(x^1_j)=\text{false}$. Hence $\nu$ satisfies all clauses of $\varphi_1$ and $\varphi_1$ is satisfiable.

Let $\nu$ be a valuation of $x^2_1,\dots,x^2_{n_2}$ and let $\Imc^\nu$ be the $\four$-interpretation defined as follows: $\Delta^{\Imc^\nu}$ is the set of individuals that occur in $\Amc$, $\Amsf^{\Imc^\nu_\nmsf}=\Delta^{\Imc^\nu}$ for every $\Amsf\in\CN$, $\mathsf{Var}^{\Imc^\nu_\pmsf}$, $\mathsf{P}_\ell^{\Imc^\nu}$, $\mathsf{N}_\ell^{\Imc^\nu}$, $\mathsf{Cl}^{\Imc^\nu}$ and $\mathsf{Sat}^{\Imc^\nu_\pmsf}$ correspond exactly to the assertions in $\Amc$, 
$\mathsf{T}^{\Imc^\nu_\pmsf}=\{t\}\cup\{x^2_j\mid \nu(x^2_j)=\text{true}\}\cup\{x^1_1,\dots,x^1_{n_1}\}$, $\mathsf{F}^{\Imc^\nu_\pmsf}=\{f\}\cup\{x^2_j\mid \nu(x^2_j)=\text{false}\}\cup\{x^1_1,\dots,x^1_{n_1}\}$, and $\mathsf{NotSat}^{\Imc^\nu_\pmsf}=\{c^2_k\mid \nu(c^2_k)=\text{false}\}\cup\{\varphi_2\mid \text{if }\nu(\varphi_2)=\text{false}\}\cup\{\varphi_1, c^1_1,\dots, c^1_{m_1}\}$. 
It is easy to check that $\Imc^\nu\models_\four\Kmc$. 
Hence, since $\Kmc\models_\four \mathsf{NotSat}(\varphi_2)$, by item~\ref{item:classicalconditionquery} of Definition~\ref{def:querysanswer}, it must be the case that $\varphi_2^{\Imc^\nu}\in \mathsf{NotSat}^{\Imc^\nu_\pmsf}$, i.e., $\nu$ falsifies $\varphi_2$. Since this is true for every valuation $\nu$, $\varphi_2$ is unsatisfiable. 

\noindent ($\Leftarrow$) Assume that $\varphi_1$ is satisfiable and $\varphi_2$ is unsatisfiable.

Since $\varphi_1$ is satisfiable, there exists a valuation $\nu$ of $x^1_1,\dots,x^1_{n_1}$ that satisfies it. 
Let $\Imc^\nu$ be the $\four$-interpretation defined as follows: $\Delta^{\Imc^\nu}$ is the set of individuals that occur in $\Amc$, $\Amsf^{\Imc^\nu_\nmsf}=\Delta^{\Imc^\nu}$ for every $\Amsf\in\CN$ except $\mathsf{Sat}$, for which we define $\mathsf{Sat}^{\Imc^\nu_\nmsf}=\{\varphi_2, c^2_1,\dots, c^2_{m_2}\}$, $\mathsf{Var}^{\Imc^\nu_\pmsf}$, $\mathsf{P}_\ell^{\Imc^\nu}$, $\mathsf{N}_\ell^{\Imc^\nu}$, $\mathsf{Cl}^{\Imc^\nu}$ and $\mathsf{Sat}^{\Imc^\nu_\pmsf}$ correspond exactly to the assertions in $\Amc$, 
$\mathsf{T}^{\Imc^\nu_\pmsf}=\{t\}\cup\{x^1_j\mid \nu(x^1_j)=\text{true}\}\cup\{x^2_1,\dots,x^2_{n_2}\}$, $\mathsf{F}^{\Imc^\nu_\pmsf}=\{f\}\cup\{x^1_j\mid \nu(x^1_j)=\text{false}\}\cup\{x^2_1,\dots,x^2_{n_1}\}$, and $\mathsf{NotSat}^{\Imc^\nu_\pmsf}=\{\varphi_2, c^2_1,\dots, c^2_{m_2}\}$. 
It is easy to check that $\Imc^\nu\models_\four\Kmc$. In particular, since $\nu$ satisfies every clause of $\varphi_1$, there is no $c^1_k$ such that ${c^1_k}^{\Imc^\nu_\pmsf}\in (\exists\mathsf{P_1}.\mathsf{F}\sqcap\exists\mathsf{P_2}.\mathsf{F}\sqcap \exists\mathsf{N_1}.\mathsf{T}\sqcap \exists\mathsf{N_2}.\mathsf{T})^{\Imc^\nu_\pmsf}$. It is thus a $\four$-model of $\Kmc$ such that $\varphi_1^{\Imc^\nu}\notin \mathsf{Sat}^{\Imc^\nu_\nmsf}$. 
Since every $\four$-model $\Imc$ of $\Kmc$ is such that $\varphi_1^{\Imc}\in \mathsf{Sat}^{\Imc_\pmsf}$ because $\mathsf{Sat}(\varphi_1)\in\Amc$, it follows by Definition~\ref{def:querysanswer} that $\Kmc\models_\four \true(\mathsf{Sat}(\varphi_1))$. 

Let $\Imc=\langle\Delta^\Imc,\cdot^{\Imc_\pmsf},\cdot^{\Imc_\nmsf}\rangle$ be a $\four$-model of $\Kmc$ and let $\nu$ be the valuation of $x^2_1,\dots,x^2_{n_2}$ defined by $\nu(x^2_j)=\text{true}$ iff ${x^2_j}^\Imc\in \mathsf{T}^{\Imc_\pmsf}$. Since $\varphi_2$ is unsatisfiable, there exists $c^2_k$ such that $\nu(c^2_k)=\text{false}$. It follows that for every positive literal $x^2_j$ of $c^2_k$, $\nu(x^2_j)=\text{false}$ i.e., ${x^2_j}^\Imc\notin \mathsf{T}^{\Imc_\pmsf}$, which implies ${x^2_j}^\Imc\in \mathsf{F}^{\Imc_\pmsf}$, and for every negative literal $\neg x^2_j$ of $c^2_k$, $\nu(x^2_j)=\text{true}$, i.e., ${x^2_j}^\Imc\in \mathsf{T}^{\Imc_\pmsf}$. Hence, ${c^2_k}^\Imc\in \mathsf{NotSat}^{\Imc_\pmsf}$, so that $\varphi_2^\Imc\in \mathsf{NotSat}^{\Imc_\pmsf}$. It follows by Definition~\ref{def:querysanswer} that $\Kmc\models_\four \mathsf{NotSat}(\varphi_2)$. 

Hence $\Kmc\models_\four\qmbf$.
\end{proof}
\end{document}